\let\ACMmaketitle=\maketitle
\renewcommand{\maketitle}{\begingroup\let\footnote=\thanks \ACMmaketitle\endgroup}
\newcommand{\overbar}[1]{\mkern 1.5mu\overline{\mkern-1.5mu#1\mkern-1.5mu}\mkern 1.5mu}
\newcommand{\M}[1]{\mathcal{M}_{#1}}
\newcommand{\St}[1]{\mathcal{S}_{#1}}
\newcommand{\C}[1]{\mathbb{C}^{#1}}
\newcommand{\env}{\operatorname{env}}
\newcommand{\out}{\operatorname{out}}
\newtheorem{theorem}{Theorem}[section]
\newtheorem*{definition*}{Definition}
\newtheorem{proposition}[theorem]{Proposition}
\newtheorem{corollary}[theorem]{Corollary}
\newtheorem{lemma}[theorem]{Lemma}
\newtheorem{remark}[theorem]{Remark}
\newtheorem*{conjecture*}{Conjecture}
\theoremstyle{definition}
\newcommand{\cS}{\mathcal{S}}
\newcommand{\iden}{\mathbb{1}}
\begin{document}
    
\title{Detecting positive quantum capacities of quantum channels}
\author{Satvik Singh}
\email{satviksingh2@gmail.com}
\address{\parbox{\linewidth}{Department of Applied Mathematics and Theoretical Physics, \\ University of Cambridge, Cambridge, United Kingdom }}

\author{Nilanjana Datta}
\email{n.datta@damtp.cam.ac.uk}
\address{\parbox{\linewidth}{Department of Applied Mathematics and Theoretical Physics, \\ University of Cambridge, Cambridge, United Kingdom }}

\begin{abstract}

Determining whether a noisy quantum channel can be used to reliably transmit quantum information at a non-zero rate is a challenging problem in quantum information theory. This is because it requires computation of the channel's coherent information for an unbounded number of copies of the channel. In this paper, we devise an elementary perturbative technique to solve this problem in a wide variety of circumstances. Our analysis reveals that a channel's ability to transmit information is intimately connected to the relative sizes of its input, output, and environment spaces. We exploit this link to develop easy tests which can be used to detect positivity of quantum channel capacities simply by comparing the channels' input, output, and environment dimensions. Several noteworthy examples, such as the depolarizing and transpose-depolarizing channels (including the Werner-Holevo channel), dephasing channels, generalized Pauli channels, multi-level amplitude damping channels, and (conjugate) diagonal unitary covariant channels, serve to aptly exhibit the utility of our method. Notably, in all these examples, the coherent information of a single copy of the channel turns out to be positive.

\end{abstract}

\maketitle

{\small \begin{center}
     \emph{Dedicated to the thousands of people who have lost their lives to the pandemic -- especially in India during the last few weeks -- including Raj Datta, a beloved brother who lost his battle on 11.05.21.}
\end{center}}

\tableofcontents

\section{Introduction}\label{sec:intro}
The capacity of a noisy communication channel quantifies the fundamental physical limit on noiseless communication through it. Shannon proved that every noisy {\em{classical channel}} has a unique capacity which is given in terms of the mutual information between the random variables characterizing the channel's input and output \cite{Shannon1948communication}.  In contrast, a {\em{quantum channel}} has many different kinds of capacities. These depend on a variety of factors, for example, on the type of information (classical or quantum) which is being transmitted, whether or not this information is private, the nature of the
input states (entangled or not), and whether any auxiliary resource is available to assist the transmission. Auxiliary
resources like prior shared entanglement between the sender and the receiver can enhance the capacities of a quantum
channel. This is in contrast to the case of a classical channel where auxiliary resources, such as shared randomness
between the sender and the receiver, fail to enhance the capacity.

The {\em{quantum capacity}} of a quantum channel quantifies the maximum rate (measured in bits per use of the channel) of noiseless and coherent quantum communication through it, in the limit of asymptotically many uses of the channel and in the absence of any auxiliary resource. By the seminal works of Lloyd~\cite{Lloyd1997capacity}, Shor~\cite{Shor2007capacity}, and Devetak~\cite{Devetak2005capacity}, we know that the quantum capacity of a channel $\Phi$ is given by the following {\em{regularized}} entropic expression:
\begin{align}\label{eq:Qcapacity}
    \mathcal{Q}(\Phi) = \lim_{n \to \infty} \frac{\mathcal{Q}^{(1)}(\Phi^{\otimes n})}{n},
    \end{align}
    where
    \begin{align}
    \mathcal{Q}^{(1)}(\Phi) := \max_\rho I_c(\rho;\Phi) \quad\text{and}\quad I_c(\rho;\Phi) :=  S[\Phi(\rho)]- S[\Phi_c(\rho)].
\end{align}
Here, $\Phi_c$ denotes a channel which is complementary to $\Phi$ (see Section~\ref{subsec:def} for the precise definition), and for any quantum state $\rho$, $S(\rho):= -\Tr (\rho \log \rho)$ denotes its von Neumann entropy. A channel's complement models the loss of information by the channel to the environment and hence crucially affects the channel's ability to transmit information. The quantity $ \mathcal{Q}^{(1)}(\Phi)$ is called the {\em{coherent information}} of $\Phi$ and provides a lower bound to the quantum capacity: $\mathcal{Q}(\Phi)\geq\mathcal{Q}^{(1)}(\Phi)$. This is because $\mathcal{Q}^{(1)}(\Phi^{\otimes n})\geq n\mathcal{Q}^{(1)}(\Phi)$ for all $n$. Computing $\mathcal{Q}^{(1)}(\Phi)$ is a challenge in itself, since it involves solving a non-concave optimization problem. On top of it, the coherent information is usually superadditive, i.e. $\mathcal{Q}^{(1)}(\Phi^{\otimes n}) > n\mathcal{Q}^{(1)}(\Phi)$. This means that the regularization in Eq.~\eqref{eq:Qcapacity} is necessary in order to compute the full capacity $\mathcal{Q}(\Phi)$, thus making it notoriously difficult to do so. 

The same difficulties carry over to the task of checking if a given channel has non-zero quantum capacity. In particular, it is known that for any finite $n$, there exist channels $\Phi$ for which $\mathcal{Q}^{(1)}(\Phi^{\otimes n})=0$ yet $Q(\Phi)>0$ \cite{Cubitt2015unbounded}. Furthermore, it has been shown that there exist pairs of quantum channels (say $\Phi_1$ and $\Phi_2$), each of which has zero quantum capacity, but which can be used in tandem to transmit quantum information, i.e.~$\mathcal{Q}(\Phi_1 \otimes \Phi_2)>0$. This startling effect (known as superactivation \cite{Yard2008super}) is a purely quantum phenomenon because classically, if two channels have zero capacity, the capacity of the joint channel must also be zero. Such extreme examples of superadditivity make it extremely non-trivial to detect if a channel has non-zero quantum capacity.

Only two kinds of channels are currently known to have zero quantum capacity, namely PPT and anti-degradable channels \cite{Smith2012incapacity}. Checking if a given channel is PPT is equivalent to checking if the channel's Choi matrix and its partial transpose are positive semidefinite. The task of determining if a given channel is anti-degradable can be modelled as a semidefinite program \cite{sutter2017approx}. However, the question of whether there exist channels that are neither PPT nor anti-degradable but still have zero capacity is wide open. To date, there is no general procedure or algorithm known which can detect if a given channel has positive quantum capacity, except in special circumstances where some numerical techniques can be employed to compute or provide lower bounds on the coherent information \cite{jesse2008unreliable, jesse2008unreliable2}. However, when one wants to detect arbitrarily small positive values of quantum capacities, numerical methods can not only become unreliable, but also computationally expensive in higher dimensions.

In light of the above discussion, it is clear that our understanding of the set of zero capacity quantum channels is still in its nascent phase.
The main result of this paper sheds light on the structure of this set by providing a powerful sufficient condition to guarantee positivity of quantum capacities for a wide variety of quantum channels and their complements; see Theorem~\ref{theorem:main} in the main text, which is also stated below for convenience. Recently, Siddhu has also made interesting progress on the stated problem of quantum capacity detection
\cite{siddhu2020logsingularities}. The ideas employed in \cite{siddhu2020logsingularities} are similar in essence to the key ingredient that we use to obtain our main result. However, our approach is mathematically more rigorous and our results have much wider applicability. Explicit comparisons between our results and those presented in \cite{siddhu2020logsingularities} are made at relevant places in the sequel. \\

\begin{theorem}
Let $\Phi$ and $\Phi_c$ be complementary quantum channels. For a pure state $\ketbra{\psi}$, denote the orthogonal projections onto $\ker \Phi(\ketbra{\psi}{\psi})$ and $\ker \Phi_c(\ketbra{\psi}{\psi})$ by $K_\psi$ and $K^c_\psi$, respectively. Then, 
\begin{itemize}
    \item $\mathcal{Q}(\Phi)\geq\mathcal{Q}^{(1)}(\Phi)>0$ if there exist states $\ketbra{\psi}{\psi}$ and $\sigma$ such that $\operatorname{Tr}(K_\psi \Phi(\sigma))>\operatorname{Tr}(K^c_\psi \Phi_c(\sigma))$.
    \item $\mathcal{Q}(\Phi_c)\geq\mathcal{Q}^{(1)}(\Phi_c)>0$ if there exist states $\ketbra{\psi}{\psi}$ and $\sigma$ such that $\operatorname{Tr}(K_\psi \Phi(\sigma))<\operatorname{Tr}(K^c_\psi \Phi_c(\sigma))$.
\end{itemize}
\end{theorem}

The primary tool employed in the proof of Theorem~\ref{theorem:main} is a seminal result (by Rellich) from the theory of analytic perturbations of Hermitian matrices. Put simply, if a Hermitian matrix $A_0$ with an eigenvalue $\lambda$ of multiplicity $n$ is subjected to a linear (Hermitian) perturbation $A(\epsilon)=A_0 + \epsilon A_1$ in a real parameter $\epsilon$, then the perturbed matrix has exactly $n$ eigenvalues converging to $\lambda$ as $\epsilon\to 0$, all of which, rather remarkably, admit convergent power series expansions in a neighborhood of $\epsilon=0$; see Theorem~\ref{theorem:perturbation}. Now, for a given channel $\Phi$, it is trivial to see that $I_c(\rho;\Phi)=0$ for all pure input states $\rho=\ketbra{\psi}$. It turns out that if we choose $\rho(\epsilon)= (1-\epsilon)\ketbra{\psi}+\epsilon \sigma$ as a slight perturbation of an arbitrarily chosen pure state $\ketbra{\psi}$ with an arbitrary mixed state $\sigma$, then the first order correction terms in the eigenvalue expansions of the perturbed outputs $\Phi[\rho(\epsilon)]$ and $\Phi_c[\rho(\epsilon)]$ can be analyzed to yield a simple condition which guarantees that $I_c(\rho(\epsilon);\Phi)>0$ for sufficiently small values of $\epsilon$, so that $\mathcal{Q}^{(1)}(\Phi)>0 \implies \mathcal{Q}(\Phi)>0$.

An equivalent formulation of the above theorem yields a first-of-its-kind necessary condition for membership in the set of zero capacity quantum channels; see Theorem~\ref{theorem:zero-necessary}. As a sanity check, we prove that some of the few known classes of channels with zero capacity satisfy this condition; see Lemma~\ref{lemma:anti-degaradable}. Notably, the main result in \cite{siddhu2020logsingularities} arises as an immediate consequence of Theorem~\ref{theorem:main}, which can be applied to detect positive quantum capacity of a given channel (or its complement) if the channel's output and environment dimension are unequal and if there exists a pure state whose image under the channel has the maximum possible rank; see Corollary~\ref{corollary:vikesh}. Furthermore, we prove that the aforementioned pure state exists for all channels with sufficiently large input dimensions, so that any such channel (or its complement) with unequal output and environment dimension must have positive quantum capacity, irrespective of the specific form of the channel; see Corollary~\ref{corollary-biginput}. Similar results were obtained in \cite{siddhu2020logsingularities} for channels with large output and environment dimensions, which are derived again in Corollary~\ref{corollary:vikesh2} with simpler proofs. These results constitute a family of simple dimensional tests which can be used to detect positive quantum channel capacities simply by comparing the dimensions of the channels' input, output, and environment spaces. Clearly, our results exhibit an intimate connection between the ability of a channel to transmit information and the relative sizes of the channel’s input, output, and environment spaces

Our methods can be easily applied to several concrete examples of important quantum channels and their complements as well. The following is a list of main results in this direction. We should mention that in all these examples, the coherent information of a single copy of the relevant channel turns out to be positive.

\begin{itemize}
    \item In \cite{Leung2017complementary}, the {qubit depolarizing} channel $\mathcal{D}_p:\M{2}\to \M{2}$ was shown to have positive {complementary quantum capacity} (the complementary quantum capacity of a channel refers to the quantum capacity of anyone of its complements) for all non-zero values of the noise parameter $p>0$. We extend this result to show that the {qudit depolarizing} and {transpose-depolarizing} channels ($\mathcal{D}_p:\M{d}\to \M{d}$ and $\mathcal{D}^\top_q:\M{d}\to \M{d}$, respectively) have positive complementary quantum capacities for all $p>0$ and $q<\frac{d}{d-1}$; see Theorem~\ref{theorem:depol-transp}. Moreover, the {Werner-Holevo} channel $\Phi_{\rm WH}\equiv \mathcal{D}^\top_{\frac{d}{d-1}}:\M{d}\to \M{d}$ is shown to have positive complementary quantum capacity for all $d\geq 4$; see Theorem~\ref{theorem:werner-holevo}.
    
    \item In \cite{Leung2017complementary}, the {qubit Pauli} channel $\Phi_P:\M{2}\to \M{2}$ was also shown to have positive complementary quantum capacity whenever the defining probability matrix $P\in\M{2}$ has at least three non-zero entries. An extension of this result for generalized {qudit Pauli} channels $\Phi_P:\M{d}\to \M{d}$ with a much simpler proof is presented in Theorem~\ref{theorem:pauli}.
    
    \item 
    For a {multi-level amplitude damping} channel $\Phi_{\Vec{\gamma}}:\M{d}\to \M{d}$, we establish simple constraints on the decay rate vector $\Vec{\gamma}\in\mathbb{R}^{d(d-1)/2}$ which ensure the positivity of its quantum capacity and its complementary quantum capacity; see Theorem~\ref{theorem:MAD}.
    
    \item We show that a generalized {dephasing} or {Hadamard} channel $\Phi_B:\M{d}\to \M{d}$ (parametrized by a {correlation} matrix $B\in\M{d}$) has zero quantum capacity if and only if it is {entanglement-breaking}; see Theorem~\ref{theorem:dephasing}.
    
    \item Recently, the family of {(conjugate) diagonal unitary covariant} (dubbed (C)DUC) quantum channels was introduced in~\cite{singh2020diagonal}. A rich variety of channels, such as the depolarizing and transpose depolarizing channels, amplitude damping channels, dephasing channels etc.~are known to belong in this family. In Proposition~\ref{prop:CDUC-rank}, we completely characterize the class of (C)DUC channels for which there exists a pure input state which gets mapped to a maximal rank output state, so that Corollary~\ref{corollary:vikesh} can be applied to infer positivity of the quantum capacities of these channels and their complements; see Theorem~\ref{theorem:CDUC-cap}.  
\end{itemize}

Moreover, Theorem~\ref{theorem:main} has various interesting ramifications, which are studied in Section~\ref{sec:morecapable}. For instance, it leads to simplified proofs of certain existing structure theorems for the class of {degradable} quantum channels, and an extension of their applicability to the larger class of {more capable} quantum channels.

Before wrapping up the introduction, we must emphasize that this paper is {not} targeted towards the problem of computing lower bounds on quantum channel capacities. While this is certainly an interesting question to think about, here we are interested in something even more fundamental: When is the quantum capacity of any given channel non-zero? Tackling this basic yes/no question is the primary aim of our work. We do so by devising a simple perturbative technique to detect positivity (no matter how small it might be) of quantum channel capacities.

\section{Prerequisites} \label{subsec:def}

In this section, we briefly review the basics of quantum channels and their quantum capacities. A more thorough discussion on these topics can be found in \cite{watrous2018theory}.


We denote the set of all $d_1\times d_2$ complex matrices by $\M{d_1\times d_2}$. When $d_1=d_2=d$, we denote the corresponding matrix space by $\M{d}:=\M{d\times d}$. For $A,B\in \M{d}$, $A\leq B$ means that $B-A$ is positive semi-definite. The convex set of {quantum states} (positive semi-definite matrices with unit trace) in $\M{d}$ and $\M{d}^{\otimes n}$ (for $n\in\mathbb{N}$) will be denoted by $\St{d}$ and $\St{d^{\otimes n}}$, respectively. 
The {vectorization} map $\operatorname{vec}: \M{d_1\times d_2}\to \C{d_1}\otimes \C{d_2}$ is a linear bijection defined as
\begin{equation}\label{eq:vec}
    \forall X\in\M{d_1\times d_2}: \quad \operatorname{vec}X = \sum_{i,j=0}^{d-1} X_{ij}\ket{ij},
\end{equation}
where $\{\ket{i}\}_{i\in [d]}$ denotes the standard orthonormal basis of $\C{d}$ and $[d]:=\{0,1,\ldots ,d-1\}$. For any given $\ket{\psi}\in\C{d_1}\otimes \C{d_2}$, one can use the singular value decomposition of $\operatorname{vec}^{-1}\ket{\psi}\in\M{d_1\times d_2}$ to obtain the {Schmidt decomposition} of $\ket{\psi}=\sum_{i=0}^{n-1} \sqrt{s_i}\ket{i}_1\ket{i}_2$, where $\{\ket{i}_1\}_{i\in [n]}\subseteq \C{d_1}$ and $\{\ket{i}_2\}_{i\in [n]}\subseteq \C{d_2}$ are orthonormal sets, $s_i\geq 0$ and $n=\min\{d_1,d_2\}$. The number of non-zero $s_i$ is called the {Schmidt rank} of $\ket{\psi}$, and is equal to $\operatorname{rank}(\operatorname{vec}^{-1}\ket{\psi}) \leq \min\{d_1,d_2\}$.

A {quantum channel} is a completely positive and trace preserving linear map $\Phi : \M{d_1}\rightarrow \M{d_2}$. $\Phi$ is said to be {unital} if $\Phi({\iden}_{d_1})= {\iden}_{d_2}$, where ${\iden}_d\in\M{d}$ is the identity matrix. Its {adjoint} $\Phi^*:\M{d_2}\rightarrow\M{d_1}$ is a unital completely positive linear map defined uniquely by the following relation: 
\begin{equation}
\forall X\in\M{d_1}, \forall Y\in\M{d_2}: \quad \operatorname{Tr}[\Phi(X)Y]=\operatorname{Tr}[X\Phi^*(Y)]. 
\end{equation}
Two quantum channels $\Phi:\M{d}\rightarrow \M{d_{\out}}$ and $\Phi_c:\M{d}\rightarrow \M{d_{\operatorname{env}}}$ are {complementary} to each other if there exists an isometry (often called a {Stinespring} isometry) $V:\C{d}\rightarrow \C{d_{\out}}\otimes \C{d_{\operatorname{env}}}$ such that 
\begin{equation}\label{eq:complementary}
\forall X\in \M{d}: \quad\Phi(X)=\operatorname{Tr}_{\operatorname{env}}(VXV^\dagger) \quad\text{and}\quad
\Phi_c (X) = \operatorname{Tr}_{\out}(VXV^\dagger).
\end{equation}  
In the above scenario, all input states are first isometrically embedded into the combined output-environment space, with the action of $\Phi$ and $\Phi_c$ then retrieved by partially tracing out the environment and the output space, respectively \cite{King2007complement,Holevo2007complementary}. For a given channel $\Phi: \M{d}\rightarrow \M{d_{\out}}$, the collection of all channels that are complementary to $\Phi$ is denoted by $\mathcal{C}_{\Phi}$. 

\begin{remark}\label{remark:iso-exten}
To every channel $\Phi:\M{d}\to\M{d_{\out}}$, we can uniquely associate two sets of channels $\mathcal{C}_\Phi$ and $\mathcal{C}_{\Phi_c}$, where $\Phi_c\in\mathcal{C}_\Phi$ is complementary to $\Phi$. These sets are non-empty, since the Stinespring dilation theorem guarantees that $\Phi$ can be expressed as in Eq.~\eqref{eq:complementary} for some isometry $V:\C{d}\rightarrow \C{d_{\out}}\otimes \C{d_{\operatorname{env}}}$ \cite[Corollary 2.27]{watrous2018theory}. Furthermore, different choices of $\Phi_c\in\mathcal{C}_\Phi$ yield the same set $\mathcal{C}_{\Phi_c}$. Finally, it is straightforward to show that 
\begin{itemize}
    \item any two channels $\Phi_1,\Phi_2\in \mathcal{C}_\Phi$ (or $\mathcal{C}_{\Phi_c}$) with $\Phi_1:\M{d}\to \M{d_1}$, $\Phi_2:\M{d}\to \M{d_2}$, and $d_1\leq d_2$ are {isometrically related}, i.e. $\exists$ an isometry $W:\C{d_1}\to \C{d_2}$ such that $\Phi_2 (X) = W\Phi_1 (X) W^\dagger$.
    \item any two channels $\Phi_1\in\mathcal{C}_\Phi$ and $\Phi_2\in\mathcal{C}_{\Phi_c}$ are complementary to each other.
\end{itemize}
\end{remark}  

We define the {minimal environment} and {output dimension} of $\Phi$ as
\begin{align}
    d^{*}_{\operatorname{env}}(\Phi) :=  \min \{d_{\operatorname{env}}: \exists\Phi_c\in \mathcal{C}_{\Phi}, \Phi_c:\M{d}\rightarrow \M{d_{\env}} \} \quad\text{and}\quad d^{*}_{\out} (\Phi) := d^{*}_{\env}(\Phi_c),
\end{align}
respectively, where $\Phi_c\in\mathcal{C}_\Phi$. It is clear from Remark~\ref{remark:iso-exten} that the above definition is independent of the choice of $\Phi_c$, see also Lemma~\ref{lemma:minimal}. We say that two complementary channels $\Phi:\M{d}\to\M{d_{\out}}$ and $\Phi_c:\M{d}\to\M{d_{\env}}$ are {minimally defined} if $d^*_{\env}(\Phi)=d_{\env}$ and $d^*_{\out}(\Phi)=d_{\out}$. If two complementary channels $\Phi$ and $\Phi_c$ are minimally defined, then they are isometrically related to every channel in the sets $\mathcal{C}_{\Phi_c}$ and $\mathcal{C}_{\Phi}$, respectively, in the sense of Remark~\ref{remark:iso-exten}. 

\begin{remark}
Intuitively, one can think about the minimal output dimension of $\Phi:\M{d}\to \M{d_{\out}}$ as the minimal size of the output space that can accommodate all the channel outputs. More precisely, we show in Lemma~\ref{lemma:minimal} that 
\begin{equation}
    \forall \rho\in\St{d}: \quad \operatorname{range}\Phi(\rho) \subseteq \operatorname{range}\Phi(\iden_d).
\end{equation}
Thus, even though $\Phi$ is originally defined with an output space of dimension $d_{\out}$, a smaller size $d^*_{\out}(\Phi)=\dim \operatorname{range}\Phi(\iden_d)$ actually suffices to fully accommodate the output from $\Phi$. 
\end{remark}

The {\em{Choi matrix}} of a quantum channel $\Phi:\M{d}\rightarrow \M{d_{\out}}$ is defined as~\cite{Choi1975iso, Jamiokowski1972iso}
\begin{align}\label{eq:choi}
    J(\Phi) &:= ( \Phi \otimes {\rm{id}})\ketbra{\Omega}, \quad\text{where}\quad \ket{\Omega} := \sum_{i=0}^{d-1} \ket{i}\ket{i} \, \in {\mathbb{C}}^d \otimes {\mathbb{C}}^d
\end{align}
denotes a maximally entangled state and ${\rm{id}}:\M{d}\rightarrow \M{d}$ is the identity map. The rank of the Choi matrix of a channel is called the {Choi rank} of the channel. The minimal environment dimension of a channel is related to its Choi rank via the following crucial lemma. 
\begin{lemma}\label{lemma:minimal}
For a channel $\Phi:\M{d}\rightarrow \M{d_{\out}}$ and some complementary channel $\Phi_c\in \mathcal{C}_\Phi$,
\begin{alignat}{2}
    d^{*}_{\operatorname{env}}(\Phi)&=\operatorname{rank}J(\Phi)&&=\operatorname{rank}\Phi_c(\iden_d) \\
    d^{*}_{\out} (\Phi)&=\operatorname{rank}J(\Phi_c)&&=\operatorname{rank}\Phi(\iden_d).
\end{alignat}
\end{lemma}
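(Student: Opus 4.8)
The statement to prove is Lemma~\ref{lemma:minimal}, which identifies the minimal environment dimension $d^*_{\env}(\Phi)$ with both $\operatorname{rank} J(\Phi)$ and $\operatorname{rank}\Phi_c(\iden_d)$ (and dually for $d^*_{\out}$). The plan is to first establish the two equalities $\operatorname{rank} J(\Phi) = \operatorname{rank}\Phi_c(\iden_d)$ and $d^*_{\env}(\Phi) = \operatorname{rank} J(\Phi)$ for a \emph{fixed} Stinespring dilation, and then argue that the minimal choice over all complementary channels is achieved exactly at the Choi rank. The dual statements for $d^*_{\out}$ then follow immediately by swapping the roles of $\Phi$ and $\Phi_c$, since $d^*_{\out}(\Phi) := d^*_{\env}(\Phi_c)$ by definition and $\Phi$ is itself complementary to $\Phi_c$ (Remark~\ref{remark:iso-exten}).

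First I would fix a Stinespring isometry $V:\C d \to \C{d_{\out}}\otimes\C{d_{\env}}$ realizing $\Phi$ and $\Phi_c$ as in Eq.~\eqref{eq:complementary}, with $d_{\env}$ taken minimal, and write the Kraus decomposition $\Phi(X) = \sum_{k=1}^{r} A_k X A_k^\dagger$ obtained from $V$, where $r$ is the number of Kraus operators; concretely $V\ket\varphi = \sum_k (A_k\ket\varphi)\otimes\ket k$ with $\{\ket k\}$ an orthonormal basis of $\C{d_{\env}}$. A standard computation gives $J(\Phi) = \sum_{k} (\operatorname{vec} A_k)(\operatorname{vec} A_k)^\dagger$, and the complementary channel satisfies $\Phi_c(X)_{k\ell} = \operatorname{Tr}(A_\ell^\dagger A_k X)$; in particular $\Phi_c(\iden_d)_{k\ell} = \operatorname{Tr}(A_\ell^\dagger A_k) = \langle \operatorname{vec} A_k, \operatorname{vec} A_\ell\rangle$ is the Gram matrix of the vectors $\{\operatorname{vec} A_k\}$. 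Hence $\operatorname{rank} J(\Phi) = \operatorname{rank}\Phi_c(\iden_d) = \dim\operatorname{span}\{\operatorname{vec} A_k\}$, which is the number of \emph{linearly independent} Kraus operators. Next I would show this common rank equals $d^*_{\env}(\Phi)$: on one hand, any Stinespring dilation with environment dimension $d_{\env}$ furnishes $d_{\env}$ Kraus operators whose span has dimension at most $d_{\env}$, so $\operatorname{rank} J(\Phi) \le d_{\env}$ for every complementary channel, giving $\operatorname{rank} J(\Phi)\le d^*_{\env}(\Phi)$; on the other hand, the spectral decomposition of $J(\Phi)$ into $\operatorname{rank} J(\Phi)$ rank-one terms produces a Kraus representation with exactly $\operatorname{rank} J(\Phi)$ operators, hence a Stinespring dilation with $d_{\env} = \operatorname{rank} J(\Phi)$, so $d^*_{\env}(\Phi) \le \operatorname{rank} J(\Phi)$. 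Combining the inequalities yields $d^*_{\env}(\Phi) = \operatorname{rank} J(\Phi) = \operatorname{rank}\Phi_c(\iden_d)$.

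For the claim that the value is independent of the choice of $\Phi_c \in \mathcal{C}_\Phi$ (as flagged in the text preceding the lemma), I would invoke the first bullet of Remark~\ref{remark:iso-exten}: any two complementary channels $\Phi_1,\Phi_2 \in \mathcal{C}_\Phi$ are isometrically related, $\Phi_2(X) = W\Phi_1(X)W^\dagger$ for an isometry $W$, and conjugation by an isometry preserves rank, so $\operatorname{rank}\Phi_1(\iden_d) = \operatorname{rank}\Phi_2(\iden_d)$; also $J(\Phi)$ depends only on $\Phi$, not on the complement. Finally, the second line of the lemma follows by applying the first line to $\Phi_c$ in place of $\Phi$: since $\Phi \in \mathcal{C}_{\Phi_c}$, we get $d^*_{\out}(\Phi) = d^*_{\env}(\Phi_c) = \operatorname{rank} J(\Phi_c) = \operatorname{rank}\Phi(\iden_d)$.

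\textbf{Main obstacle.} The only genuinely delicate point is the bookkeeping in the two directions of the inequality $d^*_{\env}(\Phi) = \operatorname{rank} J(\Phi)$ — specifically, being careful that a Stinespring dilation with a given $d_{\env}$ really does yield Kraus operators spanning a space of dimension $\le d_{\env}$ (so that linearly dependent Kraus operators cannot be "hidden" to make the Choi rank exceed $d_{\env}$), and conversely that the minimal Kraus representation coming from the spectral decomposition of $J(\Phi)$ genuinely assembles into an isometry. Both are standard facts about the Choi–Kraus–Stinespring correspondence, so this is more a matter of citing or briefly reproving the correspondence than of real difficulty; everything else is linear algebra (Gram matrices and rank under isometric conjugation).
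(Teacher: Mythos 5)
Your proof is correct, and the middle of it takes a genuinely different route from the paper. The paper establishes $d^*_{\env}(\Phi)=\operatorname{rank}J(\Phi)$ by citing the Stinespring dilation theorem, and then proves $\operatorname{rank}\Phi_c(\iden_d)=d^*_{\env}(\Phi)$ by a separate two-sided argument: the inequality $\operatorname{rank}\Phi_c(\iden_d)\leq d_{\env}$ for every complement (via the isometric relation of Remark~\ref{remark:iso-exten}), and the reverse inequality by explicitly constructing a smaller complement $\Phi_c'(X)=W^\dagger\Phi_c(X)W$, where $W$ embeds $\operatorname{range}\Phi_c(\iden_d)$ into $\C{d_{\env}}$ — this uses the range inclusion $\operatorname{range}\Phi_c(X)\subseteq\operatorname{range}\Phi_c(\iden_d)$ to check that the compression is still complementary to $\Phi$. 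You instead identify both $J(\Phi)=\sum_k\operatorname{vec}(A_k)\operatorname{vec}(A_k)^\dagger$ and $\Phi_c(\iden_d)=[\operatorname{Tr}(A_\ell^\dagger A_k)]_{k\ell}$ in terms of the vectorized Kraus operators, so that both ranks are visibly $\dim\operatorname{span}\{\operatorname{vec}A_k\}$; the equality $\operatorname{rank}J(\Phi)=\operatorname{rank}\Phi_c(\iden_d)$ then falls out in one line without any compression construction, and the identification with $d^*_{\env}(\Phi)$ reduces to the standard Choi--Kraus--Stinespring correspondence (which you reprove rather than cite). Your Gram-matrix observation is more elementary and also makes transparent the fact, recorded separately in the paper as the second bullet of Remark~\ref{remark:minimal}, that all three quantities count the linearly independent Kraus operators; the paper's compression argument, on the other hand, yields as a by-product an explicit minimally defined complement, which is the form of the lemma actually reused later (e.g.\ in Corollary~\ref{corollary:vikesh}). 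The final duality step ($d^*_{\out}(\Phi)=d^*_{\env}(\Phi_c)$ applied to $\Phi_c$) is identical in both arguments. The one point worth tightening in a written version is the direction you flag yourself: when reading Kraus operators off an arbitrary dilation you should note that $J(\Phi)$ is dilation-independent, so the bound $\operatorname{rank}J(\Phi)\leq d_{\env}$ really does hold for every complement, not just the fixed one.
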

\begin{proof}
See the Methods section~\ref{appen:lemmaminimal}.
\end{proof}

\begin{remark}\label{remark:minimal}
We note some straightforward consequences of Lemma~\ref{lemma:minimal} below:
\begin{itemize}
    \item If $\Phi:\M{d}\rightarrow \M{d_{\out}}$ is a unital channel, then $d^{*}_{\out}(\Phi)= d_{\out}$.
    \item If a channel $\Phi:\M{d}\rightarrow \M{d_{\out}}$ admits a Kraus representation $\Phi(X)=\sum_{i=1}^k A_i X A_i^\dagger$, where $\{A_i\}_{i=1}^k$ is a linearly independent set of $d_{\out}\times d$ matrices, then $d^*_{\env}(\Phi)=k$.
    \item For a channel $\Phi:\M{d}\to\M{d_{\out}}$, its adjoint $\Phi^*$ has $d^*_{\env}(\Phi^*)=d^*_{\env}(\Phi)$ and $d^*_{\out}(\Phi^*)=d$.
\end{itemize} 
\end{remark}

The {coherent information} of a state $\rho\in\St{d}$ with respect to a channel $\Phi:\M{d}\rightarrow \M{d_{\out}}$ is defined as $I_c(\rho; \Phi) \coloneqq S[\Phi(\rho)] - S[\Phi_c(\rho)]$, where $\Phi_c\in\mathcal{C}_\Phi$ is complementary to $\Phi$ and $S(\rho)=-\operatorname{Tr}(\rho\log \rho)$ denotes the {von Neumann entropy} of $\rho\in\St{d}$. The {coherent information} of $\Phi$ is defined to be 
\begin{equation}\label{eq:coherent}
    \mathcal{Q}^{(1)}(\Phi) := \max_{\rho\in\St{d}} I_c(\rho;\Phi).
\end{equation}
The {quantum capacity} of $\Phi$ admits the following regularized expression:
\begin{equation}\label{eq:capacity}
    \mathcal{Q}(\Phi) = \lim_{n\to \infty} \frac{\mathcal{Q}^{(1)}(\Phi^{\otimes n})}{n},
\end{equation}
where, for $n\in\mathbb{N}$, the coherent information of the product channel $\mathcal{Q}^{(1)}(\Phi^{\otimes n})$ is called the $n-${shot coherent information} of $\Phi$. Clearly, $\mathcal{Q}^{(1)}(\Phi^{\otimes n})\geq n\mathcal{Q}^{(1)}(\Phi)$. However, it may happen that the coherent information is {superadditive}: $\mathcal{Q}^{(1)}(\Phi^{\otimes n})>n\mathcal{Q}^{(1)}(\Phi)$, in which case the task of evaluating $\mathcal{Q}(\Phi)$ becomes intractable \cite{smolin1998noisy, Cubitt2015unbounded}. The regularization in Eq.~\eqref{eq:capacity} is not required only for certain special channels for which the coherent information is additive: $\mathcal{Q}^{(1)}(\Phi^{\otimes n})=n\mathcal{Q}^{(1)}(\Phi)$. Consequently, the quantum capacities of these channels admit nice single-letter expressions in terms of the channels' coherent information $\mathcal{Q}(\Phi)=\mathcal{Q}^{(1)}(\Phi)$. 

\begin{remark}\label{remark:iso-cap}
Since for a given channel $\Phi$, all complementary channels $\Phi_c\in\mathcal{C}_\Phi$ are isometrically related (in the sense of Remark~\ref{remark:iso-exten}), the stated capacity expressions do not depend on the choice of $\Phi_c$. More generally, the quantum capacities of any two isometrically related channels are identical. Hence, while computing quantum capacities, one can choose to work with channels $\Phi:\M{d}\to \M{d_{\out}}$ and $\Phi_c:\M{d}\to\M{d_{\env}}$ that are minimally defined (i.e. $d_{\env}=d^*_{\env}(\Phi)$ and $d_{\out}=d^*_{\out}(\Phi)$).
\end{remark}  


Degradable channels are the quintessential examples of channels which have additive coherent information \cite{Devetak2005degradable}. A quantum channel $\Phi:\M{d}\rightarrow\M{d_{\out}}$ is said to be {degradable} if there exists a channel $\mathcal{N}:\M{d_{\out}}\rightarrow\M{d_{\operatorname{env}}}$ such that $\Phi_c=\mathcal{N}\circ \Phi$ for some complementary channel $\Phi_c:\M{d}\to\M{d_{\env}}$. Quantum channels that are complementary to degradable channels are known as {anti-degradable}. In other words, a channel $\Phi:\M{d}\to\M{d_{\out}}$ is said to be {anti-degradable} if there exists a channel $\mathcal{N}:\M{d_{\env}}\to \M{d_{\out}}$ such that $\Phi = \mathcal{N}\circ\Phi_c$ for some complementary channel $\Phi_c:\M{d}\to\M{d_{\env}}$. If such a channel could be used for reliable quantum communication, then its environment would be able to replicate all the transmitted information by applying the anti-degrading map $\mathcal{N}$, thus violating the {no-cloning} theorem \cite{Bennett1997anti,Smith2012incapacity}. Hence, anti-degrading channels have zero quantum capacity. More generally, all the currently known classes of channels with additive coherent information lie within the strict superset of {more capable} channels \cite{Watanabe2012capable} (see Figure~\ref{fig:1}), which are defined by the property that their complementary channels have zero quantum capacity.

\begin{figure}
    \centering
    \includegraphics[scale=0.6]{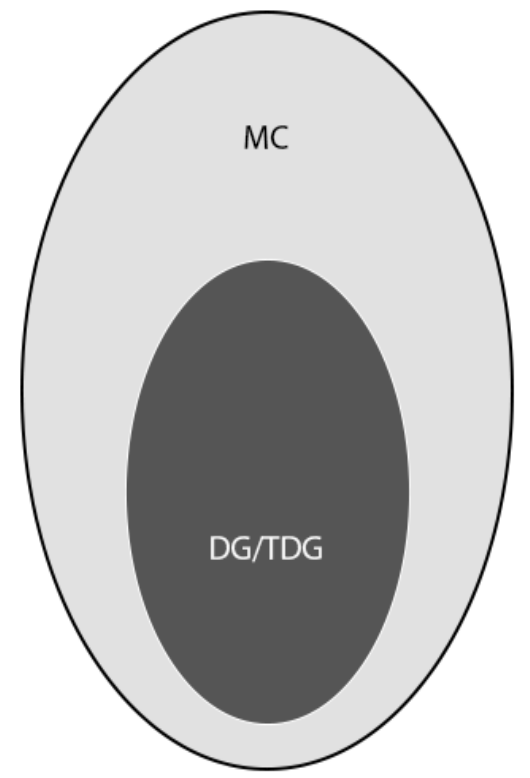}
    \caption{\cite{Watanabe2012capable} The set of more capable (MC) quantum channels, i.e. channels with zero complementary quantum capacity. Degradable (DG) and transpose degradable (TDG) channels lie strictly within this set. The relationship between the sets of DG and TDG channels is currently unestablished.}
    \label{fig:1}
\end{figure}

\begin{figure}
    \centering
    \includegraphics[scale=0.5]{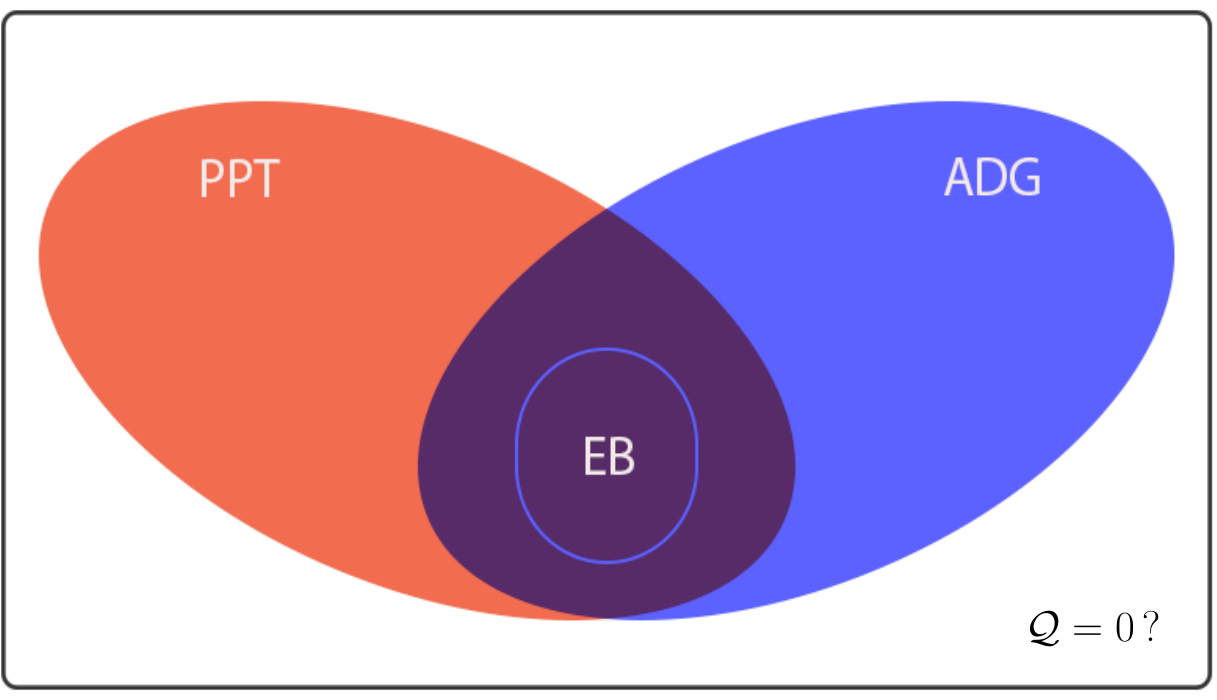}
    \caption{ \cite{Smith2012incapacity} The set of quantum channels with zero quantum capacity: PPT and anti-degradable (ADG) channels are the only kinds of channels which are currently known to belong in this set. Entanglement-breaking (EB) channels form a strict subset of the intersection of the PPT and anti-degradable sets of channels. The question mark indicates that it is not currently known if there exist quantum channels which are neither anti-degradable nor PPT but still have zero quantum capacity.}
    \label{fig:zero-capacity}
\end{figure}

Apart from anti-degradable channels, only PPT channels \cite{Horodecki2000binding} are known to have zero capacity. A channel $\Phi:\M{d}\to \M{d_{\out}}$ is said to be PPT if $\top\circ\Phi$ is again a quantum channel, where $\top:\M{d_{\out}}\to \M{d_{\out}}$ is the transpose map. Equivalently, $\Phi$ is PPT if and only if its Choi matrix $J(\Phi)\in \M{d_{\out}}\otimes \M{d}$ and its partial transposite are positive semi-definite. Such channels cannot have positive capacity because if they did, one would be able to distill maximally entangled pure states from the PPT Choi matrices of these channels, which is impossible \cite{Horodecki1998bound}. The well-known family of {entanglement-breaking} channels \cite{Horodecki2003entanglement} -- which consists of channels whose Choi matrices are {separable} -- is strictly contained within the the intersection of the anti-degradable and PPT families, see Figure~\ref{fig:zero-capacity}.

\begin{remark}
In \cite{Brdler2010conjugate}, the authors introduced another class of quantum channels with additive coherent information, namely {conjugate degradable} channels. According to their definition, a channel $\Phi:\M{d}\to \M{d_{\out}}$ is conjugate degradable if there exists a channel $\mathcal{N}:\M{d_{\out}}\to \M{d_{\env}}$ such that
\begin{equation}\label{eq:conj}
\mathcal{C} \circ \Phi_c = \mathcal{N} \circ \Phi,
\end{equation}
where $\mathcal{C}$ denotes (entrywise) complex conjugation on $\M{d_{\env}}$ and $\Phi_c:\M{d}\to\M{d_{\env}}$ is complementary to $\Phi$. In an analogous fashion, the class of {conjugate anti-degradable} channels was defined, and these channels were claimed to have zero quantum capacity. However, there is a fundamental problem with the stated definitions, which stems from the fact that the map $\mathcal{C}$ is {anti-linear}. Therefore, Eq.~\eqref{eq:conj} is invalid, since it is equating an anti-linear map on its left hand side with a linear map on its right hand side. This issue can be easily resolved if we just replace the anti-linear operation of complex conjugation by the {linear} operation of transposition. Hence, we are led to the following definitions of transpose degradable and transpose anti-degradable channels.
\end{remark}

 We say that a channel $\Phi:\M{d}\to\M{d_{\out}}$ is {transpose degradable} (resp. {transpose anti-degradable}) if there exists a channel $\mathcal{N}:\M{d_{\out}}\rightarrow\M{d_{\operatorname{env}}}$ (resp. $\mathcal{N}:\M{d_{\env}}\rightarrow\M{d_{\operatorname{out}}}$) such that $\top\circ \Phi_c=\mathcal{N}\circ \Phi$ (resp. $\top\circ\Phi = \mathcal{N}\circ\Phi_c$) for some complementary channel $\Phi_c:\M{d}\to\M{d_{\env}}$, where $\top$ denotes the transpose map on the relevant matrix spaces. It is easy to show that transpose degradable channels have additive coherent information, while transpose anti-degradable channels have zero quantum capacity. In this regard, note that all transpose anti-degradable channels are PPT as well. It would be interesting to see if these classes are actually different from their non-transposed counterparts.

\section{Main results}\label{sec:main}

Let $\Phi:\M{d}\to \M{d_{\out}}$ be a quantum channel and let $\Phi_c\in\mathcal{C}_{\Phi}$ be a complementary channel. Our goal is to check if $\mathcal{Q}(\Phi)>0$. Since $\mathcal{Q}(\Phi)\geq \mathcal{Q}^{(1)}(\Phi)$, we proceed by checking if $\mathcal{Q}^{(1)}(\Phi)>0$. For a pure input state $\ketbra{\psi}\in \St{d}$, it is easy to show that $I_c(\ketbra{\psi};\Phi)=0$, since the non-zero eigenvalues of $\Phi(\ketbra{\psi})$ and $\Phi_c(\ketbra{\psi})$ (counted with multiplicities) are identical, see Lemma~\ref{lemma:2}. Now, the idea is to cleverly choose a pure input state and perturb it along the direction of a suitable mixed state in order to obtain a positive value for the coherent information. With this end in sight, let us break the technical aspects of this idea into the following steps:

\begin{itemize}
    \item For a pure input state $\ketbra{\psi}$ and a mixed state $\sigma$, define the one-parameter family of states $\rho(\epsilon) = (1-\epsilon)\ketbra{\psi} + \epsilon \sigma,$ where $\epsilon\in [0,1]$ is the perturbation parameter.
    \item Focus on the zero eigenvalue of $\Phi(\ketbra{\psi})$ and $\Phi_c(\ketbra{\psi})$ with respective multiplicities 
    $\kappa=\dim \ker\Phi(\ketbra{\psi})$ and $\kappa_c=\dim\ker\Phi_c(\ketbra{\psi}).$
    \item Once the perturbation is turned on,
\begin{equation}
    \Phi[\rho(\epsilon)] = (1-\epsilon)\Phi(\ketbra{\psi}) + \epsilon \Phi(\sigma) \quad\text{and}\quad \Phi_c[\rho(\epsilon)] = (1-\epsilon)\Phi_c(\ketbra{\psi}) + \epsilon \Phi_c(\sigma)
\end{equation}
will have exactly $\kappa$ and $\kappa_c$ eigenvalues, respectively, which converge to zero as $\epsilon\to 0$. Remarkably, these eigenvalues admit convergent power series expansions in the perturbation parameter $\epsilon$ (in a neighborhood of $\epsilon=0$), see Theorem~\ref{theorem:perturbation}. Furthermore, if $K_\psi$ and $K^c_\psi$ denote the orthogonal projections onto the unperturbed eigenspaces $\ker\Phi(\ketbra{\psi})$ and $\ker\Phi_c(\ketbra{\psi})$, respectively, then the non-zero eigenvalues of $K_\psi\Phi(\sigma)K_\psi$ and $K^c_\psi\Phi(\sigma)K^c_\psi$ determine the first order correction constants in the aforementioned eigenvalue expansions, thus giving rise to the following crucial term in the derivative of the coherent information:
\begin{equation}\label{eq:log-singularity}
    I'_c(\rho(\epsilon);\Phi) = \left[ \operatorname{Tr}(K_\psi\Phi(\sigma)) - \operatorname{Tr}(K^c_\psi\Phi_c(\sigma)) \right] \log (\frac{1}{\epsilon}) + \ldots
\end{equation}
\item All the higher order corrections above can be shown to be bounded in $\epsilon$, so that the derivative becomes positive for sufficiently small values of $\epsilon$ if $\operatorname{Tr}(K_\psi\Phi(\sigma))>\operatorname{Tr}(K^c_\psi\Phi_c(\sigma))$. Hence, $I_c(\rho(\epsilon);\Phi)$ is strictly increasing when $0<\epsilon<\delta$ (for some small $\delta > 0$), which implies that it has a positive value in the stated range (recall that at $\epsilon=0$, $I_c(\rho(\epsilon);\Phi)=0$).
\end{itemize}

\begin{remark}
    In \cite[Sections III and IV]{siddhu2020logsingularities}, the first order corrections in the  eigenvalues of $\Phi[\rho(\epsilon)]$ and $\Phi_c[\rho(\epsilon)]$ which give rise to the trace factor in Eq.~\eqref{eq:log-singularity} are referred to as the {rates} of the $\epsilon${-log singluarities} of the output entropies $S[\Phi(\rho(\epsilon))]$ and $S[\Phi_c(\rho(\epsilon))]$.
\end{remark}

In order to formulate the above discussion in a mathematically rigorous fashion, a number of non-trivial intermediate steps are required, which are presented in the Methods section \ref{appendix:main}. The final theorem stated below is the primary result of this paper.

\begin{theorem} \label{theorem:main}
Let $\Phi:\M{d}\rightarrow \M{d_{\out}}$ and $\Phi_c:\M{d}\to \M{d_{\env}}$ be complementary channels. For a pure input state $\ketbra{\psi}{\psi}\in \St{d}$, denote the orthogonal projections onto $\ker \Phi(\ketbra{\psi}{\psi})$ and $\ker \Phi_c(\ketbra{\psi}{\psi})$ by $K_\psi$ and $K^c_\psi$, respectively. Then,
\begin{itemize}
    \item $\mathcal{Q}(\Phi)\geq\mathcal{Q}^{(1)}(\Phi)>0$ if there exist states $\ketbra{\psi}{\psi},\sigma\in\St{d}$ such that
\begin{equation}
\operatorname{Tr}(K_\psi \Phi(\sigma))>\operatorname{Tr}(K^c_\psi \Phi_c(\sigma)). 
\end{equation}
    \item $\mathcal{Q}(\Phi_c)\geq\mathcal{Q}^{(1)}(\Phi_c)>0$ if there exist states $\ketbra{\psi}{\psi},\sigma\in\St{d}$ such that 
    \begin{equation}
        \operatorname{Tr}(K_\psi \Phi(\sigma))<\operatorname{Tr}(K^c_\psi \Phi_c(\sigma)).
    \end{equation}
\end{itemize}
\end{theorem}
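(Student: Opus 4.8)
The plan is to make rigorous the perturbative heuristic already sketched in the text around Eq.~\eqref{eq:log-singularity}. Fix a pure state $\ketbra{\psi}$ and an arbitrary state $\sigma$, and set $\rho(\epsilon) = (1-\epsilon)\ketbra{\psi} + \epsilon\sigma$ for $\epsilon \in [0,1]$. By linearity of $\Phi$ and $\Phi_c$, the outputs $A(\epsilon) := \Phi[\rho(\epsilon)] = \Phi(\ketbra{\psi}) + \epsilon\,[\Phi(\sigma) - \Phi(\ketbra{\psi})]$ and $A_c(\epsilon) := \Phi_c[\rho(\epsilon)]$ are linear (hence analytic) Hermitian one-parameter families, so Theorem~\ref{theorem:perturbation} applies: the $\kappa = \dim\ker\Phi(\ketbra{\psi})$ eigenvalues of $A(\epsilon)$ that vanish at $\epsilon = 0$ admit convergent power-series expansions $\mu_j(\epsilon) = \epsilon\, a_j + O(\epsilon^2)$ near $\epsilon = 0$, and similarly the $\kappa_c$ vanishing eigenvalues of $A_c(\epsilon)$ expand as $\nu_\ell(\epsilon) = \epsilon\, b_\ell + O(\epsilon^2)$. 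Standard first-order perturbation theory identifies the leading coefficients $\{a_j\}$ with the eigenvalues of $K_\psi \Phi(\sigma) K_\psi$ (restricted to the range of $K_\psi$) and $\{b_\ell\}$ with those of $K^c_\psi \Phi_c(\sigma) K^c_\psi$; in particular $\sum_j a_j = \operatorname{Tr}(K_\psi \Phi(\sigma))$ and $\sum_\ell b_\ell = \operatorname{Tr}(K^c_\psi \Phi_c(\sigma))$. I would record these facts as a preliminary lemma (this is presumably what Appendix~\ref{appendix:main} does).

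The second step is to differentiate the coherent information along this path. Write $I_c(\rho(\epsilon);\Phi) = S[A(\epsilon)] - S[A_c(\epsilon)]$ and split each entropy into the contribution of the eigenvalues staying bounded away from $0$ (which is $C^1$ in $\epsilon$ near $0$, hence contributes an $O(1)$ term to the derivative) and the contribution $-\sum_j \mu_j(\epsilon)\log\mu_j(\epsilon)$ of the vanishing eigenvalues. For the latter, using $\mu_j(\epsilon) = \epsilon a_j + O(\epsilon^2)$ with $a_j \geq 0$, one computes $\frac{d}{d\epsilon}\big[-\mu_j\log\mu_j\big] = -\mu_j'\log\mu_j - \mu_j' = a_j \log(1/\epsilon) + O(1)$ as $\epsilon \to 0^+$ (the power-series expansions are exactly what legitimizes term-by-term differentiation and control of the $O(1)$ remainder — this is the payoff of invoking Rellich rather than mere continuity of eigenvalues). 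Summing over $j$ and $\ell$ and subtracting yields
\begin{equation*}
I'_c(\rho(\epsilon);\Phi) = \Big[\operatorname{Tr}(K_\psi\Phi(\sigma)) - \operatorname{Tr}(K^c_\psi\Phi_c(\sigma))\Big]\log\tfrac{1}{\epsilon} + O(1), \qquad \epsilon \to 0^+.
\end{equation*}
Here a small technical point must be handled: some $a_j$ (or $b_\ell$) may vanish, i.e. an eigenvalue could vanish to higher order; one checks that such a term contributes $o(\log(1/\epsilon))$ and in any case nonnegatively in a way that does not spoil the sign of the bracket, or alternatively one argues directly that the surviving positive $a_j$ dominate — I would phrase the estimate as a two-sided bound on $I'_c$ valid for $0 < \epsilon < \delta$.

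The third step is the conclusion. If $\operatorname{Tr}(K_\psi\Phi(\sigma)) > \operatorname{Tr}(K^c_\psi\Phi_c(\sigma))$, the bracket is a strictly positive constant, so $I'_c(\rho(\epsilon);\Phi) \to +\infty$ and in particular $I'_c(\rho(\epsilon);\Phi) > 0$ for all $\epsilon \in (0,\delta)$ with $\delta$ small. Since $I_c(\rho(0);\Phi) = I_c(\ketbra{\psi};\Phi) = 0$ (the non-zero eigenvalues of $\Phi(\ketbra{\psi})$ and $\Phi_c(\ketbra{\psi})$ coincide with multiplicity, by the Schmidt-decomposition argument — Lemma~\ref{lemma:2}), the function $I_c(\rho(\epsilon);\Phi)$ is strictly increasing on $(0,\delta)$ starting from $0$, hence strictly positive there. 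Therefore $\mathcal{Q}^{(1)}(\Phi) = \max_\rho I_c(\rho;\Phi) > 0$, and by the trivial bound $\mathcal{Q}(\Phi) \geq \mathcal{Q}^{(1)}(\Phi)$ (attainable input states tensorize) we get $\mathcal{Q}(\Phi) > 0$. The second bullet is obtained by swapping the roles of $\Phi$ and $\Phi_c$, i.e. applying the first bullet to the complementary pair $(\Phi_c, \Phi)$, which reverses the inequality. The main obstacle is the second step: rigorously controlling the $O(1)$ remainder in the derivative of the entropy uniformly as $\epsilon \to 0^+$ — in particular justifying term-by-term differentiation of the eigenvalue series, handling eigenvalues that degenerate at higher order, and ensuring the bounded-eigenvalue block genuinely contributes only $O(1)$ — which is precisely why the full argument is deferred to Appendix~\ref{appendix:main}.
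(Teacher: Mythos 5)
Your proposal is correct and follows essentially the same route as the paper's Appendix~\ref{appendix:main}: Rellich's theorem (Theorem~\ref{theorem:perturbation}) gives convergent eigenvalue series whose first-order coefficients are the eigenvalues of $K_\psi\Phi(\sigma)K_\psi$ and $K^c_\psi\Phi_c(\sigma)K^c_\psi$, the derivative of the coherent information then diverges like $[\operatorname{Tr}(K_\psi\Phi(\sigma))-\operatorname{Tr}(K^c_\psi\Phi_c(\sigma))]\log(1/\epsilon)$ plus a bounded remainder, and monotonicity from $I_c=0$ at $\epsilon=0$ finishes the argument. The technical point you flag about eigenvalues vanishing to higher order is exactly what the paper isolates in Lemma~\ref{lemma:1}, which shows each term $f\log f$ has bounded derivative unless $f_0=0$ and $f_1\neq 0$, so such terms contribute only to the bounded remainder.
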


Before proceeding further, let us note  that for a given channel $\Phi:\M{d}\to\M{d_{\out}}$ and an arbitrary pure state $\ketbra{\psi}\in \St{d}$, Lemma~\ref{lemma:minimal} and Lemma~\ref{lemma:2} can be exploited to deduce that the maximal possible rank of $\Phi(\ketbra{\psi})$ is $\min\{d^*_{\out}(\Phi),d^*_{\env}(\Phi)\}$. In other words,
\begin{equation}\label{eq:max-pure-rank}
\forall \ketbra{\psi}\in\St{d}: \quad \operatorname{rank}\Phi(\ketbra{\psi})\leq \min\{d^*_{\out}(\Phi),d^*_{\env}(\Phi)\}.
\end{equation} 
It is not easy to determine if there exists a pure state for which the above bound gets saturated. However, if we assume that such a pure state exists, a simple application of the previous theorem yields a very useful corollary, which we now state and prove. We should mention that this result has been derived independently in~\cite{siddhu2020logsingularities}.

\begin{corollary}\label{corollary:vikesh}
Let $\Phi:\M{d}\rightarrow \M{d_{\out}}$ be a channel such that there exists a pure state $\ketbra{\psi}\in\St{d}$ with $\operatorname{rank}\Phi(\ketbra{\psi})= \min\{d^*_{\out}(\Phi),d^*_{\env}(\Phi)\}$ and $\Phi_c\in\mathcal{C}_\Phi$. Then,
\begin{itemize}
    \item $\mathcal{Q}(\Phi)\geq\mathcal{Q}^{(1)}(\Phi)>0 \,\text{ if }\, d^{*}_{\out}(\Phi)>d^{*}_{\operatorname{env}}(\Phi)$.
    
    \item $\mathcal{Q}(\Phi_c)\geq\mathcal{Q}^{(1)}(\Phi_c)>0 \,\text{ if }\, d^{*}_{\out}(\Phi)<d^{*}_{\operatorname{env}}(\Phi).$
\end{itemize}
\end{corollary}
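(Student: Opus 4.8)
The plan is to reduce Corollary~\ref{corollary:vikesh} to Theorem~\ref{theorem:main} by constructing, from the hypothesized pure state $\ketbra{\psi}$, an auxiliary state $\sigma$ that separates the two trace quantities in the right direction. First I would pass to minimally defined complementary channels, which is harmless by Remark~\ref{remark:iso-cap} and Lemma~\ref{lemma:minimal}: we may assume $d_{\out}=d^*_{\out}(\Phi)$ and $\Phi_c:\M{d}\to\M{d_{\env}}$ with $d_{\env}=d^*_{\env}(\Phi)$. Set $r:=\operatorname{rank}\Phi(\ketbra{\psi})=\min\{d_{\out},d_{\env}\}$. By Lemma~\ref{lemma:2}, the nonzero eigenvalues of $\Phi(\ketbra{\psi})$ and $\Phi_c(\ketbra{\psi})$ coincide with multiplicities, so $\operatorname{rank}\Phi_c(\ketbra{\psi})=r$ as well. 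Consequently the kernel dimensions are $\kappa:=\dim\ker\Phi(\ketbra{\psi})=d_{\out}-r$ and $\kappa_c:=\dim\ker\Phi_c(\ketbra{\psi})=d_{\env}-r$. In particular, in the case $d^*_{\out}(\Phi)>d^*_{\env}(\Phi)$ we have $r=d_{\env}$, hence $\kappa_c=0$, i.e. $K^c_\psi=0$; symmetrically, if $d^*_{\out}(\Phi)<d^*_{\env}(\Phi)$ then $\kappa=0$ and $K_\psi=0$.

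Now I would exploit the vanishing of one projection. Suppose $d^*_{\out}(\Phi)>d^*_{\env}(\Phi)$, so $K^c_\psi=0$ and therefore $\operatorname{Tr}(K^c_\psi\Phi_c(\sigma))=0$ for every state $\sigma$. It then suffices to produce a single state $\sigma$ with $\operatorname{Tr}(K_\psi\Phi(\sigma))>0$, since $\kappa=d_{\out}-d_{\env}>0$ guarantees $K_\psi\neq 0$. Take any unit vector $\ket{\chi}$ in the range of $K_\psi$, i.e. $\ket{\chi}\in\ker\Phi(\ketbra{\psi})$. Because $\Phi$ is trace preserving and maps the full state space onto a set whose outputs span enough of $\M{d_{\out}}$ — concretely, $\operatorname{range}\Phi(\iden_d)$ has dimension $d^*_{\out}(\Phi)=d_{\out}$ by Lemma~\ref{lemma:minimal}, so $\Phi(\iden_d)$ is full rank — we have $\bra{\chi}\Phi(\iden_d)\ket{\chi}>0$. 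Writing $\iden_d = d\,\overline{\rho}$ with $\overline{\rho}$ the maximally mixed state, the choice $\sigma=\overline{\rho}$ already gives $\operatorname{Tr}(K_\psi\Phi(\overline{\rho}))=\tfrac1d\operatorname{Tr}(K_\psi\Phi(\iden_d))>0$, the last strict inequality because $K_\psi\neq0$ and $\Phi(\iden_d)$ is positive definite. Hence the first bullet of Theorem~\ref{theorem:main} applies with this $(\ketbra{\psi},\sigma)$ and yields $\mathcal{Q}(\Phi)\geq\mathcal{Q}^{(1)}(\Phi)>0$. The case $d^*_{\out}(\Phi)<d^*_{\env}(\Phi)$ is identical after swapping the roles of $\Phi$ and $\Phi_c$: now $K_\psi=0$, $\Phi_c(\iden_d)$ is full rank by Lemma~\ref{lemma:minimal}, $K^c_\psi\neq0$ because $\kappa_c=d_{\env}-d_{\out}>0$, and taking $\sigma=\overline{\rho}$ gives $\operatorname{Tr}(K^c_\psi\Phi_c(\overline{\rho}))>0=\operatorname{Tr}(K_\psi\Phi(\overline{\rho}))$, so the second bullet of Theorem~\ref{theorem:main} gives $\mathcal{Q}(\Phi_c)\geq\mathcal{Q}^{(1)}(\Phi_c)>0$.

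I expect the only genuine subtlety to be the bookkeeping of dimensions: one must be careful that the hypothesis $\operatorname{rank}\Phi(\ketbra{\psi})=\min\{d^*_{\out},d^*_{\env}\}$ forces the smaller of the two kernel projections to vanish, and that this is exactly the input that makes one of the two trace quantities identically zero, trivializing the inequality required by Theorem~\ref{theorem:main}. The positivity $\operatorname{Tr}(K_\psi\Phi(\overline{\rho}))>0$ is then immediate from full rank of $\Phi(\iden_d)$ (equivalently, from the fact that for a nonzero projection $K$ and a positive definite matrix $M$, $\operatorname{Tr}(KM)>0$), and no perturbation-theoretic input beyond what is already packaged in Theorem~\ref{theorem:main} is needed. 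One should also remark that the statement is insensitive to the choice of $\Phi_c\in\mathcal{C}_\Phi$, again by Remark~\ref{remark:iso-cap}.
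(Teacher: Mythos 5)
Your proposal is correct and follows essentially the same route as the paper's proof: pass to minimally defined complementary channels, use Lemma~\ref{lemma:2} to conclude that the hypothesis forces one of $K_\psi$, $K^c_\psi$ to vanish while the other is nonzero, invoke Lemma~\ref{lemma:minimal} for full rank of $\Phi(\iden_d)$ (resp.\ $\Phi_c(\iden_d)$), and feed the maximally mixed state as $\sigma$ into Theorem~\ref{theorem:main}. Your normalization of $\iden_d$ to the maximally mixed state is a minor cosmetic improvement over the paper, which plugs $\iden_d$ directly into the trace inequality.
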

\begin{proof}
We only prove the second part here and leave a similar proof of the first part to the reader. Firstly, without loss of generality, we can assume that $\Phi:\M{d}\to\M{d_{\out}}$ and $\Phi_c:\M{d}\to\M{d_{\env}}$ are minimally defined, so that $d^*_{\out}(\Phi)=d_{\out}$ and $d^*_{\env}(\Phi)=d_{\env}$ (see Remark~\ref{remark:iso-cap}). Now, if there exists $\ketbra{\psi}{\psi}\in \mathcal{S}_d$ with $\operatorname{rank}\Phi(\ketbra{\psi}{\psi})=d_{\out}$, then $\ker\Phi(\ketbra{\psi}{\psi})=\{0\}$. Moreover, since $d_{\out}<d_{\operatorname{env}}$, Lemma~\ref{lemma:2} tells us that $\ker\Phi_c(\ketbra{\psi}{\psi})\neq \{0\}$. Hence, in the notation of Theorem~\ref{theorem:main}, we have $K_\psi= 0$ but $K^c_\psi\neq 0$. Finally, Lemma~\ref{lemma:minimal} tells us that $\Phi_c(\iden_d)$ has full rank in $\M{d_{\operatorname{env}}}$, so that 
\begin{equation}
    \operatorname{Tr}(K^c_\psi\Phi_c(\iden_d))>0=\operatorname{Tr}(K_\psi \Phi(\iden_d))\implies \mathcal{Q}^{(1)}(\Phi_c)>0. \qedhere
\end{equation}
\end{proof}
\begin{remark}\label{remark:vikesh}
Theorem~\ref{theorem:main} is more general than Corollary~\ref{corollary:vikesh}, since it can be used to detect positivity of the quantum capacity of channels which lie outside the realm of applicability of Corollary~\ref{corollary:vikesh}, especially for channels $\Phi$ with $d^{*}_{\out}(\Phi)=d^{*}_{\operatorname{env}}(\Phi)$. In Theorem~\ref{theorem:dephasing}, we show that the class of {dephasing} or {Hadamard} channels contains such examples; see also Example~\ref{subsec:equaloutenv}. Moreover, there exist channels $\Phi$ for which even though $d^{*}_{\out}(\Phi)\neq d^{*}_{\env}(\Phi)$, there is no pure state that gets mapped to an output state with maximal rank, so that Corollary~\ref{corollary:vikesh} cannot be applied. The well-known {Werner-Holevo} channel provides such an example. Using Theorem~\ref{theorem:main}, we obtain positivity of the quantum capacity of its complement for all input dimensions $d\ge 4$ in Theorem~\ref{theorem:werner-holevo}.
\end{remark}

Whenever $d^*_{\out}(\Phi)\neq d^*_{\env}(\Phi)$ for some channel $\Phi$, Corollary~\ref{corollary:vikesh} can be applied if the existence of a pure input state that gets mapped to a state with maximal rank in the output space can be guaranteed, see Eq.~\eqref{eq:max-pure-rank}. However, this condition can be difficult to check in practice. In order to see why, let us consider a pair of complementary channels $\Phi:\M{d}\to\M{d_{\out}}$ and $\Phi_c:\M{d}\to \M{d_{\env}}$ with $d^*_{\out}(\Phi)=d_{\out}$ and $d^*_{\env}(\Phi)=d_{\env}$. Then, the associated isometry $V:\C{d}\to\C{d_{\out}}\otimes \C{d_{\env}}$ which defines these channels (see Eq.~\eqref{eq:complementary}) identifies the input space $\C{d}$ with a $d-$dimensional subspace $\operatorname{range}V\subseteq \C{d_{\out}}\otimes \C{d_{\env}}$, which can further be identified with a $d-$dimensional matrix subspace $\operatorname{vec}^{-1}(\operatorname{range}V)\subseteq \M{d_{\out}\times d_{\env}}$ via the inverse of the vectorization map, see Eq.~\eqref{eq:vec}. Since
\begin{equation}
    \forall \ketbra{\psi}\in \St{d}: \quad \operatorname{rank}\Phi(\ketbra{\psi}) = \operatorname{Schmidt\, rank} (V\ket{\psi}) = \operatorname{rank}(\operatorname{vec}^{-1}V\ket{\psi}),
\end{equation}
establishing the existence of a pure state with maximal output rank is equivalent to establishing the existence of a full rank matrix in the subspace $\operatorname{vec}^{-1}(\operatorname{range}V)\subseteq \M{d_{\out}\times d_{\env}}$, which is known to be hard  \cite{Lovsz1989rank,Gurvits2003rank}. This is why the simple dimensional inequalities given below are so convenient. Put simply, they ensure the existence of the desired pure input state whenever $d^*_{\out}(\Phi$) or $d^*_{\env}(\Phi)$ is sufficiently large for an arbitrary channel $\Phi$. We should point out that the following inequalities were also derived in \cite{siddhu2020logsingularities}, albeit in a different way.

\begin{corollary}\label{corollary:vikesh2}
Let $\Phi:\M{d}\rightarrow \M{d_{\out}}$ and $\Phi_c:\M{d}\rightarrow \M{d_{\env}}$ be complementary channels. Then,
\begin{itemize}
    \item $\mathcal{Q}(\Phi)\geq\mathcal{Q}^{(1)}(\Phi)>0$ if $d^{*}_{\out}(\Phi)>d^{*}_{\operatorname{env}}(\Phi)$ and $d^{*}_{\out}(\Phi)>d(d^{*}_{\operatorname{env}}(\Phi)-1)$.
    \item $\mathcal{Q}(\Phi_c)\geq\mathcal{Q}^{(1)}(\Phi_c)>0$ if $d^{*}_{\env}(\Phi)>d^{*}_{\out}(\Phi)$ and $d^{*}_{\env}(\Phi)>d(d^{*}_{\out}(\Phi)-1)$.
\end{itemize}
\end{corollary}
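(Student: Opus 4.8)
The plan is to deduce Corollary~\ref{corollary:vikesh2} from Corollary~\ref{corollary:vikesh} by showing that the stated dimensional inequalities force the existence of a pure input state whose image under $\Phi$ (or $\Phi_c$) has maximal rank. By Remark~\ref{remark:iso-cap} we may assume $\Phi:\M{d}\to\M{d_{\out}}$ and $\Phi_c:\M{d}\to\M{d_{\env}}$ are minimally defined, so $d_{\out}=d^*_{\out}(\Phi)$ and $d_{\env}=d^*_{\env}(\Phi)$. Invoking the associated Stinespring isometry $V:\C{d}\to\C{d_{\out}}\otimes\C{d_{\env}}$, the problem reduces (as spelled out just before the statement) to proving that the $d$-dimensional matrix subspace $\mathcal{V}:=\operatorname{vec}^{-1}(\operatorname{range}V)\subseteq\M{d_{\out}\times d_{\env}}$ contains a matrix of full rank, i.e.\ of rank $\min\{d_{\out},d_{\env}\}$; indeed $\operatorname{rank}\Phi(\ketbra{\psi})=\operatorname{rank}(\operatorname{vec}^{-1}V\ket{\psi})$. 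Once such a matrix is found, Corollary~\ref{corollary:vikesh} applies directly and yields the conclusion, with the direction of the capacity inequality dictated by whether $d_{\out}>d_{\env}$ or $d_{\env}>d_{\out}$.

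For the first bullet, suppose $d_{\out}>d_{\env}$ and $d_{\out}>d(d_{\env}-1)$; I want a matrix in $\mathcal{V}$ of rank $d_{\env}$ (the maximal possible, since matrices are $d_{\out}\times d_{\env}$ with $d_{\out}>d_{\env}$). The natural argument is by contradiction / dimension counting: the set of $d_{\out}\times d_{\env}$ matrices of rank at most $d_{\env}-1$ is a determinantal (algebraic) variety, and a linearly independent basis $M_1,\dots,M_d$ of $\mathcal{V}$ can be put, via the left action of $\mathrm{GL}(d_{\out})$ and right action of $\mathrm{GL}(d_{\env})$ (which does not change ranks), into a convenient normal form. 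Concretely, pick $M_1\in\mathcal{V}$ of maximal rank $r$ among all elements of $\mathcal{V}$; if $r=d_{\env}$ we are done, so assume $r\le d_{\env}-1$. After change of bases we may take $M_1=\begin{psmallmatrix}\iden_r&0\\0&0\end{psmallmatrix}$. Then for any other $M_j$, considering $M_1+tM_j$ and using maximality of $r$, one shows that the lower-right $(d_{\out}-r)\times(d_{\env}-r)$ block of each $M_j$ must vanish (otherwise $M_1+tM_j$ would have rank $>r$ for suitable $t$, by a standard Schur-complement / minor argument). Hence every element of $\mathcal{V}$ is supported on the first $r$ rows together with the first $r$ columns, i.e.\ on an "L-shaped" set of entries; but then, bounding the dimension of $\mathcal{V}$ by the number of free entries available after accounting for the structural constraints across all $d$ basis elements, one gets $d=\dim\mathcal{V}\le$ something like $\frac{d_{\out}+d_{\env}-r}{\,?\,}$, which I must arrange to contradict $d_{\out}>d(d_{\env}-1)$. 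The second bullet is entirely symmetric: transpose everything, replace $\mathcal{V}$ by $\mathcal{V}^\top\subseteq\M{d_{\env}\times d_{\out}}$, and apply the first bullet's argument with the roles of output and environment swapped, or equivalently apply the first bullet to a complementary channel of $\Phi_c$.

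The main obstacle, and the step I expect to require the most care, is getting the combinatorial bookkeeping of the "L-shaped support" argument to yield exactly the threshold $d_{\out}>d(d_{\env}-1)$ rather than a weaker bound. The clean way is likely: if no full-rank matrix exists, let $r\le d_{\env}-1$ be the generic (maximal) rank in $\mathcal{V}$; by the normal-form reduction above, $\mathcal{V}$ embeds into the space of matrices whose bottom-right $(d_{\out}-r)\times(d_{\env}-r)$ block is zero, and moreover — iterating the argument on the "column part" — one can show each $M_j$ has at most $r$ nonzero entries in each row outside a bounded set, forcing $\operatorname{rank}M_j\le$ a controlled quantity; pushing this through, the number of rows that can be "touched" by $\mathcal{V}$ is at most $d(d_{\env}-1)/1$-type bound. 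An alternative, cleaner route avoiding delicate counting: use the known fact (e.g.\ from the rank-nondecreasing / matrix-space literature cited, \cite{Lovsz1989rank,Gurvits2003rank}) that a subspace of $\M{m\times n}$ ($m\ge n$) all of whose members have rank $\le n-1$ must, after coordinate change, be contained in the matrices vanishing on a fixed row, OR have all members share a common $n-1$-dimensional column space — and in the latter case $\dim\mathcal{V}\le(n-1)m/$... — and then a short induction on $n=d_{\env}$ reduces to the base case, at each stage losing a factor that accumulates to $d(d_{\env}-1)$. I would first try to locate the precise combinatorial lemma in those references and cite it, reserving the hands-on normal-form computation as a fallback; either way the capacity conclusion is then immediate from Corollary~\ref{corollary:vikesh}.
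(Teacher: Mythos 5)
Your reduction is the right one (pass to minimally defined channels, identify $\operatorname{rank}\Phi(\ketbra{\psi})$ with $\operatorname{rank}(\operatorname{vec}^{-1}V\ket{\psi})$, then invoke Corollary~\ref{corollary:vikesh}), but the mechanism you propose for producing the full-rank element does not work, and you have in effect flagged the unresolved step yourself. The Flanders-style ``L-shaped support'' normal form bounds $\dim\mathcal{V}$ by $r\cdot\max\{d_{\out},d_{\env}\}$ --- this is exactly Lemma~\ref{lemma:dim-rank-bound}, which the paper uses for Corollary~\ref{corollary-biginput} (large \emph{input} dimension). Under the present hypotheses it reads $d\le(d_{\env}-1)d_{\out}$, which is \emph{implied by}, not contradicted by, $d_{\out}>d(d_{\env}-1)$; no amount of bookkeeping along those lines can recover the stated threshold, because the quantity being bounded ($\dim\mathcal{V}=d$) is not the one the hypothesis constrains ($d_{\out}$). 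Worse, the purely subspace-theoretic claim you are aiming for is false: the span of $\{\ketbra{i}{0}\}_{i}$ in $\M{d_{\out}\times d_{\env}}$ has every element of rank $\le 1$ no matter how large $d_{\out}$ is. What rescues the statement is minimality, i.e.\ $\operatorname{range}\Phi(\iden_d)=\C{d_{\out}}$ (equivalently, the columns of the matrices in $\mathcal{V}$ span $\C{d_{\out}}$), and your argument never invokes it.

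The paper's proof runs through a different object entirely. Assuming (for the second bullet) that $\Phi(\ketbra{\psi})$ is singular for \emph{every} pure state, one picks $0\neq\ket{\phi}\in\ker\Phi(\ketbra{\psi})$ and rewrites $0=\operatorname{Tr}(\Phi(\ketbra{\psi})\ketbra{\phi})=\operatorname{Tr}[\ketbra{\psi\bar{\phi}}J(\Phi^*)]$, so that $\ket{\psi\bar{\phi}}\in\ker J(\Phi^*)$. Running $\ket{\psi}$ over a basis of $\C{d}$ produces $d$ linearly independent vectors in $\ker J(\Phi^*)\subseteq\C{d}\otimes\C{d_{\out}}$, whence $d_{\env}=\operatorname{rank}J(\Phi)=\operatorname{rank}J(\Phi^*)\le d\,d_{\out}-d=d(d_{\out}-1)$, contradicting the hypothesis. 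Note that the closest salvageable version of your idea is not the normal-form computation but the dual of this count: if every element of $\mathcal{V}$ had rank $\le d_{\env}-1$, the combined column space of a basis $X_1,\dots,X_d$ of $\mathcal{V}$ would have dimension at most $d(d_{\env}-1)$, yet by minimality it equals $\operatorname{range}\Phi(\iden_d)=\C{d_{\out}}$ of dimension $d_{\out}>d(d_{\env}-1)$. Either way, the missing ingredient in your proposal is the use of $\operatorname{rank}\Phi(\iden_d)=d^*_{\out}(\Phi)$ (or, equivalently, $\operatorname{rank}J(\Phi^*)=d^*_{\env}(\Phi)$) to turn ``small rank everywhere'' into a contradiction with the size of $d_{\out}$ or $d_{\env}$ rather than with the size of $d$.
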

\begin{proof}
Let us begin with the second part. Assume that the channels are minimally defined; see Remark~\ref{remark:iso-cap}. In view of Corollary~\ref{corollary:vikesh}, our goal is to show that if $d_{\env}>d_{\out}$ and $d_{\operatorname{env}}>d(d_{\out}-1)$, then there exists $\ketbra{\psi}{\psi}\in \St{d}$ such that $\operatorname{rank}\Phi(\ketbra{\psi}{\psi})=d_{\out}$. Suppose that this is not the case. Then, for every $\ket{\psi}\in \C{d}$, $\Phi(\ketbra{\psi})\in \M{d_{\out}}$ must be singular $\implies \ker\Phi(\ketbra{\psi}{\psi})\neq\{0\}\implies$ there exists $0\neq\ket{\phi}\in \ker\Phi(\ketbra{\psi}{\psi})$ such that
\begin{align}
    0 = \operatorname{Tr}(\Phi_\psi \ketbra{\phi}{\phi} ) = \operatorname{Tr}\left[ \left(\Phi_\psi\otimes\ketbra{\Bar{\phi}}{\Bar{\phi}}\right) \ketbra{\Omega}{\Omega} \right] 
    = \operatorname{Tr} \left[\ketbra{\psi\Bar{\phi}}{\psi\Bar{\phi}} J(\Phi^*) \right],
\end{align}
where $\Phi_\psi = \Phi(\ketbra{\psi}{\psi})$, $\ket{\Omega}\in \C{d_{\out}}\otimes\C{d_{\out}}$ is defined as in Eq.~\eqref{eq:choi}, and $J(\Phi^*)= (\Phi^*\otimes \operatorname{id})\ketbra{\Omega}{\Omega}$. Hence, for every $\ket{\psi}\in\C{d}$, there exists $\ket{\phi}\in \C{d_{\out}}$ such that $\ket{\psi\phi}\in \ker J(\Phi^*)$. In particular, there are at least $d$ linearly independent vectors in $\ker J(\Phi^*)$, so that according to the rank-nullity theorem, $d_{\operatorname{env}}=\operatorname{rank}J(\Phi)=\operatorname{rank}J(\Phi^*)\leq d(d_{\out}-1)$, which contradicts our original assumption. An identical argument can be used to prove the first part as well.
\end{proof}

By exploiting an intriguing matrix-theoretic result (Lemma~\ref{lemma:dim-rank-bound}), we next derive an interesting dimensional inequality in Corollary~\ref{corollary-biginput}, which ensure the existence of the sought-after pure state (i.e. the state which gets mapped onto an output state with maximal rank) for any given channel. Unlike Corollary~\ref{corollary:vikesh2}, the inequality in Corollary~\ref{corollary-biginput} implies that all channels with sufficiently large {input} dimension must either have positive quantum capacity or positive complementary quantum capacity. To see how these two corollaries nicely complement each other, see Remark~\ref{remark:dim-ineq}.

\begin{lemma}\label{lemma:dim-rank-bound}
Let $\mathcal{S}\subseteq \M{d_1\times d_2}$ (with $d_1\leq d_2$) be a matrix subspace such that $\displaystyle\max_{X\in\mathcal{S}}\operatorname{rank}X = r$. Then, $\operatorname{dim}\mathcal{S}\leq rd_2$.
\end{lemma}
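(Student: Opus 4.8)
This is (a case of) Flanders' classical theorem on linear spaces of matrices of bounded rank; here is a plan. I would prove, by induction on $r$, the symmetric statement: \emph{if every element of a subspace $\mathcal S\subseteq\M{d_1\times d_2}$ has rank $\le r$, then $\dim\mathcal S\le r\max\{d_1,d_2\}$}; this contains the lemma since $d_1\le d_2$ gives $\max\{d_1,d_2\}=d_2$. By transposing $\mathcal S$ (which preserves $\dim$ and all ranks) we may assume $d_1\le d_2$, so the target is $\dim\mathcal S\le rd_2$; a ``change of basis'' will mean replacing $\mathcal S$ by $\{PXQ:X\in\mathcal S\}$ for fixed invertible $P,Q$, which is again harmless. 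The case $r=0$ is trivial. For $r\ge1$: if all elements have rank $\le r-1$ we are done by induction, so fix $A\in\mathcal S$ with $\operatorname{rank}A=r$. Two easy cases come first. If some element has rank $\min\{d_1,d_2\}=d_1$, then $r=d_1$ and $\dim\mathcal S\le d_1d_2=rd_2$. If the span of all column spaces (resp.\ all row spaces) of elements of $\mathcal S$ has dimension $\le r$, then after a change of basis every element has its nonzero entries confined to the first $r$ rows (resp.\ columns), and $\dim\mathcal S\le rd_2$ (resp.\ $\le rd_1\le rd_2$). So assume from now on that $r<d_1\le d_2$ and that both of these spans have dimension $>r$.

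After a change of basis, $A=\begin{pmatrix}\iden_r&0\\0&0\end{pmatrix}$ in block form with row-blocks of sizes $r,\,d_1-r$ and column-blocks of sizes $r,\,d_2-r$; write $B=\begin{pmatrix}B_1&B_2\\B_3&B_4\end{pmatrix}$ for a generic $B\in\mathcal S$. The key device is the Schur-complement rank identity for $A+tB$ with $t\in\mathbb C$ near $0$, where $\iden_r+tB_1$ is invertible: $\operatorname{rank}(A+tB)=r+\operatorname{rank}\big(tB_4-t^2B_3(\iden_r+tB_1)^{-1}B_2\big)$. As the left side is $\le r$, the bracket vanishes identically in $t$; dividing by $t$ and letting $t\to0$ forces $B_4=0$ for every $B\in\mathcal S$, and expanding $(\iden_r+tB_1)^{-1}$ in powers of $t$ then forces $B_3B_1^kB_2=0$ for all $k\ge0$. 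In particular $B_3B_2=0$; replacing $B$ by $B+C$ (with $B,C\in\mathcal S$) and subtracting gives the bilinear relation $B_3C_2+C_3B_2=0$ for all $B,C\in\mathcal S$. Let $\mathcal R:=\{B_3:B\in\mathcal S,\ B_1=0,\ B_2=0\}\subseteq\M{(d_1-r)\times r}$, which is a copy of the kernel of the linear map $\alpha:\mathcal S\to\M{r\times d_2}$, $B\mapsto(B_1\mid B_2)$; taking $C_2=0$ in the bilinear relation gives $RB_2=0$ for all $R\in\mathcal R$ and $B\in\mathcal S$, so every column of every $B_2$ lies in $N:=\bigcap_{R\in\mathcal R}\ker R\subseteq\C{r}$, of dimension $\le r-\rho$ where $\rho:=\max_{R\in\mathcal R}\operatorname{rank}R$. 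The mirror ``first $r$ columns'' construction yields a space $\mathcal Q\subseteq\M{r\times(d_2-r)}$ with $\sigma:=\max_{Q\in\mathcal Q}\operatorname{rank}Q$; taking $C_3=0$ shows the rows of every $B_3$ lie in a subspace of $\C{r}$ of dimension $\le r-\sigma$, whence $\rho+\sigma\le r$.

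Finally, count: $\dim\mathcal S=\dim(\operatorname{im}\alpha)+\dim\mathcal R$. Since $\operatorname{im}\alpha\subseteq\{(B_1\mid B_2):\operatorname{range}B_2\subseteq N\}$, we get $\dim(\operatorname{im}\alpha)\le r^2+(r-\rho)(d_2-r)$; and $\mathcal R\subseteq\M{(d_1-r)\times r}$ has upper rank $\rho\le r$, so — handling the boundary cases $\rho\in\{0,r\}$ directly (they give $\dim\mathcal S\le rd_2$ and $\dim\mathcal S\le rd_1$ respectively) and applying the inductive hypothesis to $\mathcal R$ otherwise — one obtains $\dim\mathcal S\le rd_2+\rho(d_1-d_2)\le rd_2$ as soon as $d_1-r\ge r$, while the mirror count gives the bound as soon as $d_2-r\ge r$. \textbf{The main obstacle is the remaining regime} $r<d_1\le d_2<2r$ with $1\le\rho,\sigma\le r-1$: there a one-sided dimension count only yields the too-weak $r(d_1+d_2-r)$, and one must use the relation $B_3C_2+C_3B_2=0$ symmetrically — shrinking $\mathcal R$ simultaneously by its upper rank $\rho$ \emph{and} by the row-space constraint it inherits from $\mathcal Q$ (and dually constraining the $B_2$-blocks via $N$), then balancing the two resulting estimates against $\rho+\sigma\le r$ — so that the arithmetic closes. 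Carrying out this balancing (the heart of Flanders' argument) completes the induction and gives $\dim\mathcal S\le rd_2$, as required.
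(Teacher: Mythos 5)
Everything you actually carry out is correct and is the standard opening of Flanders' inductive proof: the normalization $A=\begin{bmatrix}\iden_r&0\\0&0\end{bmatrix}$, the Schur--complement consequences $B_4=0$ and $B_3B_1^kB_2=0$, the polarized relation $B_3C_2+C_3B_2=0$, the two one-sided dimension counts, and the constraint $\rho+\sigma\le r$. But the argument stops precisely where the theorem becomes hard. In the regime $r<d_1\le d_2<2r$ with $1\le\rho,\sigma\le r-1$ you concede that the one-sided count only gives $r(d_1+d_2-r)$ and then assert that ``balancing the two resulting estimates against $\rho+\sigma\le r$'' closes the arithmetic, ``carrying out this balancing (the heart of Flanders' argument) completes the induction.'' That balancing is not a routine verification: it is the actual content of Flanders' theorem (and the part that required new ideas in Meshulam's later treatment), and a naive symmetrization of your two estimates does not visibly yield $rd_2$. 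As written, the decisive step is named as the main obstacle and then deferred, so the proposal is an outline rather than a proof.

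The gap is avoidable, because the lemma admits a short non-inductive proof --- the one the paper gives. After padding to square matrices (so $d=d_2$) and arranging, via invertible $P,Q$, that $I=\begin{bmatrix}\iden_r&0\\0&0\end{bmatrix}\in\mathcal S$, one exhibits the explicit subspace $\widetilde{\mathcal S}$ of all block matrices $\begin{bmatrix}0&B^\dagger\\B&A\end{bmatrix}$, of dimension $r(d-r)+(d-r)^2=d(d-r)$, and shows $\mathcal S\cap\widetilde{\mathcal S}=\{0\}$: for a nonzero $X$ in the intersection, the same Schur--complement identity you use gives $\operatorname{rank}(X+cI)=r+\operatorname{rank}\bigl(A-c^{-1}BB^\dagger\bigr)>r$ for a suitable $c\neq0$, contradicting the rank bound on $\mathcal S$. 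Then $\dim\mathcal S+\dim\widetilde{\mathcal S}=\dim(\mathcal S+\widetilde{\mathcal S})\le d^2$ forces $\dim\mathcal S\le d^2-d(d-r)=rd_2$. This deploys your key computation exactly once, against a complementary subspace, instead of as the seed of an induction whose hardest case remains open in your write-up; if you want a complete argument, that is the much shorter road.
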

\begin{proof}
See the Methods section~\ref{appen:lemmadim-rank-bound}.
\end{proof}

\begin{corollary}\label{corollary-biginput}
Let $\Phi:\M{d}\rightarrow \M{d_{\out}}$ and $\Phi_c:\M{d}\to \M{d_{\env}}$ be complementary channels. Then,
\begin{itemize}
    \item $\mathcal{Q}(\Phi)\geq\mathcal{Q}^{(1)}(\Phi)>0$ if $d^{*}_{\out}(\Phi)>d^{*}_{\operatorname{env}}(\Phi)$ and $d>d^{*}_{\out}(\Phi)(d^{*}_{\operatorname{env}}(\Phi)-1)$.
    \item $\mathcal{Q}(\Phi_c)\geq\mathcal{Q}^{(1)}(\Phi_c)>0$ if $d^{*}_{\operatorname{env}}(\Phi)>d^{*}_{\out}(\Phi)$ and $d>d^{*}_{\operatorname{env}}(\Phi)(d^{*}_{\out}(\Phi)-1)$.
\end{itemize}
\end{corollary}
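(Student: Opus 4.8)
The plan is to deduce Corollary~\ref{corollary-biginput} from Corollary~\ref{corollary:vikesh} in exactly the same spirit as the proof of Corollary~\ref{corollary:vikesh2}, but replacing the crude rank--nullity bound on $\ker J(\Phi^*)$ with the sharper estimate provided by Lemma~\ref{lemma:dim-rank-bound}. As before, I would prove only the second bullet; the first follows by swapping the roles of $\Phi$ and $\Phi_c$. First I would invoke Remark~\ref{remark:iso-cap} to assume without loss of generality that $\Phi:\M{d}\to\M{d_{\out}}$ and $\Phi_c:\M{d}\to\M{d_{\env}}$ are minimally defined, so that $d^*_{\out}(\Phi)=d_{\out}$ and $d^*_{\env}(\Phi)=d_{\env}$. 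Given $d_{\env}>d_{\out}$, by Corollary~\ref{corollary:vikesh} it suffices to produce a pure state $\ketbra{\psi}{\psi}\in\St{d}$ with $\operatorname{rank}\Phi(\ketbra{\psi}{\psi})=d_{\out}=\min\{d_{\out},d_{\env}\}$; equivalently, using the identification $\operatorname{rank}\Phi(\ketbra{\psi}{\psi})=\operatorname{rank}(\operatorname{vec}^{-1}V\ket{\psi})$ established just before Corollary~\ref{corollary:vikesh2}, it suffices to exhibit a matrix of full row rank $d_{\out}$ in the $d$-dimensional subspace $\mathcal{S}:=\operatorname{vec}^{-1}(\operatorname{range}V)\subseteq\M{d_{\out}\times d_{\env}}$.

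Next I would argue by contradiction: suppose every matrix in $\mathcal{S}$ has rank at most $d_{\out}-1$, i.e. $\max_{X\in\mathcal{S}}\operatorname{rank}X\leq d_{\out}-1$. Since $\mathcal{S}\subseteq\M{d_{\out}\times d_{\env}}$ with $d_{\out}\leq d_{\env}$, Lemma~\ref{lemma:dim-rank-bound} (applied with $d_1=d_{\out}$, $d_2=d_{\env}$, $r=d_{\out}-1$) gives $\dim\mathcal{S}\leq (d_{\out}-1)d_{\env}$. But $\dim\mathcal{S}=\dim\operatorname{range}V=d$, so $d\leq d_{\env}(d_{\out}-1)$, contradicting the hypothesis $d>d_{\env}(d_{\out}-1)=d^*_{\env}(\Phi)(d^*_{\out}(\Phi)-1)$. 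Hence some $X\in\mathcal{S}$ has rank exactly $d_{\out}$, the corresponding pure state has maximal output rank, and Corollary~\ref{corollary:vikesh} yields $\mathcal{Q}(\Phi_c)\geq\mathcal{Q}^{(1)}(\Phi_c)>0$. The first bullet is obtained by the same reasoning applied to the complementary channel $\Phi_c\in\mathcal{C}_\Phi$ (or, equivalently, by interchanging the labels ``out'' and ``env'' throughout), using that $\Phi$ and $\Phi_c$ are each minimally defined and mutually complementary.

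There is essentially no hard analytic step here: the entire content is packaged into Lemma~\ref{lemma:dim-rank-bound} and Corollary~\ref{corollary:vikesh}, both of which are already available. The only point requiring a little care is the bookkeeping of which subspace of which matrix space one applies Lemma~\ref{lemma:dim-rank-bound} to, and making sure the dimension inequality $d_1\leq d_2$ in that lemma is correctly oriented (it is, precisely because we are in the regime $d_{\out}<d_{\env}$, so that $\M{d_{\out}\times d_{\env}}$ has $d_{\out}$ rows and $d_{\env}$ columns and the ``short'' dimension is $d_{\out}$). A secondary subtlety worth flagging is that $\dim\operatorname{range}V=d$ exactly, which uses that $V$ is an isometry and hence injective; this is what lets us read off $\dim\mathcal{S}=d$ rather than merely an inequality. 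Once these identifications are in place the contradiction is immediate, so I expect the ``main obstacle'' to be purely expository — stating the reduction cleanly enough that the reader sees it is the same move as in Corollary~\ref{corollary:vikesh2} with a better combinatorial input.
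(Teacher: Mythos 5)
Your proposal is correct and follows essentially the same route as the paper: reduce to minimally defined channels, identify $\operatorname{range}V$ with a $d$-dimensional matrix subspace of $\M{d_{\out}\times d_{\env}}$, apply Lemma~\ref{lemma:dim-rank-bound} to extract a matrix of maximal rank, and feed the resulting pure state into Corollary~\ref{corollary:vikesh}. The only cosmetic differences are that you phrase the lemma application as a contradiction rather than directly, and you write out the second bullet where the paper writes out the first.
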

\begin{proof}
We only prove the first part here, and leave an analogous proof of the second part to the reader. As usual, we assume that the channels $\Phi:\M{d}\rightarrow \M{d_{\out}}$ and $\Phi_c:\M{d}\rightarrow \M{d_{\operatorname{env}}}$ are minimally defined, so that $d_{\out} = d^{*}_{\out}(\Phi)$ and $d_{\env} = d^{*}_{\operatorname{env}}(\Phi)$ (see Remark~\ref{remark:iso-cap}). Then, if $V:\C{d}\to \C{d_{\out}}\otimes \C{d_{\env}}$ is the associated Stinespring isometry, the matrix subspace $\mathcal{S}=\operatorname{vec}^{-1}(\operatorname{range}V)\subseteq \M{d_{\out}\times d_{\env}}$ has $\dim \mathcal{S}=d>d_{\out}(d_{\env}-1)$. Since $d_{\env}<d_{\out}$, Lemma~\ref{lemma:dim-rank-bound} tells us that there exists $X\in \mathcal{S}$ with $\operatorname{rank}X=d_{\env}$, so that for $\ket{\psi}\in \C{d}$ such that $V\ket{\psi}=\operatorname{vec}X$, we have  
\begin{equation}
    \operatorname{rank}\Phi(\ketbra{\psi}) = \operatorname{Schmidt\,rank}( V\ket{\psi}) = \operatorname{rank}X = d_{\env}.
\end{equation}
Hence, Corollary~\ref{corollary:vikesh} can be readily applied to obtain the desired result. 
\end{proof}

\begin{remark} \label{remark:dim-ineq}
Let us fix two positive integers $d_{\out}>d_{\env}$ and consider the set of all channels $\Phi$ with $d^*_{\out}(\Phi) = d_{\out}$ and $d^*_{\env}(\Phi)=d_{\env}$. Then, the input dimension of these channels can be any positive integer $d \leq d^*_{\out}(\Phi)d^*_{\env}(\Phi)$. Barring the $d=1$ case, Corollaries~\ref{corollary:vikesh2} and \ref{corollary-biginput} show that whenever $d$ lies within the two opposite extremes of this range: $d<d^*_{\out}(\Phi)/(d^*_{\env}(\Phi)-1)$ or $d>d^*_{\out}(\Phi)(d^*_{\env}(\Phi)-1)$, the corresponding channel $\Phi$ has positive quantum capacity.
\end{remark} 

By exploiting Theorem~\ref{theorem:main}, we now introduce a powerful necessary condition for a quantum channel to have zero capacity.

\begin{theorem}\label{theorem:zero-necessary}
Let $\Phi:\M{d}\rightarrow \M{d_{\out}}$ and $\Phi_c:\M{d}\to\M{d_{\env}}$ be complementary channels. For a pure input state $\ketbra{\psi}\in\St{d^{\otimes n}}$, denote the orthogonal projections onto $\operatorname{range}\Phi^{\otimes n}(\ketbra{\psi}{\psi})$ and $\operatorname{range}\Phi^{\otimes n}_c(\ketbra{\psi}{\psi})$ by $R_\psi$ and $R^c_\psi$, respectively. Then, if $\mathcal{Q}(\Phi)=0$, the following relation holds
\begin{equation}
    \forall n\in\mathbb{N}, \forall \ketbra{\psi}{\psi}\in \St{d^{\otimes n}}: \quad (\Phi_c^{\otimes n})^*(R^c_\psi)\leq (\Phi^{\otimes n})^*(R_\psi).
\end{equation}
\begin{proof}
Since $\mathcal{Q}(\Phi) = 0 \implies \mathcal{Q}^{(1)}(\Phi^{\otimes n})=0$ for all $n\in\mathbb{N}$, we can apply Theorem~\ref{theorem:main} to $\Phi^{\otimes n}$ for each $n$ to obtain the required implication. Let us spell out the details for the $n=1$ case. Here, Theorem~\ref{theorem:main} forces $\operatorname{Tr}(K_\psi \Phi(\sigma)) \leq \operatorname{Tr}(K^c_\psi \Phi_c(\sigma))$ for all $\ketbra{\psi}{\psi}, \sigma\in \St{d}$, where $K_\psi$ and $K^c_\psi$ are as defined in Theorem~\ref{theorem:main}. Hence, the following equivalences hold for each pure state $\ketbra{\psi}{\psi}\in \St{d}$:
\begin{equation}
\forall\sigma\in \St{d}: \quad \operatorname{Tr}[(\Phi^*_c(K^c_\psi)-\Phi^*(K_\psi))\sigma] \geq 0 \iff \Phi^*_c(K^c_\psi) \geq \Phi^*(K_\psi)\iff \Phi^*_c(R^c_\psi)\leq \Phi^*(R_\psi),    
\end{equation}
where the equivalence of the latter two operator inequalities in $\M{d}$ is a consequence of the fact that both $\Phi^*$ and $\Phi^*_c$ are unital, $R_\psi+K_\psi=\iden_{d_{\out}}$ and $R^c_\psi+K^c_\psi=\iden_{d_{\operatorname{env}}}$.
\end{proof}
\end{theorem}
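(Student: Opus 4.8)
The plan is to deduce the stated family of operator inequalities from the contrapositive of Theorem~\ref{theorem:main}, applied not to $\Phi$ itself but to each of its tensor powers. The first step is the reduction $\mathcal{Q}(\Phi)=0\implies\mathcal{Q}^{(1)}(\Phi^{\otimes n})=0$ for every $n\in\mathbb{N}$. This uses superadditivity rather than mere monotonicity: applying the inequality $\mathcal{Q}^{(1)}(\Psi^{\otimes k})\ge k\,\mathcal{Q}^{(1)}(\Psi)$ with $\Psi=\Phi^{\otimes n}$ gives $\mathcal{Q}^{(1)}(\Phi^{\otimes nk})\ge k\,\mathcal{Q}^{(1)}(\Phi^{\otimes n})$, and taking the limit along the subsequence $m=nk$ in the regularized formula~\eqref{eq:capacity} yields $\mathcal{Q}(\Phi)\ge \mathcal{Q}^{(1)}(\Phi^{\otimes n})/n\ge 0$, where the last inequality holds because $I_c(\ketbra{\psi}{\psi};\Phi^{\otimes n})=0$ forces $\mathcal{Q}^{(1)}\ge 0$; hence $\mathcal{Q}(\Phi)=0$ collapses each term to $0$. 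I would also note at this point that if $V$ is a Stinespring isometry for $\Phi$ then $V^{\otimes n}$, read with the output and environment legs grouped appropriately, is a Stinespring isometry realizing $\Phi^{\otimes n}$ with complement $\Phi_c^{\otimes n}$, so Theorem~\ref{theorem:main} may legitimately be invoked for the complementary pair $(\Phi^{\otimes n},\Phi_c^{\otimes n})$.

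Fixing $n$ and a pure state $\ketbra{\psi}{\psi}\in\St{d^{\otimes n}}$, I would then run the contrapositive of the first bullet of Theorem~\ref{theorem:main}: since $\mathcal{Q}^{(1)}(\Phi^{\otimes n})=0$ is not positive, there is no state $\sigma\in\St{d^{\otimes n}}$ with $\operatorname{Tr}(K_\psi\Phi^{\otimes n}(\sigma))>\operatorname{Tr}(K^c_\psi\Phi_c^{\otimes n}(\sigma))$, where $K_\psi$ and $K^c_\psi$ project onto $\ker\Phi^{\otimes n}(\ketbra{\psi}{\psi})$ and $\ker\Phi_c^{\otimes n}(\ketbra{\psi}{\psi})$; equivalently $\operatorname{Tr}(K_\psi\Phi^{\otimes n}(\sigma))\le\operatorname{Tr}(K^c_\psi\Phi_c^{\otimes n}(\sigma))$ for every state $\sigma$. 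Moving the projections through the channels via the defining adjoint relation $\operatorname{Tr}[\Lambda(X)Y]=\operatorname{Tr}[X\Lambda^*(Y)]$ rewrites this as $\operatorname{Tr}\!\big[\sigma\,\big((\Phi_c^{\otimes n})^*(K^c_\psi)-(\Phi^{\otimes n})^*(K_\psi)\big)\big]\ge 0$ for all $\sigma\in\St{d^{\otimes n}}$. The bracketed matrix is Hermitian (adjoints of channels are positive, hence Hermiticity preserving), and a Hermitian operator whose expectation is nonnegative on every state is positive semi-definite, so this is equivalent to $(\Phi^{\otimes n})^*(K_\psi)\le(\Phi_c^{\otimes n})^*(K^c_\psi)$.

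The last step converts kernels into ranges. Since $\Phi^{\otimes n}(\ketbra{\psi}{\psi})$ and $\Phi_c^{\otimes n}(\ketbra{\psi}{\psi})$ are positive semi-definite, their kernels and ranges are orthogonal complements, so $K_\psi=\iden_{d_{\out}^{\otimes n}}-R_\psi$ and $K^c_\psi=\iden_{d_{\env}^{\otimes n}}-R^c_\psi$. Because $\Phi^*$ and $\Phi_c^*$ are unital, so are $(\Phi^{\otimes n})^*=(\Phi^*)^{\otimes n}$ and $(\Phi_c^{\otimes n})^*$; applying them to these identities gives $(\Phi^{\otimes n})^*(K_\psi)=\iden_{d^{\otimes n}}-(\Phi^{\otimes n})^*(R_\psi)$ and likewise for the complement. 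Substituting into the inequality from the previous step and cancelling $\iden_{d^{\otimes n}}$ produces exactly $(\Phi_c^{\otimes n})^*(R^c_\psi)\le(\Phi^{\otimes n})^*(R_\psi)$; since $n$ and $\ketbra{\psi}{\psi}$ were arbitrary, this is the assertion. The whole argument is essentially bookkeeping on top of Theorem~\ref{theorem:main}; the only points that deserve care — and where I would expect the only real friction — are justifying $\mathcal{Q}(\Phi)=0\Rightarrow\mathcal{Q}^{(1)}(\Phi^{\otimes n})=0$ from superadditivity, and confirming that $\Phi_c^{\otimes n}$ is the complementary channel that should be plugged into Theorem~\ref{theorem:main}.
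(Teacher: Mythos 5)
Your proposal is correct and follows essentially the same route as the paper: reduce to $\mathcal{Q}^{(1)}(\Phi^{\otimes n})=0$, apply the contrapositive of Theorem~\ref{theorem:main} to the complementary pair $(\Phi^{\otimes n},\Phi_c^{\otimes n})$, pass the trace inequality through the adjoints, and convert kernel projections to range projections via unitality. The only difference is that you spell out the superadditivity argument for $\mathcal{Q}(\Phi)=0\implies\mathcal{Q}^{(1)}(\Phi^{\otimes n})=0$ and the tensor-power Stinespring dilation, both of which the paper leaves implicit.
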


As a sanity check, the next lemma shows that anti-degradable and transpose anti-degradable channels (which are known to have zero capacity, see Section~\ref{subsec:def}) satisfy the necessary condition stated in Theorem~\ref{theorem:zero-necessary}.

\begin{lemma}\label{lemma:anti-degaradable}
Let $\Phi:\M{d}\rightarrow \M{d_{\out}}$ be an anti-degradable or transpose anti-degradable channel and $\Phi_c:\M{d}\to\M{d_{\env}}$ be complementary to $\Phi$. Then,
\begin{equation}
    \forall n\in\mathbb{N}, \forall \ketbra{\psi}{\psi}\in \St{d^{\otimes n}}: \quad (\Phi_c^{\otimes n})^*(R^c_\psi)\leq (\Phi^{\otimes n})^*(R_\psi),
\end{equation}
where $R_\psi$ and $R^c_\psi$ are as defined in Corollary~\ref{theorem:zero-necessary}.
\end{lemma}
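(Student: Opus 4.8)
The plan is to show directly that both anti-degradable and transpose anti-degradable channels satisfy the stated operator inequality, and to reduce the $n$-fold case to a single-copy statement by observing that the relevant classes are stable under tensor products. First I would record that if $\Phi$ is anti-degradable (resp.\ transpose anti-degradable), then so is $\Phi^{\otimes n}$ for every $n\in\mathbb{N}$: if $\Phi=\mathcal{N}\circ\Phi_c$ for a channel $\mathcal{N}:\M{d_{\env}}\to\M{d_{\out}}$, then $\Phi^{\otimes n}=\mathcal{N}^{\otimes n}\circ\Phi_c^{\otimes n}$, and $\Phi_c^{\otimes n}$ is complementary to $\Phi^{\otimes n}$ (this last point is exactly the second bullet of Remark~\ref{remark:iso-exten} applied to the Stinespring isometry $V^{\otimes n}$); the transpose case is identical using $\top^{\otimes n}\circ\Phi^{\otimes n}=\mathcal{N}^{\otimes n}\circ\Phi_c^{\otimes n}$ and the fact that $\top^{\otimes n}$ equals the transpose map on $\M{d_{\out}^{\otimes n}}$. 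Hence it suffices to prove, for an arbitrary anti-degradable or transpose anti-degradable pair $\Phi,\Phi_c$ and an arbitrary pure state $\ketbra{\psi}\in\St{d}$, that $\Phi_c^*(R^c_\psi)\le \Phi^*(R_\psi)$, where $R_\psi,R^c_\psi$ project onto $\operatorname{range}\Phi(\ketbra{\psi})$, $\operatorname{range}\Phi_c(\ketbra{\psi})$.

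For the core single-copy estimate I would argue as in the proof of Theorem~\ref{theorem:zero-necessary}: using unitality of $\Phi^*,\Phi_c^*$ together with $R_\psi+K_\psi=\iden_{d_{\out}}$ and $R^c_\psi+K^c_\psi=\iden_{d_{\env}}$, the desired inequality $\Phi_c^*(R^c_\psi)\le\Phi^*(R_\psi)$ is equivalent to $\Phi^*(K_\psi)\le\Phi_c^*(K^c_\psi)$, i.e.\ to $\operatorname{Tr}(K_\psi\Phi(\sigma))\le\operatorname{Tr}(K^c_\psi\Phi_c(\sigma))$ for all $\sigma\in\St{d}$. In the anti-degradable case, write $\Phi=\mathcal{N}\circ\Phi_c$. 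Then $\Phi(\ketbra{\psi})=\mathcal{N}(\Phi_c(\ketbra{\psi}))$, so $\operatorname{range}\Phi(\ketbra{\psi})\subseteq\mathcal{N}(\operatorname{range}\Phi_c(\ketbra{\psi}))$, hence $K^c_\psi\,\omega\,K^c_\psi = \omega$ for $\omega=\Phi_c(\ketbra{\psi})$ forces $\mathcal{N}^*(K_\psi)$ to annihilate $\operatorname{supp}\omega$; more cleanly, since $K_\psi$ kills $\operatorname{range}\mathcal{N}(\Phi_c(\ketbra\psi))$ we get $\operatorname{Tr}(K_\psi\Phi(\sigma))=\operatorname{Tr}(\mathcal{N}^*(K_\psi)\Phi_c(\sigma))$ and $\mathcal{N}^*(K_\psi)$ is supported on $K^c_\psi$ (because $\mathcal{N}$ maps $\operatorname{range}\Phi_c(\ketbra\psi)$ into $\operatorname{range}\Phi(\ketbra\psi)$, so $\mathcal{N}^*(K_\psi)$, a positive contraction, satisfies $K^c_\psi\mathcal{N}^*(K_\psi)K^c_\psi=\mathcal{N}^*(K_\psi)$, equivalently $\mathcal{N}^*(K_\psi)\le K^c_\psi$ after using $\mathcal{N}^*(\iden)=\iden$); therefore $\operatorname{Tr}(K_\psi\Phi(\sigma))=\operatorname{Tr}(\mathcal{N}^*(K_\psi)\Phi_c(\sigma))\le\operatorname{Tr}(K^c_\psi\Phi_c(\sigma))$, as required. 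The transpose anti-degradable case is handled the same way after noting that $\top$ is a positive, unital, trace-preserving bijection on $\M{d_{\out}}$ that carries $R_\psi$ to the projection onto $\operatorname{range}(\top\circ\Phi)(\ketbra\psi)$ and has $\top^*=\top$; writing $\top\circ\Phi=\mathcal{N}\circ\Phi_c$ and running the identical range/adjoint argument gives $\operatorname{Tr}(K_\psi\Phi(\sigma))=\operatorname{Tr}((\top K_\psi)(\top\Phi(\sigma)))=\operatorname{Tr}(\mathcal{N}^*(\top K_\psi)\Phi_c(\sigma))\le\operatorname{Tr}(K^c_\psi\Phi_c(\sigma))$, since $\top K_\psi$ is the kernel projection of $(\top\circ\Phi)(\ketbra\psi)$ and $\mathcal{N}$ maps its range into that kernel's orthocomplement's image appropriately.

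I expect the main obstacle to be the bookkeeping in the step $\mathcal{N}^*(K_\psi)\le K^c_\psi$: one must argue carefully that a channel $\mathcal{N}$ sending a fixed state $\omega=\Phi_c(\ketbra\psi)$ to $\mathcal{N}(\omega)=\Phi(\ketbra\psi)$ necessarily sends all of $\operatorname{supp}\omega$ into $\operatorname{supp}\mathcal{N}(\omega)$ (a standard support-preservation fact for CPTP maps applied to a single state, since $\omega$ lies in the interior of states supported on $\operatorname{supp}\omega$), and then dualize this inclusion to the adjoint via unitality. Once this support lemma is in hand the rest is routine, and the $n$-copy statement follows verbatim from the single-copy one applied to the (transpose) anti-degradable channel $\Phi^{\otimes n}$.
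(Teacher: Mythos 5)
Your proof is correct and follows essentially the same route as the paper's: reduce to $n=1$ by tensor-stability of (transpose) anti-degradability, use the support-preservation property of the anti-degrading map $\mathcal{N}$ (the paper's Lemma~\ref{lemma:3}) together with the trace/positivity trick to place $\operatorname{range}\mathcal{N}^*(K_\psi)$ inside $\operatorname{range}K^c_\psi$, and conclude $\mathcal{N}^*(K_\psi)\le K^c_\psi$ from unitality/contractivity before composing with $\Phi_c^*$. The only blemish is the stray clause ``$K^c_\psi\,\omega\,K^c_\psi=\omega$'' (it should involve $R^c_\psi$, since $K^c_\psi$ annihilates $\omega=\Phi_c(\ketbra{\psi})$), but your subsequent ``more cleanly'' argument is the one that carries the proof and it is sound.
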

\begin{proof}
We prove the result for anti-degradable channels and leave an almost identical proof of the transposed version to the reader. It suffices to obtain the result for $n=1$, since if $\Phi$ is anti-degradable, then $\Phi^{\otimes n}$ is also anti-degradable for all $n\in\mathbb{N}$. To begin with, note that the anti-degradability of $\Phi:\M{d}\rightarrow\M{d_{\out}}$ implies that there exists a channel $\mathcal{N}:\M{d_{\operatorname{env}}}\rightarrow\M{d_{\out}}$ such that $\Phi=\mathcal{N}\circ \Phi_c$ and $\Phi^*=\Phi^*_c\circ \mathcal{N}^*$. Hence, for every pure state $\ketbra{\psi}{\psi}\in \St{d}$, we can exploit Lemma~\ref{lemma:3} to obtain the following sequence of implications: 
\begin{align}
    \operatorname{range}\mathcal{N}(R^c_\psi)=\operatorname{range}R_\psi \implies \mathcal{N}(R^c_\psi)K_\psi = 0 &\implies \operatorname{Tr}(\mathcal{N}(R^c_\psi)K_\psi) = 0 = \operatorname{Tr}(R^c_\psi \mathcal{N}^*(K_\psi)) \nonumber \\  
    &\implies R^c_\psi\mathcal{N}^*(K_\psi)=0 \nonumber \\
    &\implies \operatorname{range}\mathcal{N}^*(K_\psi) \subseteq \ker R^c_\psi = \operatorname{range} K^c_\psi. \label{eq:anti}
\end{align}
Observe that while obtaining the implications above, we used the fact that for $A,B\geq 0$, $AB=0\iff \operatorname{Tr}(AB)=0$. Now, since $\mathcal{N}^*$ is unital and completely positive (being the adjoint of a quantum channel), it is contractive in the operator norm (see \cite[Theorem 2.3.7]{bhatia2015positive}), which implies that $\Vert \mathcal{N}^*(K_\psi)\Vert \leq \Vert K_\psi \Vert \leq 1$. Combining Eq.~\eqref{eq:anti} with the previous result, we obtain $\mathcal{N}^*(K_\psi)\leq K^c_\psi \implies \Phi^*(K_\psi) = \Phi^*_c(\mathcal{N}^*(K_\psi)) \leq \Phi^*_c(K^c_\psi) \implies \Phi_c^*(R^c_\psi)\leq \Phi^*(R_\psi) $.
\end{proof}

\begin{remark}
It would be desirable to obtain a result analogous to Lemma~\ref{lemma:anti-degaradable} for PPT channels, which also have zero quantum capacity. Observe that Lemma~\ref{lemma:anti-degaradable} already contains the desired result for transpose anti-degradable channels, which form a subclass of PPT channels. We should emphasize, however, that the relationship between a (transpose) anti-degradable channel and its complement -- which is crucially exploited in the proof of Lemma~\ref{lemma:anti-degaradable} -- does not hold for PPT channels in general. Hence, the aforementioned proof would not work in this case. 
\end{remark}

In the following sections, we apply our main results to detect positive quantum capacities of several different families of quantum channels. We will also give a miscellaneous application of our main result by extending certain existing structure theorems for the class of degradable quantum channels to the strictly larger class of more capable quantum channels.

\section{Applications}

\subsection{Depolarizing and transpose depolarizing channels}\label{subsec:depol}

The {depolarizing channels} $\mathcal{D}_p:\M{d}\to \M{d}$ and the {transpose depolarizing channels} $\mathcal{D}^\top_q:\M{d}\to \M{d}$ form one-parameter families of completely positive and trace-preserving linear maps within their respective parameter ranges $p\in [0,\frac{d^2}{d^2-1}]$ and $q\in [\frac{d}{d+1}, \frac{d}{d-1}]$, and are defined as follows:
\begin{equation}
    \forall X\in \M{d}: \quad \mathcal{D}_p(X) = (1-p)X + p\operatorname{Tr}(X) \frac{\iden_d}{d} \quad\text{and} \quad \mathcal{D}^\top_q(X) = (1-q)X^\top + q\operatorname{Tr}(X) \frac{\iden_d}{d}.
\end{equation}

These families of channels are arguably two of the most prominent noise models in quantum information theory. While considerable effort has gone into computing the quantum capacity of depolarizing channels \cite{Smith2008depolarizing, Sutter2017depolarizing, Leditzky2018depolarizing}, an analysis of their complementary quantum capacities has largely evaded the spotlight. Recently, by carefully scrutinizing the coherent information of the {qubit} depolarizing channels $\mathcal{D}_p:\M{2}\to \M{2}$, Watrous and Leung have shown that the these channels have positive complementary quantum capacities for all $p>0$ \cite[Theorem 1]{Leung2017complementary}. A similar analysis has also been performed for the {qubit} $\mathcal{D}^\top_q:\M{2}\to \M{2}$ and {qutrit} $\mathcal{D}^\top_q:\M{3}\to \M{3}$ transpose-depolarizing channels to establish positivity of their complementary quantum capacities for certain values of the parameter $q$, see \cite[Section 3.1]{Bradler2015depolarizing}. We now substantially generalize the above results to obtain positivity of the complementary quantum capacities of depolarizing and transpose-depolarizing channels in arbitrary dimensions while also providing much simpler proofs. 

\begin{theorem}\label{theorem:depol-transp}
Let $\mathcal{D}_p:\M{d}\to \M{d}$ and $\mathcal{D}^\top_q:\M{d}\to \M{d}$ be the qudit depolarizing and transpose-depolarizing channels with $p\in [0,\frac{d^2}{d^2-1}]$ and $q\in [\frac{d}{d+1},\frac{d}{d-1}]$. Then,
\begin{itemize}
    \item if $p>0$, any complementary channel $\mathcal{D}_p^c\in \mathcal{C}_{\mathcal{D}_p}$ has positive quantum capacity.
    \item if $q<\frac{d}{d-1}$, any complementary channel $(\mathcal{D}^\top_p)^c\in \mathcal{C}_{\mathcal{D}^\top_p}$ has positive quantum capacity. 
\end{itemize}

\end{theorem}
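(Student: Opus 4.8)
The plan is to apply Theorem~\ref{theorem:main} with a judicious choice of the pure input state $\ketbra{\psi}{\psi}$ and the mixed perturbing state $\sigma$, exploiting the high symmetry of the depolarizing and transpose-depolarizing channels. First I would fix a Stinespring dilation of $\mathcal{D}_p$ (resp.\ $\mathcal{D}^\top_q$) and compute the Choi rank, hence $d^*_{\env}$ and $d^*_{\out}$; for the depolarizing channel $\mathcal{D}_p$ with $p>0$ the Choi matrix is $(1-p)\ketbra{\Omega}{\Omega}+p\,\iden_d\otimes\iden_d/d$, which has full rank $d^2$, so $d^*_{\env}(\mathcal{D}_p)=d^2$ while $d^*_{\out}(\mathcal{D}_p)=d$ (the channel is unital). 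Thus the environment dimension strictly exceeds the output dimension, which already suggests that the \emph{complementary} capacity should be the positive one and that we are in the second bullet of Theorem~\ref{theorem:main}, i.e.\ we want $\operatorname{Tr}(K_\psi\Phi(\sigma))<\operatorname{Tr}(K^c_\psi\Phi_c(\sigma))$.

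The natural choice is $\ket{\psi}=\ket{0}$ (any pure state works by unitary covariance) and $\sigma$ a maximally mixed state, or more cleverly $\sigma=\iden_d/d$. With $\ket{\psi}=\ket{0}$ we have $\mathcal{D}_p(\ketbra{0}{0})=(1-p)\ketbra{0}{0}+p\,\iden_d/d$, which for $0<p\le \frac{d^2}{d^2-1}$ has \emph{full rank} $d$ in the output space, so $K_\psi=0$ and $\operatorname{Tr}(K_\psi\mathcal{D}_p(\sigma))=0$. It then suffices to show $\operatorname{Tr}(K^c_\psi\mathcal{D}_p^c(\sigma))>0$ for some state $\sigma$, i.e.\ that $K^c_\psi\neq 0$, i.e.\ that $\mathcal{D}_p^c(\ketbra{0}{0})$ is \emph{not} of full rank $d^2$ in the environment. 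By Lemma~\ref{lemma:2} the nonzero eigenvalues of $\mathcal{D}_p^c(\ketbra{0}{0})$ coincide with those of $\mathcal{D}_p(\ketbra{0}{0})$, so $\mathcal{D}_p^c(\ketbra{0}{0})$ has exactly $d$ nonzero eigenvalues; since $d^2>d$ for $d\ge 2$, its kernel is nontrivial and $K^c_\psi\neq 0$. Since $\mathcal{D}_p^c(\iden_d)$ has full rank $d^2$ by Lemma~\ref{lemma:minimal} (as $d^*_{\env}=d^2$), taking $\sigma=\iden_d/d$ gives $\operatorname{Tr}(K^c_\psi\mathcal{D}_p^c(\sigma))>0=\operatorname{Tr}(K_\psi\mathcal{D}_p(\sigma))$, and Theorem~\ref{theorem:main} yields $\mathcal{Q}(\mathcal{D}_p^c)\ge\mathcal{Q}^{(1)}(\mathcal{D}_p^c)>0$. (This is essentially the content of Corollary~\ref{corollary:vikesh}, since $\ket{0}$ is mapped to a maximal-rank output and $d^*_{\out}<d^*_{\env}$.)

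For the transpose-depolarizing channel $\mathcal{D}^\top_q$ the same strategy should work provided $q<\frac{d}{d-1}$; here the key computation is that $\mathcal{D}^\top_q(\ketbra{0}{0})=(1-q)\ketbra{0}{0}+q\,\iden_d/d$, which (since $\ketbra{0}{0}^\top=\ketbra{0}{0}$) again has full rank $d$ in the output precisely when $q<\frac{d}{d-1}$ — this is the reason for the restriction, as at $q=\frac{d}{d-1}$ (the Werner-Holevo point) the eigenvalue $1-q+q/d$ can vanish or the structure degenerates. One must also verify that the Choi rank of $\mathcal{D}^\top_q$ is strictly larger than $d$ in the relevant range, so that $K^c_\psi\neq 0$ again. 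The main obstacle I anticipate is precisely this bookkeeping of Choi ranks for $\mathcal{D}^\top_q$ across the parameter interval $[\frac{d}{d+1},\frac{d}{d-1})$ and confirming that $\operatorname{rank}\mathcal{D}^\top_q(\ketbra{0}{0})=d$ is genuinely maximal there; once the rank count is settled, the argument is a direct application of Corollary~\ref{corollary:vikesh} or Theorem~\ref{theorem:main} exactly as above, with $\ket{\psi}=\ket{0}$ and $\sigma=\iden_d/d$.
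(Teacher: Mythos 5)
Your strategy is sound and does prove the theorem, but it takes a slightly different route from the paper. You exhibit the explicit pure state $\ket{0}$, check that $\mathcal{D}_p(\ketbra{0}{0})$ and $\mathcal{D}^\top_q(\ketbra{0}{0})$ have full output rank $d$ throughout the stated parameter ranges (correct: the smallest eigenvalue $1-p\tfrac{d-1}{d}$, resp.\ $1-q\tfrac{d-1}{d}$, vanishes only at $p=\tfrac{d}{d-1}$, resp.\ $q=\tfrac{d}{d-1}$, which are excluded), and then invoke Corollary~\ref{corollary:vikesh} uniformly. The paper instead avoids exhibiting any pure state for the bulk of the parameter range: it verifies the dimensional inequalities $d^*_{\env}>d^*_{\out}$ and $d^*_{\env}>d(d^*_{\out}-1)$ and applies Corollary~\ref{corollary:vikesh2}, resorting to Corollary~\ref{corollary:vikesh} only at $q=\tfrac{d}{d+1}$ (where $d^*_{\env}=d(d+1)/2$ fails the inequality $>d(d-1)$ for $d\geq 3$). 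Your approach is arguably more uniform and elementary, since for these highly symmetric channels the maximal-rank pure state is free. Two small points to tidy up. First, your claim that $J(\mathcal{D}_p)$ has full rank $d^2$ for all $p>0$ fails at the endpoint $p=\tfrac{d^2}{d^2-1}$, where the eigenvalue $\tfrac{p}{d}+(1-p)d$ on the maximally entangled vector vanishes; the rank there drops below $d^2$ but remains $>d$, which is all your argument actually needs ($K^c_\psi\neq 0$ and $\operatorname{rank}\mathcal{D}_p^c(\iden_d)=d^*_{\env}>d$). Second, the Choi-rank bookkeeping for $\mathcal{D}^\top_q$ that you flag as the remaining obstacle is routine: $J(\mathcal{D}^\top_q)$ is a Werner matrix with eigenvalues $\tfrac{q}{d}+(1-q)$ on the symmetric and $\tfrac{q}{d}-(1-q)$ on the antisymmetric subspace, giving rank $d(d+1)/2$ at $q=\tfrac{d}{d+1}$ and $d^2$ on the open interval, both of which exceed $d=d^*_{\out}$ for $d\geq 2$, so Corollary~\ref{corollary:vikesh} indeed delivers positivity of the complementary capacity exactly as you outline.
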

\begin{proof}
Firstly, observe that for all allowed $p,q$, both $\mathcal{D}_p$ and $\mathcal{D}^\top_q$ are unital so that $d^*_{\out}(\mathcal{D}_p) = d^*_{\out}(\mathcal{D}^\top_q) =d$, see Remark~\ref{remark:minimal}. Moreover, the Choi matrices of these channels can be written as
\begin{equation}
J(\mathcal{D}_p) = \frac{p}{d} (\iden_d \otimes \iden_d) + (1-p) \sum_{i,j=0}^{d-1} \ketbra{ii}{jj} \quad\text{and}\quad J(\mathcal{D}^\top_q) = \frac{q}{d} (\iden_d \otimes \iden_d) + (1-q) \sum_{i,j=0}^{d-1} \ketbra{ij}{ji}.  
\end{equation}
These are, respectively, the well-known Isotropic \cite{Horodecki1999iso} and Werner \cite{Werner1989} matrices, which admit nice block diagonal decompositions so that their ranks can be easily computed (see \cite[Example 3.3 and Corollary 4.2]{singh2020diagonal} or Lemma~\ref{lemma:(C)DUC-min}):
\begin{align}
    d^*_{\env}(\mathcal{D}_p) = \operatorname{rank} J(\mathcal{D}_p) &=
    \begin{cases}
    1 &\quad \text{if }\,\, p=0 \\
    d^2 &\quad \text{if }\,\, 0<p<\frac{d^2}{d^2-1} \\
    \geq d^2-d+1 &\quad \text{if }\,\, p=\frac{d^2}{d^2-1}
    \end{cases} \\
    d^*_{\env}(\mathcal{D}^\top_q) = \operatorname{rank} J(\mathcal{D}^\top_q) &=
    \begin{cases}
    d(d+1)/2 &\quad\quad \text{if }\,\, q=\frac{d}{d+1} \\
    d^2 &\quad\quad \text{if }\,\, \frac{d}{d+1}<q<\frac{d}{d-1} \\
    d(d-1)/2 &\quad\quad \text{if }\,\, q=\frac{d}{d-1}
    \end{cases}
\end{align}
Now, since for all $p>0$ and $\frac{d}{d+1}<q<\frac{d}{d-1}$, we have $d^{*}_{\env}(\Phi)> d^{*}_{\out}(\Phi)$ and $d^{*}_{\env}(\Phi)>d[d^{*}_{\out}(\Phi)-1]$ for $\Phi \in \{\mathcal{D}_p, \mathcal{D}^\top_q\}$, the desired results hold because of Corollary~\ref{corollary:vikesh2}. For $q=\frac{d}{d+1}$, it is easy to see that $\mathcal{D}^\top_q$ maps any pure state to a full rank state on the output, so that Corollary~\ref{corollary:vikesh} can be applied to obtain positivity of the complementary quantum capacity.
\end{proof}

\begin{remark}
The depolarizing- and transpose-depolarizing channels have the following covariance property for all $U\in\mathcal{U}_d$ and $X\in\M{d}$:
\begin{equation}
 \mathcal{D}_p(UXU^\dagger) = U\mathcal{D}_p(X)U^\dagger \quad\text{and}\quad \mathcal{D}^\top_q(UXU^\dagger) = \overbar{U} \mathcal{D}^\top_q(X)U^\top,
\end{equation}
where $\mathcal{U}_d$ is the unitary group in $\M{d}$. Upon relaxing the above covariance condition to hold only for the smaller group of diagonal unitary matrices $\mathcal{DU}_d\subset \mathcal{U}_d$, we obtain substantially bigger classes of channels which are defined by $\sim d^2$ real parameters (as opposed to the single real parameters $p$ and $q$ in the depolarizing and transpose-depolarizing families). Theorem~\ref{theorem:CDUC-cap} serves as an analogue of Theorem~\ref{theorem:depol-transp} for these larger classes of channels.
\end{remark}

Let us now analyse the depolarizing- and transpose-depolarizing channels present at the extreme ends of their respective parameter ranges. For $p=0$, $\mathcal{D}_p=\operatorname{id}:\M{d}\to\M{d}$ is the identity channel and $d^{*}_{\env}(\mathcal{D}_p) = 1$ for all $d\in\mathbb{N}$, so that any complementary channel has zero quantum capacity.  The $q=\frac{d}{d-1}$ case of the transpose-depolarizing channel $\mathcal{D}^\top_q$ is much more subtle. For $d=2$, the channel $\mathcal{D}^\top_2:\M{2}\to \M{2}$ has $d^*_{\env}(\mathcal{D}^\top_2)=1$ so that any complementary channel has zero capacity. When $d=3$, it is known that $\mathcal{D}^\top_{3/2}:\M{3}\to \M{3}$ is both degradable and anti-degradable  \cite{Cubitt2008degradable} (in fact, there exists a complementary channel $[\mathcal{D}^\top_{3/2}]^c$ such that $[\mathcal{D}^\top_{3/2}]^c=\mathcal{D}^\top_{3/2}$) and hence any complementary channel again has zero capacity. For all $d\geq 4$, Theorem~\ref{theorem:werner-holevo} below settles the question. It is perhaps worthwhile to point out that the transpose-depolarizing channel at the extreme parameter value $q=\frac{d}{d-1}$ is more widely known as the {Werner-Holevo} channel \cite{Werner-Holevo}.

\begin{equation}\label{eq:werner-holevo}
    \forall X\in \M{d}: \quad  \Phi_{\rm{WH}}(X) \equiv \mathcal{D}^\top_{\frac{d}{d-1}}(X) = \frac{\operatorname{Tr}(X)\iden_d-X^\top}{d-1}.
\end{equation}

\begin{theorem}
\label{theorem:werner-holevo}
Any complementary channel to the Werner-Holevo channel $\Phi_{\rm{WH}}:\M{d}\to \M{d}$ has positive quantum capacity when $d\geq 4$.
\end{theorem}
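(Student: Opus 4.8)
The plan is to invoke the second bullet of Theorem~\ref{theorem:main} directly. Note that Corollary~\ref{corollary:vikesh} is \emph{not} available here: for any pure state we have $\Phi_{\rm WH}(\ketbra{\psi}{\psi}) = \tfrac{1}{d-1}(\iden_d - \ketbra{\bar\psi}{\bar\psi})$, which has rank $d-1 < d = \min\{d^*_{\out},d^*_{\env}\}$ (recall $d^*_{\out}(\Phi_{\rm WH})=d$ since $\Phi_{\rm WH}$ is unital, and $d^*_{\env}(\Phi_{\rm WH})=\tfrac{d(d-1)}{2}$ from the proof of Theorem~\ref{theorem:depol-transp}), so no pure input is mapped to a maximal-rank output — this is exactly the situation anticipated in Remark~\ref{remark:vikesh}. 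The first concrete step is therefore to fix a convenient Stinespring dilation. Using the anti-Hermitian matrix units $L_{ij} := \ketbra{i}{j} - \ketbra{j}{i}$ for $0 \le i < j \le d-1$, a short computation gives the identity $\sum_{i<j} L_{ij} X L_{ij}^\dagger = \operatorname{Tr}(X)\iden_d - X^\top$, so that $\Phi_{\rm WH}(X) = \tfrac{1}{d-1}\sum_{i<j} L_{ij} X L_{ij}^\dagger$ is a Kraus decomposition with the (linearly independent) operators $\tfrac{1}{\sqrt{d-1}}L_{ij}$. The corresponding isometry $V\ket{\psi} = \tfrac{1}{\sqrt{d-1}}\sum_{i<j}(L_{ij}\ket{\psi})\otimes\ket{(ij)}$ into $\C{d}\otimes\C{d(d-1)/2}$ then yields a complementary channel $\Phi_c = \operatorname{Tr}_{\out}(V(\cdot)V^\dagger)$ with matrix entries $\bra{(ij)}\Phi_c(X)\ket{(kl)} = \tfrac{1}{d-1}\operatorname{Tr}(L_{kl}^\dagger L_{ij}X)$.

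The second step is to pick $\ket{\psi} = \ket{0}$ and read off the two projections. Since $\Phi_{\rm WH}(\ketbra{0}{0}) = \tfrac{1}{d-1}(\iden_d - \ketbra{0}{0})$, we get $K_\psi = \ketbra{0}{0}$. On the other side, $V\ket{0} = -\tfrac{1}{\sqrt{d-1}}\sum_{j=1}^{d-1}\ket{j}\otimes\ket{(0j)}$, so $\Phi_c(\ketbra{0}{0}) = \tfrac{1}{d-1}\sum_{j=1}^{d-1}\ketbra{(0j)}{(0j)}$, and hence $K^c_\psi$ is the orthogonal projection onto $\operatorname{span}\{\ket{(ij)} : 1 \le i < j \le d-1\}$, of rank $\tfrac{(d-1)(d-2)}{2}$. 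Then, choosing $\sigma = \iden_d/d$ and using $L_{ij}^\dagger L_{ij} = \ketbra{i}{i} + \ketbra{j}{j}$, one computes $\operatorname{Tr}(K_\psi \Phi_{\rm WH}(\sigma)) = \tfrac{1}{d}$ and $\operatorname{Tr}(K^c_\psi \Phi_c(\sigma)) = \tfrac{1}{d-1}\sum_{1 \le i < j \le d-1}(\sigma_{ii}+\sigma_{jj}) = \tfrac{d-2}{d}$. (More generally, for an arbitrary $\sigma$ the same bookkeeping gives $\operatorname{Tr}(K_\psi\Phi_{\rm WH}(\sigma)) = \tfrac{1-\sigma_{00}}{d-1}$ and $\operatorname{Tr}(K^c_\psi\Phi_c(\sigma)) = \tfrac{d-2}{d-1}(1-\sigma_{00})$, so the inequality below amounts to $(d-3)(1-\sigma_{00})>0$.)

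The final step is the comparison: for $d \ge 4$ we have $\tfrac{1}{d} < \tfrac{d-2}{d}$, i.e.\ $\operatorname{Tr}(K_\psi \Phi_{\rm WH}(\sigma)) < \operatorname{Tr}(K^c_\psi \Phi_c(\sigma))$, so the second bullet of Theorem~\ref{theorem:main} gives $\mathcal{Q}(\Phi_c) \ge \mathcal{Q}^{(1)}(\Phi_c) > 0$; by Remark~\ref{remark:iso-cap} this holds for every complementary channel of $\Phi_{\rm WH}$. I do not expect any genuine obstacle: the only point requiring care is the explicit identification of the complementary channel's action and of the kernels $\ker\Phi_{\rm WH}(\ketbra{0}{0})$ and $\ker\Phi_c(\ketbra{0}{0})$ in Steps~1--2, after which the strict inequality is immediate and the threshold $d \ge 4$ falls out on its own — consistent with the known fact that at $d = 2,3$ the complement has zero quantum capacity.
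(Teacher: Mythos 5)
Your proposal is correct and follows essentially the same route as the paper: both apply the second bullet of Theorem~\ref{theorem:main} with $\ket{\psi}=\ket{0}$ and $\sigma$ proportional to the identity, and both arrive at the same comparison ($1$ versus $d-2$ after rescaling), so the threshold $d\geq 4$ appears identically. The only difference is cosmetic — you construct the complement from the minimal Kraus family $L_{ij}=\ketbra{i}{j}-\ketbra{j}{i}$, whereas the paper uses Holevo's flip-operator complement on $\C{d}\otimes\C{d}$ and therefore also tracks the symmetric subspace inside $K^c_\psi$; the two computations coincide under the identification $\ket{(ij)}\leftrightarrow\ket{\psi^-_{ij}}$.
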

\begin{proof}
Let us denote the Werner-Holevo channel simply by $\Phi:\M{d}\to \M{d}$. We already know that $d(d-1)/2 = d^*_{\env}(\Phi) > d^*_{\out}(\Phi)=d$ whenever $d\geq 4$. However, Corollary~\ref{corollary:vikesh} cannot be applied to this channel, since it maps every pure state to a state with rank $=d-1$, see Eq.~\eqref{eq:werner-holevo}. Thus, our aim is to exploit Theorem~\ref{theorem:main} instead. Let us consider the pure state $\ketbra{0}{0}\in \St{d}$, (recall that $\{\ket{k}\}_{k=0}^{d-1}$ denotes the standard basis of ${\mathbb{C}}^d$), so that the orthogonal projection onto $\ker \Phi(\ketbra{0}{0})$ becomes $K_0=\ketbra{0}{0}$. Now, it was shown in \cite{Holevo2007complementary} that $\Phi$ admits a complementary channel $\Phi_c:\M{d}\to \M{d}\otimes \M{d}$ of the following form:
\begin{equation}
    \forall X\in\M{d}: \qquad \Phi_c(X) = \frac{1}{2(d-1)}(\iden_d\otimes \iden_d - F)(X\otimes \iden_d)(\iden_d\otimes \iden_d - F),
\end{equation}
where $F:\C{d}\otimes\C{d}\to \C{d}\otimes\C{d}$ is the {flip} operator whose action on the canonical basis $\{\ket{i}\otimes\ket{j}\}_{i,j=0}^{d-1}$ of $\C{d}\otimes\C{d}$ is defined as $F\ket{i}\otimes\ket{j}=\ket{j}\otimes\ket{i}$. Let us now see how $\Phi_c$ acts on $\ketbra{0}$:
\begin{align}
    \Phi_c(\ketbra{0}{0}) &= \frac{1}{2(d-1)}[\ketbra{0}\otimes\iden_d - (\ketbra{0}\otimes\iden_d) F - F (\ketbra{0}\otimes\iden_d) + F(\ketbra{0}\otimes\iden_d)F ] \nonumber \\
    &= \frac{1}{2(d-1)}\sum_{i=0}^{d-1}[ \ketbra{0i}{0i} - \ketbra{0i}{i0} - \ketbra{i0}{0i} + \ketbra{i0}{i0}] \nonumber \\
    &= \frac{1}{2(d-1)}\sum_{i=0}^{d-1} (\ket{0i}-\ket{i0}) (\bra{0i}-\bra{i0}) \label{eq:kc_0}.
\end{align}
Clearly, $\operatorname{range}\Phi_c(\ketbra{0})$ lies within the anti-symmetric subspace of $\C{d}\otimes\C{d}$. If
\begin{equation}
    \ket{\psi^+_{ij}} = \begin{cases}
     (\ket{ij} + \ket{ji})/\sqrt{2}, &\text{if } i < j\\
     \,\,\ket{ii}, &\text{if }i=j
\end{cases} \qquad\text{and}\qquad \ket{\psi^-_{ij}} = \frac{\ket{ij}-\ket{ji}}{\sqrt{2}} \quad\text{for } i<j,
\end{equation}
then the orthogonal projections onto the symmetric and anti-symmetric subspaces of $\C{d}\otimes\C{d}$ can be written as $P_s=\sum_{i\leq j}\ketbra{\psi^+_{ij}}$ and $P_a=\sum_{i<j}\ketbra{\psi^-_{ij}}$, respectively. Using Eq.~\eqref{eq:kc_0}, we can express the orthogonal projection onto $\ker \Phi_c(\ketbra{0})$ as follows
\begin{equation}
    K^c_0 = P_s + \sum_{i\neq 0; i<j}\ketbra{\psi^-_{ij}}.
\end{equation}
Finally, since $\Phi(\iden_d)=\iden_d$ and $\Phi_c(\iden_d)=\frac{1}{d-1}(\iden_d\otimes\iden_d - F)$, the required trace expressions can be easily calculated. We have $\operatorname{Tr}(K_0\Phi(\iden_d))=1$ and for all $d\geq 4$,
\begin{align}
    \operatorname{Tr}(K^c_0\Phi_c(\iden_d)) &= \frac{1}{d-1} \operatorname{Tr}\left[\left(P_s + \sum_{i\neq 0; i<j}\ketbra{\psi^-_{ij}}\right)(\iden_d\otimes\iden_d - P_s + P_a)\right] \nonumber\\
    &= \frac{2}{d-1} \operatorname{Tr}\left[ \sum_{i\neq 0; i<j}\ketbra{\psi^-_{ij}} \right] = \frac{2}{d-1}\left[\frac{d(d-1)}{2}-(d-1)\right]=d-2 > 1.
\end{align}
A direct application of Theorem~\ref{theorem:main} suffices to conclude that $\mathcal{Q}^{(1)}(\Phi_c)>0$.
\end{proof}

\subsection{Generalized Pauli  (or Weyl Covariant) Channels}
For an entrywise non-negative matrix $P\in \M{d}$ satisfying $\sum_{ij}p_{ij}=1$, the {generalized Pauli channel} $\Phi_P:\M{d}\to \M{d}$ associated with $P$ is a mixed unitary channel defined as follows:
\begin{equation}\label{eq:pauli}
    \forall X\in \M{d}: \qquad \Phi_P(X) = \sum_{0\leq i,j \leq d-1} p_{ij} U_{ij}X U_{ij}^\dagger,
\end{equation}
where $\{U_{ij}\}_{i,j=0}^{d-1}$ is the unitary orthogonal basis of $\M{d}$ formed by the discrete Heisenberg–Weyl operators:
\begin{equation}
    U_{ij} = X^i Z^j, \quad \text{where}\quad X = \sum_{k=0}^{d-1} \ketbra{k+1}{k} \quad\text{and}\quad Z = \sum_{k=0}^{d-1} \omega^k \ketbra{k}{k}
\end{equation}
are the so-called {shift} and {clock} matrices, respectively. Note that in the above definition, addition inside kets happens modulo $d$ and $\omega = e^{2\pi i/d}$ is the $d^{th}$ root of unity. The orthogonality of the unitary basis $\{U_{ij}\}_{i,j=0}^{d-1}$ can be easily verified by showing that $\operatorname{Tr}(U^\dagger_{ij}U_{kl}) = d\delta_{ik}\delta_{jl}$. When $d=2$, these unitary matrices are identical to the familiar $2\times 2$ pauli matrices. In this case, a sophisticated analysis of the coherent information of the qubit Pauli channels reveals that these channels have positive complementary quantum capacity whenever the associated matrix $P\in \M{2}$ has at least $3$ non-zero entries \cite[Theorem 2]{Leung2017complementary}. Our next theorem extends this result for generalized Pauli channels acting in arbitrary dimensions. 

\begin{theorem}\label{theorem:pauli}
Let $\Phi_P:\M{d}\to \M{d}$ be a generalized Pauli channel associated with $P\in\M{d}$. If $P$ has at least $d+1$ non-zeros entries distributed in such a way that either every row or every column of $P$ is non-zero, then any complementary channel $\Phi^c_P\in \mathcal{C}_{\Phi_P}$ has positive quantum capacity.
\end{theorem}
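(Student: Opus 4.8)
The plan is to reduce the claim to Corollary~\ref{corollary:vikesh}, applied with $\Phi = \Phi_P$ in the regime $d^*_{\out}(\Phi_P) < d^*_{\env}(\Phi_P)$, so that the conclusion $\mathcal{Q}(\Phi^c_P)\geq \mathcal{Q}^{(1)}(\Phi^c_P)>0$ for any $\Phi^c_P \in \mathcal{C}_{\Phi_P}$ follows at once. This requires two ingredients: the strict dimensional inequality $d^*_{\env}(\Phi_P)>d^*_{\out}(\Phi_P)$, and the existence of a pure input state $\ketbra{\psi}{\psi}$ with $\operatorname{rank}\Phi_P(\ketbra{\psi}{\psi}) = \min\{d^*_{\out}(\Phi_P),d^*_{\env}(\Phi_P)\}$.

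First I would pin down the dimensions. Since $\Phi_P$ is a mixed unitary channel it is unital, so $d^*_{\out}(\Phi_P)=d$ by Remark~\ref{remark:minimal}. For the environment dimension, note that the Kraus operators $\{\sqrt{p_{ij}}\,U_{ij} : p_{ij}\neq 0\}$ are linearly independent, because $\operatorname{Tr}(U^\dagger_{ij}U_{kl}) = d\,\delta_{ik}\delta_{jl}$; hence the second bullet of Remark~\ref{remark:minimal} gives $d^*_{\env}(\Phi_P) = \#\{(i,j): p_{ij}\neq 0\}\geq d+1$. In particular $d^*_{\env}(\Phi_P) > d = d^*_{\out}(\Phi_P)$, and the target rank in Corollary~\ref{corollary:vikesh} is $\min\{d^*_{\out}(\Phi_P),d^*_{\env}(\Phi_P)\} = d$.

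Next I would produce the required pure state, treating the two alternatives of the hypothesis separately. If every row of $P$ is nonzero, take $\ket{\psi}=\ket{0}$: since $Z^j\ket{0}=\ket{0}$ and $X^i\ket{0}=\ket{i}$ we get $U_{ij}\ket{0}=\ket{i}$, whence $\Phi_P(\ketbra{0}{0}) = \sum_i\big(\sum_j p_{ij}\big)\ketbra{i}{i}$ is diagonal with all diagonal entries positive, hence of full rank $d$. If instead every column of $P$ is nonzero, take $\ket{\psi}=\tfrac1{\sqrt d}\sum_k\ket{k}$, the $+1$ eigenvector of the shift $X$; then $X^i\ket{\psi}=\ket{\psi}$ and a one-line computation gives $U_{ij}\ket{\psi}=\omega^{-ij}Z^j\ket{\psi}$, so $\Phi_P(\ketbra{\psi}{\psi}) = \sum_j\big(\sum_i p_{ij}\big)\,Z^j\ketbra{\psi}{\psi}Z^{-j}$, and since the states $Z^j\ket{\psi}$ are the mutually orthogonal Fourier basis vectors this image again has full rank $d$. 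In either case Corollary~\ref{corollary:vikesh} applies and yields the claim.

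I do not expect a genuine obstacle: the computations are all elementary. The only point that needs care is that the two hypotheses (``every row nonzero'' versus ``every column nonzero'') are symmetric under the Fourier transform interchanging the clock $Z$ and the shift $X$, which is exactly why the two cases use different test states; a state adapted to the wrong operator would not give a full-rank image. As a consistency check one could bypass Corollary~\ref{corollary:vikesh} and invoke Theorem~\ref{theorem:main} directly with $\sigma=\iden_d/d$: for the test states above one has $K_\psi = 0$ but $K^c_\psi\neq 0$, while $\Phi^c_P(\iden_d)$ has full rank by Lemma~\ref{lemma:minimal}, so $\operatorname{Tr}(K_\psi\Phi_P(\sigma)) = 0 < \operatorname{Tr}(K^c_\psi\Phi^c_P(\sigma))$, which is precisely the second bullet of Theorem~\ref{theorem:main}.
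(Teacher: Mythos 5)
Your proposal is correct and follows essentially the same route as the paper: compute $d^*_{\out}(\Phi_P)=d$ from unitality, get $d^*_{\env}(\Phi_P)\geq d+1$ from the linear independence of the Weyl Kraus operators, exhibit a clock eigenvector (resp.\ shift eigenvector) as a full-rank pure input in the row (resp.\ column) case, and invoke Corollary~\ref{corollary:vikesh}. Your version merely spells out the computations $U_{ij}\ket{0}=\ket{i}$ and $U_{ij}\ket{\psi}=\omega^{-ij}Z^j\ket{\psi}$ more explicitly than the paper does.
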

\begin{proof}
Firstly, since $\Phi_P$ is unital, we have $d^*_{\out}(\Phi_P)=d$. In addition, if $P$ has at least $d+1$ non-zero entries, then $d^*_{\env}(\Phi_P)\geq d+1$ (see Remark~\ref{remark:minimal}), so that Corollary~\ref{corollary:vikesh} can be applied to obtain the desired result if there exists a pure state $\ketbra{\psi}{\psi}\in \St{d}$ such that $\Phi_P(\ketbra{\psi})\in\St{d}$ has full rank. Now, if $P$ has a non-zero entry $p_{ik_i}\neq 0$ for each $i\in \{0,1,\ldots ,d-1\}$, then every power of the shift matrix $X$ is present in the Kraus decomposition of $\Phi_P$ (see Eq.~\eqref{eq:pauli}). Hence, by choosing $\ket{\psi}$ to be an eigenvector of the clock matrix $Z$ (say $\ket{\psi}=\ket{0}$), we get a full rank state on the output: $\Phi_P(\ketbra{0})\geq \sum_{i=0}^{d-1}p_{ik_i}\ketbra{i}$. Similarly, if every column of $P$ has a non-zero entry, then an eigenvector of the shift matrix $X$, say $\ket{\psi}=\frac{1}{\sqrt{d}}\sum_{i=0}^{d-1}\ket{i}$, would result in a full rank output. 
\end{proof}

\subsection{Multi-level amplitude damping channels}
{Multi-level amplitude damping} (MAD) channels are employed to model the decay dynamics of a particle in a $d$-level quantum system with the associated Hilbert space $\mathbb{C}^d = \operatorname{span}\{\ket{i}\}_{i=0}^{d-1}$. A total of $d(d-1)/2$ real numbers $\{\gamma_{j\to i}\}_{0\leq i < j\leq d-1}$ are used to parameterize the decay rates of the particle from a higher $j^{th}$ level to a lower $i^{th}$ level, so that the MAD channel $\Phi_{\Vec{\gamma}}:\M{d}\to \M{d}$ admits the following Kraus representation
\begin{equation} \label{eq:MAD}
    \forall X\in \M{d}: \qquad \Phi_{\Vec{\gamma}} (X) = A_0 X A_0^\dagger + \sum_{0\leq i<j\leq d-1} A_{ij} X A_{ij}^\dagger,
\end{equation}
where $\Vec{\gamma}\in \mathbb{R}^{d(d-1)/2}$, $A_0 = \ketbra{0}{0} + \sum_{j=1}^{d-1} \sqrt{1-\sum_{i<j} \gamma_{j\to i}} \ketbra{j}{j}$ and $A_{ij} = \sqrt{\gamma_{j\to i}}\ketbra{i}{j}$ for all $i<j$. Complete positivity and trace preserving property of $\Phi_{\Vec{\gamma}}$ implies that
\begin{equation}
    \forall i<j: \quad\gamma_{j\to i} \geq 0 \qquad\text{and}\qquad \forall j\in[d]: \quad \sum_{i<j}\gamma_{j\to i} \leq 1.
\end{equation}

We say that a level $j\in \{1,\ldots , d-1\}$ gets {totally depleted} under the channel $\Phi_{\Vec{\gamma}}$ if for all $k>j$, $\gamma_{k\to j}=0$ (i.e., no higher level has a positive rate of decaying to the $j^{th}$ level) and $\sum_{i<j}\gamma_{j\to i}=1$ (i.e., the sum of decay rates from the $j^{th}$ level to all the lower levels equals one). Since there are no levels below the ground level $j=0$, it can never get totally depleted. Given $\Phi_{\Vec{\gamma}}$, we denote the number of levels which do not get totally depleted (including the ground level) by $n_1(\Vec{\gamma})$. We denote the total number of non-zero decay rates in $\Vec{\gamma}$ by $n_2 (\Vec{\gamma})$. The following lemma relates these newly defined parameters with the minimal output and environment dimensions of $\Phi_{\Vec{\gamma}}$.

\begin{lemma}\label{lem:n1n2}
Let $\Phi_{\Vec{\gamma}}:\M{d}\to \M{d}$ be a MAD channel parameterized by $\Vec{\gamma}\in \mathbb{R}^{d(d-1)/2}$. Then, 
\begin{equation}
    d^{*}_{\out}(\Phi_{\Vec{\gamma}}) = n_1 (\Vec{\gamma}) \qquad\text{and}\qquad d^{*}_{\env}(\Phi_{\Vec{\gamma}}) = 1+n_2 (\Vec{\gamma}).
\end{equation}
\end{lemma}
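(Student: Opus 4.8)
The plan is to read both equalities off the explicit Kraus representation in Eq.~\eqref{eq:MAD}, using Lemma~\ref{lemma:minimal} and Remark~\ref{remark:minimal}.

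For the environment dimension I would invoke the second bullet of Remark~\ref{remark:minimal}: if a channel admits a Kraus representation whose Kraus operators form a linearly independent family, then $d^*_{\env}$ equals the size of that family. Starting from $\Phi_{\Vec{\gamma}}(X) = A_0 X A_0^\dagger + \sum_{i<j} A_{ij} X A_{ij}^\dagger$, I would first discard the operators $A_{ij}$ with $\gamma_{j\to i}=0$ (these vanish), leaving $A_0$ together with the $n_2(\Vec{\gamma})$ operators $A_{ij}$ with $\gamma_{j\to i}>0$. Linear independence of this family is then immediate: $A_0$ is diagonal with $(0,0)$-entry equal to $1$ (in particular $A_0\neq 0$ always), whereas each surviving $A_{ij}$ is a nonzero multiple of the off-diagonal matrix unit $\ketbra{i}{j}$ with $i\neq j$, and these matrix units are supported on pairwise distinct, off-diagonal positions. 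Hence the whole family is linearly independent, giving $d^*_{\env}(\Phi_{\Vec{\gamma}}) = 1 + n_2(\Vec{\gamma})$.

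For the output dimension I would use $d^*_{\out}(\Phi_{\Vec{\gamma}}) = \operatorname{rank}\Phi_{\Vec{\gamma}}(\iden_d)$ from Lemma~\ref{lemma:minimal}. A short computation gives $A_0 A_0^\dagger = \ketbra{0}{0} + \sum_{j\geq 1}\bigl(1-\sum_{i<j}\gamma_{j\to i}\bigr)\ketbra{j}{j}$ and $A_{ij} A_{ij}^\dagger = \gamma_{j\to i}\ketbra{i}{i}$, so $\Phi_{\Vec{\gamma}}(\iden_d)$ is diagonal: $\Phi_{\Vec{\gamma}}(\iden_d) = \sum_{k=0}^{d-1}\lambda_k\ketbra{k}{k}$ with $\lambda_0 = 1+\sum_{j>0}\gamma_{j\to 0}\geq 1$ and, for $k\geq 1$, $\lambda_k = \bigl(1-\sum_{i<k}\gamma_{k\to i}\bigr) + \sum_{j>k}\gamma_{j\to k}$. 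The rank of $\Phi_{\Vec{\gamma}}(\iden_d)$ is the number of nonzero $\lambda_k$. The ground level always contributes since $\lambda_0\geq 1$. For $k\geq 1$, both summands defining $\lambda_k$ are nonnegative (by the trace-preservation constraints $\sum_{i<k}\gamma_{k\to i}\leq 1$ and $\gamma_{j\to k}\geq 0$), so $\lambda_k = 0$ if and only if $\sum_{i<k}\gamma_{k\to i} = 1$ and $\gamma_{j\to k}=0$ for every $j>k$ — which is exactly the definition of level $k$ being totally depleted. Therefore $\operatorname{rank}\Phi_{\Vec{\gamma}}(\iden_d)$ counts precisely the levels (the ground level included) that are not totally depleted, i.e.\ equals $n_1(\Vec{\gamma})$.

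Every computation here is routine; the one step that warrants care is the last one, namely checking that the vanishing of the $k$-th diagonal entry of $\Phi_{\Vec{\gamma}}(\iden_d)$ is equivalent, term by term, to the combinatorial definition of total depletion. This works precisely because each contributing term in $\lambda_k$ is nonnegative, so no cancellations can occur and a sum vanishes exactly when all its terms do.
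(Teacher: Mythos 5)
Your proposal is correct and follows exactly the same route as the paper: computing $\operatorname{rank}\Phi_{\Vec{\gamma}}(\iden_d)$ for the output dimension and counting linearly independent Kraus operators (via Remark~\ref{remark:minimal}) for the environment dimension. The paper states both steps in two sentences without details; you have simply written out the computations it leaves implicit, and they check out.
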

\begin{proof}
Apply the channel $\Phi_{\Vec{\gamma}}$ on the identity matrix $\iden_d\in\M{d}$ to deduce that $\operatorname{rank}\Phi_{\Vec{\gamma}}(\iden_d)=d^*_{\out}(\Phi_{\Vec{\gamma}})=n_1(\Vec{\gamma})$. To obtain the expression for $d^*_{\env}(\Phi_{\Vec{\gamma}})$, observe that the number of linearly independent Kraus operators in the Kraus representation of $\Phi_{\Vec{\gamma}}$ (see Eq.~\eqref{eq:MAD}) is $1+n_2(\Vec{\gamma})$, so that $\operatorname{rank}J(\Phi_{\Vec{\gamma}})=d^*_{\env}(\Phi_{\Vec{\gamma}})=1+n_2(\Vec{\gamma})$, see Remark~\ref{remark:minimal}.
\end{proof}

The quantum capacity of the $2$-level (or qubit) amplitude damping channels is quite well understood \cite{Vittorio2005anti}. This is because any such channel is completely determined by a single decay rate $\gamma\in\mathbb{R}$, so that $d^{*}_{\env}(\Phi_\gamma)\leq 2$ and these types of channels are known to be either degradable or anti-degradable \cite{Wolf2007capacity}. Unfortunately, the same cannot be said for MAD channels in higher dimensions. Hence, their capacity analysis has recently started attracting significant interest from the community, see \cite{Chessa2021Mad} and references therein. However, to the best of our knowledge, virtually nothing is known about the complementary quantum capacities of these channels. In the following theorem, we obtain the first results in this direction by deriving simple sufficient conditions on the decay rate vector $\Vec{\gamma}\in \mathbb{R}^{d(d-1)/2}$ which ensure that the associated MAD channel $\Phi_{\Vec{\gamma}}$ has positive (complementary) capacity. 

\begin{theorem}\label{theorem:MAD}
Let $\Phi_{\Vec{\gamma}}:\M{d}\to \M{d}$ be a MAD channel parameterized by $\Vec{\gamma}\in \mathbb{R}^{d(d-1)/2}$. 
\begin{itemize}
    \item If $1+n_2(\Vec{\gamma})>n_1(\Vec{\gamma})$ and for every $j\in\{0,1,\ldots ,d-2\}$ which does not get totally depleted, there exists a non-zero decay rate $\gamma_{k\to j}$ for some $k>j$, then any complementary channel $\Phi^c_{\Vec{\gamma}}\in \mathcal{C}_{\Phi_{\Vec{\gamma}}}$ has positive quantum capacity.
    \item If $1+n_2(\Vec{\gamma})<n_1(\Vec{\gamma})$ and every $j\in\{0,1,\ldots ,d-2\}$ which does not get totally depleted  receives a decay contribution from not more than one higher level, then the channel $\Phi_{\Vec{\gamma}}$ has positive quantum capacity.
\end{itemize}
\end{theorem}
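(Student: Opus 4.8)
The plan is to deduce both items directly from Corollary~\ref{corollary:vikesh}, using the dimension formulas $d^{*}_{\out}(\Phi_{\Vec\gamma})=n_1(\Vec\gamma)$ and $d^{*}_{\env}(\Phi_{\Vec\gamma})=1+n_2(\Vec\gamma)$ of Lemma~\ref{lem:n1n2}. In each case one must produce a pure input state whose output has the maximal possible rank $\min\{n_1(\Vec\gamma),1+n_2(\Vec\gamma)\}$ allowed by Eq.~\eqref{eq:max-pure-rank}; Corollary~\ref{corollary:vikesh} then converts the strict inequality $1+n_2(\Vec\gamma)>n_1(\Vec\gamma)$ (resp.\ $<$) into positivity of $\mathcal{Q}(\Phi^c_{\Vec\gamma})$ (resp.\ $\mathcal{Q}(\Phi_{\Vec\gamma})$). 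I would try the uniform superposition $\ket{\psi}=\frac{1}{\sqrt d}\sum_{j=0}^{d-1}\ket{j}$ in both items. Writing $b_j:=\sqrt{1-\sum_{i<j}\gamma_{j\to i}}$ (so $b_0=1$) and $\ket{\phi}:=A_0\ket{\psi}=\frac{1}{\sqrt d}\sum_j b_j\ket{j}$, the Kraus form in Eq.~\eqref{eq:MAD} gives $A_{ij}\ket{\psi}=\frac{1}{\sqrt d}\sqrt{\gamma_{j\to i}}\,\ket{i}$, hence $\Phi_{\Vec\gamma}(\ketbra{\psi}{\psi})=\ketbra{\phi}{\phi}+\frac{1}{d}\sum_{\ell}\big(\sum_{j>\ell}\gamma_{j\to\ell}\big)\ketbra{\ell}{\ell}$, whose range is $\operatorname{span}\{\ket{\phi}\}+\operatorname{span}\{\ket{\ell}:\sum_{j>\ell}\gamma_{j\to\ell}>0\}$.

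For the first item I would show this range contains $\ket{j}$ for every non-totally-depleted level $j$, which forces $\operatorname{rank}\Phi_{\Vec\gamma}(\ketbra{\psi}{\psi})\ge n_1(\Vec\gamma)$; Eq.~\eqref{eq:max-pure-rank} together with $d^{*}_{\out}(\Phi_{\Vec\gamma})=n_1(\Vec\gamma)<1+n_2(\Vec\gamma)=d^{*}_{\env}(\Phi_{\Vec\gamma})$ then pins the rank to exactly $n_1(\Vec\gamma)$. If $j\le d-2$ is not totally depleted, the hypothesis supplies $k>j$ with $\gamma_{k\to j}>0$, so $\sum_{m>j}\gamma_{m\to j}>0$ and $\ket{j}$ appears among the rank-one summands. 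The only level not covered this way is $j=d-1$ when it is not totally depleted; there total depletion of the top level is equivalent to $b_{d-1}=0$, so $b_{d-1}\ne 0$, and subtracting from $\ket{\phi}$ its components along the already-captured lower non-depleted levels leaves a nonzero multiple of $\ket{d-1}$ inside the range. Corollary~\ref{corollary:vikesh} then yields $\mathcal{Q}(\Phi^c_{\Vec\gamma})\ge\mathcal{Q}^{(1)}(\Phi^c_{\Vec\gamma})>0$.

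For the second item, the key observation is that a totally depleted level receives no incoming decay by definition, and the top level receives none either, so together with the hypothesis \emph{every} level receives decay from at most one strictly higher level. Hence, writing $S:=\{\ell:\exists\,j>\ell,\ \gamma_{j\to\ell}>0\}$, one gets $n_2(\Vec\gamma)=|S|$, the vectors $\{A_{ij}\ket{\psi}\}_{\gamma_{j\to i}>0}$ are nonzero multiples of the \emph{distinct} basis vectors $\{\ket{i}\}_{i\in S}$, and $\Phi_{\Vec\gamma}(\ketbra{\psi}{\psi})=\ketbra{\phi}{\phi}+\frac{1}{d}\sum_{i\in S}\big(\sum_{j>i}\gamma_{j\to i}\big)\ketbra{i}{i}$ has rank $1+|S|$ precisely when $\ket{\phi}\notin\operatorname{span}\{\ket{i}:i\in S\}$, i.e.\ when $b_j\ne 0$ for some $j\notin S$. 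Every non-totally-depleted $j\notin S$ satisfies $\sum_{i<j}\gamma_{j\to i}<1$ and hence $b_j\ne 0$, and there are $n_1(\Vec\gamma)-|S|=n_1(\Vec\gamma)-n_2(\Vec\gamma)\ge 2$ such levels by the hypothesis $1+n_2(\Vec\gamma)<n_1(\Vec\gamma)$; so $\ket{\phi}$ genuinely sticks out and $\operatorname{rank}\Phi_{\Vec\gamma}(\ketbra{\psi}{\psi})=1+n_2(\Vec\gamma)=d^{*}_{\env}(\Phi_{\Vec\gamma})<d^{*}_{\out}(\Phi_{\Vec\gamma})$. Corollary~\ref{corollary:vikesh} then gives $\mathcal{Q}(\Phi_{\Vec\gamma})\ge\mathcal{Q}^{(1)}(\Phi_{\Vec\gamma})>0$.

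No analytic input beyond Corollary~\ref{corollary:vikesh} is needed; the whole argument is combinatorial bookkeeping of the decay graph. I expect the main (minor) obstacle to be keeping straight the three distinct conditions ``$j$ is totally depleted'', ``$b_j=0$'', and ``$j$ receives some incoming decay'', which coincide only partially --- in particular, treating the top level $d-1$ separately in the first item (it never appears in the diagonal part and must be extracted from $A_0\ket{\psi}$), and recognising that the strict inequality $1+n_2(\Vec\gamma)<n_1(\Vec\gamma)$ in the second item is exactly what guarantees enough non-depleted, non-receiving levels for $\ket{\phi}$ to be linearly independent of $\{\ket{i}\}_{i\in S}$.
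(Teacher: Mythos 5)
Your proposal is correct and follows essentially the same route as the paper: the paper also feeds the uniform superposition $\ket{e}=\sum_i\ket{i}$ into $\Phi_{\Vec\gamma}$, writes the output as $\sum_{j}\sum_{k>j}\gamma_{k\to j}\ketbra{j}+\ketbra{\Gamma}$ with $\ket{\Gamma}=A_0\ket{e}$, asserts that the hypotheses force the rank to equal $n_1(\Vec\gamma)$ or $1+n_2(\Vec\gamma)$ respectively, and concludes via Lemma~\ref{lem:n1n2} and Corollary~\ref{corollary:vikesh}. Your write-up merely supplies the combinatorial details of the rank computation (the separate treatment of the top level and the identification $n_2=|S|$) that the paper leaves implicit, and these details check out.
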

\begin{proof}
Let $\ket{e}=\sum_{i=0}^{d-1}\ket{i}$ and $\ket{\Gamma}= \ket{0} + \sum_{j=1}^{d-1} \sqrt{1-\sum_{i<j} \gamma_{j\to i}} \ket{j}$, so that 
\begin{equation}
    \Phi_{\Vec{\gamma}} (\ketbra{e})=\sum_{j}\sum_{k>j} \gamma_{k\to j} \ketbra{j} + \ketbra{\Gamma},
\end{equation}
and the conditions stated in the theorem ensure that
\begin{equation}
    \operatorname{rank}\Phi_{\Vec{\gamma}} (\ketbra{e}) = \begin{cases}
    n_1(\Vec{\gamma}) &\text{ if } 1+n_2(\Vec{\gamma})>n_1(\Vec{\gamma}) \\
    1+n_2(\Vec{\gamma}) &\text{ if } 1+n_2(\Vec{\gamma})<n_1(\Vec{\gamma})
    \end{cases}.
\end{equation}
Then Lemma~\ref{lem:n1n2} and Corollary~\ref{corollary:vikesh} can be readily applied to obtain the required results.
\end{proof}

\begin{remark}
Theorem~\ref{theorem:MAD} can be considered as a special case of a more general result which holds for the class of (conjugate) diagonal unitary covariant channels, see Proposition~\ref{prop:CDUC-rank} and Theorem~\ref{theorem:CDUC-cap}. 
\end{remark}

\subsection{(Conjugate) diagonal unitary covariant channels}

Let us denote the group of diagonal unitary matrices in $\M{d}$ by $\mathcal{DU}_d$. A channel $\Phi:\M{d}\to \M{d}$ is said to be diagonal unitary covariant (DUC) (resp. conjugate diagonal unitary covariant (CDUC)) if for all $U\in\mathcal{DU}_d$ and $X\in\M{d}$,
\begin{equation}
    \Phi(UXU^\dagger) = U^\dagger\Phi(X)U \quad\text{resp.}\quad \Phi(UXU^\dagger) = U\Phi(X)U^\dagger.
\end{equation}
It is easy to see that if $\Phi$ is DUC (resp. CDUC), then for all $U\in\mathcal{DU}_d$,
\begin{equation}
    (U\otimes U)J(\Phi)(U^\dagger\otimes U^\dagger) = J(\Phi) \quad\text{resp.}\quad (U\otimes U^\dagger)J(\Phi)(U^\dagger\otimes U) = J(\Phi).
\end{equation}
These channels and their corresponding Choi matrices have recently been examined quite thoroughly in \cite{Johnston2019pcp,singh2020diagonal,singh2020ppt2,Singh2021triangle}. In particular, it has been shown that the action of a DUC (resp. CDUC) channel can be parameterized by two matrices $A,B\in \M{d}$ with equal diagonals:
\begin{alignat}{2}
    \Psi_{A,B}(X) &= \operatorname{diag}(A\ket{\operatorname{diag}X}) + \widetilde{B}\odot X^\top &&\equiv \sum_{i,j} A_{ij}X_{jj}\ketbra{i} + \sum_{i\neq j} B_{ij}X_{ji}\ketbra{i}{j} \label{eq:DUC}  \\ 
    \text{resp.} \quad \Phi_{A,B}(X) &= \operatorname{diag}(A\ket{\operatorname{diag}X}) + \widetilde{B}\odot X &&\equiv \sum_{i,j} A_{ij}X_{jj}\ketbra{i} + \sum_{i\neq j} B_{ij}X_{ij}\ketbra{i}{j} . \label{eq:CDUC}
\end{alignat}
where $\widetilde{B}=B-\operatorname{diag}B$ and $\odot$ denotes the entrywise or Hadamard matrix product.
\begin{remark}\label{remark:cptp-(C)DUC}
$\Phi_{A,B}$ (resp. $\Psi_{A,B}$) is completely positive and trace preserving if and only if
\begin{itemize}
    \item $A$ is entrywise non-negative (denoted $A\succcurlyeq 0$) and $B\geq 0$, see \cite[Lemma 6.11]{singh2020diagonal}.
    \item resp. $A\succcurlyeq 0$, $B=B^\dagger$, and $A_{ij}A_{ji}\geq |B_{ij}|^2$ for all $i,j\in [d]$, see \cite[Lemma 6.10]{singh2020diagonal}.
\end{itemize}
\end{remark}
A plethora of different classes of quantum channels can be shown to belong to the families of DUC and CDUC channels; see \cite[Section 7]{singh2020diagonal}. In particular, the previously considered classes of depolarizing- and MAD channels are CDUC, while the transpose-depolarizing channels are DUC. In this section, we derive very general sufficient conditions on the matrices $A,B\in\M{d}$ which ensure that the associated DUC and CDUC channels have positive complementary quantum capacity. Theorems~\ref{theorem:depol-transp} and \ref{theorem:MAD} would then pop out as special cases of these more general results. To this end, let us first see how the minimal output and environment dimensions of these channels can be obtained from the associated matrices $A,B\in\M{d}$. 
 
\begin{lemma}\label{lemma:(C)DUC-min}
For a DUC channel $\Psi_{A,B}:\M{d}\to\M{d}$ and a CDUC channel $\Phi_{A,B}:\M{d}\to \M{d}$,
\begin{align}
    d^*_{\out}(\Psi_{A,B}) &= |\{i\in [d] \, : \, \exists j\in [d], A_{ij}\neq 0 \}| = d^*_{\out}(\Phi_{A,B}) \\
    d^*_{\env}(\Psi_{A,B}) &= |\{i \in [d] \, : \, A_{ii} \neq 0\}| + \sum_{i<j} \operatorname{rank} \begin{bmatrix} A_{ij} & B_{ij} \\ B_{ji} & A_{ji} \end{bmatrix}  \\
    d^*_{\env}(\Phi_{A,B}) &= \operatorname{rank} B  + |\{ (i,j) \in [d]\times [d] \, : \, i \neq j \text{ and } A_{ij} \neq 0\}|
\end{align}
\end{lemma}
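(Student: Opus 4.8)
The plan is to compute each of the three dimensions via Lemma~\ref{lemma:minimal}, which reduces everything to computing ranks of explicit matrices: $d^*_{\out}$ is $\operatorname{rank}\Phi(\iden_d)$ (equivalently $\operatorname{rank}\Psi(\iden_d)$), and $d^*_{\env}$ is either $\operatorname{rank}J(\Phi)$ (equivalently the number of linearly independent Kraus operators, cf.~Remark~\ref{remark:minimal}) or $\operatorname{rank}$ of the complementary output at the identity. So the whole proof is three rank computations, each exploiting the block structure forced by the (conjugate) diagonal unitary covariance.

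First I would handle $d^*_{\out}$. Evaluating the action formulas \eqref{eq:DUC} and \eqref{eq:CDUC} at $X=\iden_d$ kills the off-diagonal $\widetilde B\odot$ term (since $\iden_d$ is diagonal), leaving $\Psi_{A,B}(\iden_d)=\Phi_{A,B}(\iden_d)=\operatorname{diag}(A\ket{e})$ where $\ket{e}=\sum_j\ket{j}$. This is a diagonal matrix whose $i$-th entry is $\sum_j A_{ij}$; by entrywise non-negativity of $A$ (Remark~\ref{remark:cptp-(C)DUC}) this sum vanishes iff the whole $i$-th row of $A$ is zero, i.e.\ iff there is no $j$ with $A_{ij}\neq 0$. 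Hence $\operatorname{rank}\Phi_{A,B}(\iden_d)=|\{i:\exists j,\,A_{ij}\neq 0\}|$, giving both output-dimension formulas at once.

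Next, the CDUC environment dimension. Here I would read off a Kraus representation of $\Phi_{A,B}$ directly from \eqref{eq:CDUC}: the diagonal part $\sum_{i,j}A_{ij}X_{jj}\ketbra i$ contributes, for each pair $(i,j)$ with $A_{ij}\neq 0$ and $i\neq j$, a rank-one Kraus operator proportional to $\ketbra ij$ (these are manifestly linearly independent, being matrix units), while the off-diagonal part $\widetilde B\odot X$, whose Choi matrix is the (partially transposed or not, depending on convention) block supported on the $\{i\neq j\}$ entries carrying $B$, contributes a number of independent Kraus operators equal to the rank of the relevant matrix built from $B$ together with the $\{A_{ii}\}$ on the diagonal — and this rank is exactly $\operatorname{rank}B$ once one incorporates the diagonal correctly (the positivity constraints $A_{ij}A_{ji}\geq|B_{ij}|^2$ and $B=B^\dagger$ ensure $B\geq 0$ on the relevant support, cf.~Remark~\ref{remark:cptp-(C)DUC}). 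Adding the two contributions and checking that the two families of Kraus operators are jointly linearly independent (they live in complementary coordinate supports) gives $d^*_{\env}(\Phi_{A,B})=\operatorname{rank}B+|\{(i,j):i\neq j,\,A_{ij}\neq 0\}|$. The cleanest route is probably to invoke the block-diagonalization of $(\mathcal{DU}_d\times\mathcal{DU}_d)$-invariant matrices established in \cite{singh2020diagonal} and simply add up the ranks of the blocks; that reference already records $J(\Phi_{A,B})$ in a form where it decomposes as a direct sum of $1\times1$ blocks $A_{ij}$ (for $i\neq j$, from the diagonal part) and the $B$-block, whence the rank is additive.

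Finally, the DUC environment dimension. The covariance relation $(U\otimes U)J(\Psi)(U^\dagger\otimes U^\dagger)=J(\Psi)$ forces $J(\Psi_{A,B})$ to be supported on the diagonal $\{\ket{ii}\}$ together with the two-dimensional blocks $\operatorname{span}\{\ket{ij},\ket{ji}\}$ for each $i<j$; this is the block structure from \cite{singh2020diagonal}. On the span of $\{\ket{ii}\}_{i}$ the restriction is the diagonal matrix with entries $A_{ii}$, contributing $|\{i:A_{ii}\neq 0\}|$ to the rank; on each $i<j$ block the restriction is the $2\times2$ matrix $\begin{bmatrix}A_{ij}&B_{ij}\\B_{ji}&A_{ji}\end{bmatrix}$ (up to relabelling of which entry sits where — this is exactly where the transpose in \eqref{eq:DUC} versus \eqref{eq:CDUC} matters). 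Since $J(\Psi_{A,B})$ is the direct sum of all these blocks, its rank is the sum of the block ranks, yielding the stated formula. I expect the main obstacle to be purely bookkeeping: pinning down precisely which matrix entry ($A_{ij}$, $B_{ij}$, $B_{ji}$, or a transpose/conjugate thereof) occupies each slot of the $2\times2$ blocks and of the diagonal-part Choi entries, so that the rank formulas come out in exactly the asserted form rather than a superficially different but equivalent one; carefully tracking the vectorization convention \eqref{eq:vec} and the definition \eqref{eq:choi} of the Choi matrix through \eqref{eq:DUC}–\eqref{eq:CDUC} resolves this, and the underlying block decomposition and its rank-additivity are already available from \cite{singh2020diagonal} (see also the isotropic/Werner computation quoted in the proof of Theorem~\ref{theorem:depol-transp}).
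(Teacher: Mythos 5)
Your proposal matches the paper's proof: the output dimensions are obtained by evaluating the channels at $\iden_d$ and invoking Lemma~\ref{lemma:minimal}, and the environment dimensions by computing the Choi ranks through the block decompositions of $J(\Psi_{A,B})$ and $J(\Phi_{A,B})$ recorded in \cite{singh2020diagonal}. (One inconsequential slip: for the CDUC channel $\Phi_{A,B}$ the condition $B\geq 0$ is part of the CPTP characterization itself, not a consequence of the DUC constraints $B=B^\dagger$ and $A_{ij}A_{ji}\geq|B_{ij}|^2$, which do not imply positive semidefiniteness; in any case the direct-sum rank computation never uses positivity of $B$.)
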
 
\begin{proof}
Apply the channels $\Phi_{A,B}$ and $\Psi_{A,B}$ on the identity matrix $\iden_d\in\M{d}$ to conclude that their minimal output dimensions are both equal to the number of non-zero rows present in $A$, see Lemma~\ref{lemma:minimal}. The expressions of the minimal environment dimensions can be derived by computing the relevant Choi ranks, see \cite[Theorem 6.4 and Corollary 4.2]{singh2020diagonal} and Lemma~\ref{lemma:minimal}.
\end{proof}

Equipped with the necessary background, we are now ready to characterize the class of (C)DUC channels $\Phi$ for which there exist a pure state $\ketbra{\psi}$ such that $\operatorname{rank}\Phi(\ketbra{\psi})=\min \{d^*_{\out}(\Phi),d^*_{\env}(\Phi)\}$, so that Corollary~\ref{corollary:vikesh} can be invoked to obtain positivity of the (complementary) capacities of these channels whenever $d^*_{\out}(\Phi)\neq d^*_{\env}(\Phi)$. Recall the discussion following Remark~\ref{remark:vikesh}, where the difficulty in obtaining such a characterization for arbitrary channels is clearly highlighted. More often than not, one can only hope to obtain some sufficient conditions on the channel under consideration (like the one presented in Theorem~\ref{theorem:pauli}) to guarantee that the required pure state exists. This fact makes the following Proposition all the more important, since it contains a simple necessary and sufficient condition to ensure the existence of the desired pure state for arbitrary (C)DUC channels. 
 
\begin{proposition}\label{prop:CDUC-rank}
Let $\Phi:\M{d}\to\M{d}$ be a DUC/CDUC channel and $\ket{e}=\sum_{i}\ket{i}\in\C{d}$. Then,
\begin{equation}
    \max_{\ketbra{\psi}\in\St{d}}\operatorname{rank}\Phi(\ketbra{\psi})= \operatorname{rank}\Phi(\ketbra{e}{e}).
\end{equation}
\end{proposition}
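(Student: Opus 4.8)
The equality to prove has two halves. That the value $\operatorname{rank}\Phi(\ketbra{e}{e})$ is attained is immediate, since $\tfrac1d\ketbra{e}{e}\in\St{d}$ is a pure state and rescaling does not change the rank; so the content is the bound $\operatorname{rank}\Phi(\ketbra{\psi}{\psi})\le\operatorname{rank}\Phi(\ketbra{e}{e})$ for every $\ket{\psi}\in\C{d}$. My plan is to read off explicit Kraus operators for $\Phi$ from the block structure of its Choi matrix (the same structure underlying Lemma~\ref{lemma:(C)DUC-min}, see \cite{singh2020diagonal}), use them to express $\operatorname{range}\Phi(\ketbra{\psi}{\psi})$ transparently, and conclude by rank--nullity. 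Throughout, for $Y\subseteq[d]$ write $\langle Y\rangle:=\operatorname{span}\{\ket{i}:i\in Y\}$, and for a vector $\ket{v}$ let $\operatorname{diag}(\ket{v}):=\sum_i v_i\ketbra{i}{i}$.

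Consider first the CDUC case $\Phi=\Phi_{A,B}$ with $A\succcurlyeq 0$ and $B\ge 0$, and write $B=\sum_m\ketbra{\beta_m}{\beta_m}$. A direct check against \eqref{eq:CDUC} gives $\Phi_{A,B}(X)=\sum_m\operatorname{diag}(\ket{\beta_m})X\operatorname{diag}(\ket{\beta_m})^\dagger+\sum_{i\ne j}\big(\sqrt{A_{ij}}\ketbra{i}{j}\big)X\big(\sqrt{A_{ij}}\ketbra{i}{j}\big)^\dagger$, so $\{\operatorname{diag}(\ket{\beta_m})\}_m\cup\{\sqrt{A_{ij}}\ketbra{i}{j}:i\ne j\}$ is a Kraus family for $\Phi_{A,B}$. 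Applying these to $\ket{\psi}$ and recalling that the range of a sum of positive operators is the span of their ranges,
\[
\operatorname{range}\Phi_{A,B}(\ketbra{\psi}{\psi})=\operatorname{diag}(\ket{\psi})\,\operatorname{range}B\ +\ \langle W_\psi\rangle,\qquad W_\psi:=\{\,i:\ A_{ij}\ne 0\text{ for some }j\ne i\text{ with }\psi_j\ne 0\,\}.
\]
Let $S:=\{i:\psi_i\ne 0\}$ and let $\Pi_S$ be the projection onto $\langle S\rangle$. Since $\operatorname{diag}(\ket{\psi})=\operatorname{diag}(\ket{\psi})\Pi_S$ and $\operatorname{diag}(\ket{\psi})$ acts on $\langle S\rangle$ as an invertible diagonal map fixing every subspace $\langle T\rangle$ with $T\subseteq S$, splitting $\langle W_\psi\rangle=\langle W_\psi\cap S\rangle\oplus\langle W_\psi\setminus S\rangle$ gives
\[
\operatorname{rank}\Phi_{A,B}(\ketbra{\psi}{\psi})=\dim\big(\Pi_S\operatorname{range}B+\langle W_\psi\cap S\rangle\big)+|W_\psi\setminus S|,
\]
which depends only on $S$, and for $S=[d]$ (in particular for $\ket{\psi}=\ket{e}$) equals $\dim(\operatorname{range}B+\langle W\rangle)$ with $W:=\{i:A_{ij}\ne 0\text{ for some }j\ne i\}$.

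To finish the CDUC case, put $U:=\operatorname{range}B$. From $W_\psi\subseteq W$ one has $\Pi_SU+\langle W_\psi\cap S\rangle\subseteq\Pi_S(U+\langle W\rangle)$ and $|W_\psi\setminus S|\le|W\setminus S|$, while $\langle W\setminus S\rangle\subseteq(U+\langle W\rangle)\cap\ker\Pi_S$ forces $\dim\Pi_S(U+\langle W\rangle)\le\dim(U+\langle W\rangle)-|W\setminus S|$ by rank--nullity; combining these,
\[
\operatorname{rank}\Phi_{A,B}(\ketbra{\psi}{\psi})\le\dim\Pi_S(U+\langle W\rangle)+|W\setminus S|\le\dim(U+\langle W\rangle)=\operatorname{rank}\Phi_{A,B}(\ketbra{e}{e}).
\]
For the DUC case $\Phi=\Psi_{A,B}$ the scheme is identical: diagonal--unitary covariance shows $\operatorname{rank}\Psi_{A,B}(\ketbra{\psi}{\psi})$ is unchanged under $\ket{\psi}\mapsto U\ket{\psi}$ with $U\in\mathcal{DU}_d$, so one may assume $\ket{\psi}$ real, where $\Psi_{A,B}(\ketbra{\psi}{\psi})=\Phi_{A,B}(\ketbra{\psi}{\psi})$ since $(\ketbra{\psi}{\psi})^\top=\ketbra{\psi}{\psi}$; alternatively one works directly with the Kraus operators obtained by diagonalizing the $2\times2$ blocks $M_{ij}=\left[\begin{smallmatrix}A_{ij}&B_{ij}\\ B_{ji}&A_{ji}\end{smallmatrix}\right]\ge 0$ together with the $\sqrt{A_{ii}}\ketbra{i}{i}$. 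I expect the DUC bookkeeping to be the main obstacle: these $2\times2$ ``anti-diagonal'' Kraus operators couple the coordinates $i$ and $j$, so a single diagonal factor cannot be pulled out verbatim, and the rank-one blocks (those with $A_{ij}A_{ji}=|B_{ij}|^2$) must be handled separately — this is the one place where the complete-positivity constraints $A_{ij}A_{ji}\ge|B_{ij}|^2$ and $B=B^\dagger$ genuinely enter, preventing the ``moving'' one-dimensional contributions from dropping the rank.
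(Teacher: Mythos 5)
Your CDUC half is complete and correct, and it takes a genuinely different route from the paper's: the paper argues principal-submatrix by principal-submatrix, showing that singularity of every $r\times r$ block $\Phi(\ketbra{e}{e})[I]$ forces singularity of $\Phi(\ketbra{\psi}{\psi})[I]$, whereas you use $B=\sum_m\ketbra{\beta_m}{\beta_m}\geq 0$ to extract the Kraus family $\{\operatorname{diag}(\ket{\beta_m})\}\cup\{\sqrt{A_{ij}}\ketbra{i}{j}\}$, derive the exact identity $\operatorname{range}\Phi_{A,B}(\ketbra{\psi}{\psi})=\operatorname{diag}(\ket{\psi})\operatorname{range}B+\langle W_\psi\rangle$, and close with rank--nullity. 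This is cleaner and even gives something the paper's argument does not: an explicit formula for $\operatorname{rank}\Phi_{A,B}(\ketbra{\psi}{\psi})$ depending only on the support of $\ket{\psi}$.

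The DUC half, however, is a genuine gap, and it is precisely the half where the paper has to work hardest (Case 2 of Appendix~\ref{appen:CDUC}, via the determinant Lemma~\ref{lemma:det}). Your reduction to real $\ket{\psi}$ by diagonal-unitary covariance is fine and does give $\Psi_{A,B}(\ketbra{\psi}{\psi})=\Phi_{A,B}(\ketbra{\psi}{\psi})$ as matrices, but this does not feed back into the CDUC computation you already did: for a DUC channel the constraint is only $B=B^\dagger$ with $A_{ij}A_{ji}\geq |B_{ij}|^2$, so $B$ need not be positive semi-definite, the decomposition $B=\sum_m\ketbra{\beta_m}{\beta_m}$ into rank-one positive pieces is unavailable, and the key step ``range of a sum of positive operators equals the sum of their ranges'' cannot be applied, because the $\widetilde{B}$-summand $\operatorname{diag}(\ket{\psi})\,\widetilde{B}\,\operatorname{diag}(\ket{\psi})$ is not positive semi-definite in general. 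Your alternative --- Kraus operators from diagonalizing the $2\times 2$ blocks $M_{ij}$ --- is the right starting point, but as you yourself note, those operators couple the coordinates $i$ and $j$, the diagonal factor $\operatorname{diag}(\ket{\psi})$ cannot be pulled out of the resulting spanning vectors, and the dimension of the sum of their ranges could a priori drop for special values (not just special supports) of the $\psi_i$ through cancellations in the rank-one blocks with $A_{ij}A_{ji}=|B_{ij}|^2$. Nothing in the proposal substitutes for the paper's determinant bookkeeping here, so the DUC case remains unproved.
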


\begin{proof} In order to prove the proposition, it suffices to show that for a DUC or a CDUC channel $\Phi$, if $\operatorname{rank}\Phi(\ketbra{e})<r$, then for all $\ketbra{\psi}\in\St{d}$, $\operatorname{rank}\Phi(\ketbra{\psi})<r$ as well. Moreover, since a positive semi-definite matrix $X\in\M{d}$ has $\operatorname{rank}X<r$ if and only if all its $r\times r$ principal submatrices are singular, we can choose to work with $r\times r$ principal submatrices corresponding to an arbitrary selection of $r$ rows and columns indexed by some $I\subseteq [d]$ with $|I|=r$. Hence, we only need to prove that (see the Methods section~\ref{appen:CDUC} for the relevant notation):
\begin{equation}
    \Phi(\ketbra{e})[I] \text{ is singular} \implies \Phi(\ketbra{\psi})[I] \text{ is singular for all } \ketbra{\psi}\in\St{d}.
\end{equation}
The following proof is split into two cases, and can be found in the Methods section~\ref{appen:CDUC}. 
\end{proof}

\begin{theorem}\label{theorem:CDUC-cap}
Let $\ket{e}=\sum_{i}\ket{i}\in\C{d}$ and $\Phi:\M{d}\to\M{d}$ be a DUC or a CDUC channel such that
\begin{equation}
\max_{\ketbra{\psi}\in\St{d}}\operatorname{rank}\Phi(\ketbra{\psi})= \operatorname{rank}\Phi(\ketbra{e}{e})=\operatorname{min}\{d^*_{\out}(\Phi),d^*_{\env}(\Phi)\}.    
\end{equation}
Then, 
\begin{itemize}
    \item if $d^*_{\out}(\Phi)>d^*_{\env}(\Phi)$, the channel $\Phi$ has positive quantum capacity.
    \item if $d^*_{\out}(\Phi)<d^*_{\env}(\Phi)$, any complementary channel $\Phi_c\in\mathcal{C}_\Phi$ has positive quantum capacity.
\end{itemize}
\end{theorem}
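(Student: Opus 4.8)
The plan is to obtain this theorem as an immediate consequence of Corollary~\ref{corollary:vikesh}, using Proposition~\ref{prop:CDUC-rank} to produce the required pure input state. First I would note that for any DUC or CDUC channel $\Phi$, Proposition~\ref{prop:CDUC-rank} already guarantees that the maximal output rank over all pure inputs is attained at the specific vector $\ket{e}=\sum_i\ket{i}$; so the first equality in the hypothesis of the theorem is automatic, and the genuine content of the assumption is the second equality, namely that this maximal rank equals $\min\{d^*_{\out}(\Phi),d^*_{\env}(\Phi)\}$.

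Next I would pass to the normalized state $\rho:=\tfrac1d\ketbra{e}{e}\in\St{d}$. Since $\Phi$ is linear, $\operatorname{rank}\Phi(\rho)=\operatorname{rank}\Phi(\ketbra{e}{e})=\min\{d^*_{\out}(\Phi),d^*_{\env}(\Phi)\}$ by hypothesis, so $\rho$ is exactly a pure input state of the type demanded in Corollary~\ref{corollary:vikesh}. Applying that corollary verbatim then yields $\mathcal{Q}(\Phi)\geq\mathcal{Q}^{(1)}(\Phi)>0$ when $d^*_{\out}(\Phi)>d^*_{\env}(\Phi)$, and $\mathcal{Q}(\Phi_c)\geq\mathcal{Q}^{(1)}(\Phi_c)>0$ for every $\Phi_c\in\mathcal{C}_\Phi$ when $d^*_{\out}(\Phi)<d^*_{\env}(\Phi)$, which is precisely the claimed dichotomy.

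There is essentially no obstacle left at this level of abstraction: the analytic perturbation-theory content is entirely encapsulated in Theorem~\ref{theorem:main} and Corollary~\ref{corollary:vikesh}, while the matrix-analytic content of identifying $\ketbra{e}{e}$ as the rank maximizer and of reading $d^*_{\out}(\Phi)$ and $d^*_{\env}(\Phi)$ off the parameter matrices $A,B$ is handled by Proposition~\ref{prop:CDUC-rank} and Lemma~\ref{lemma:(C)DUC-min}. The only place where genuine effort re-emerges is in applications of the theorem: for a concrete pair $(A,B)$ one must verify, from the explicit formulas in Eqs.~\eqref{eq:DUC} and \eqref{eq:CDUC} together with Lemma~\ref{lemma:(C)DUC-min}, that $\operatorname{rank}\Phi(\ketbra{e}{e})$ actually attains $\min\{d^*_{\out}(\Phi),d^*_{\env}(\Phi)\}$ and that the two minimal dimensions differ; the statement simply packages this check into a clean hypothesis and hands back the capacity conclusion. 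I would therefore present the proof as a two-line deduction, reserving the substantive case analysis for the examples (Theorems~\ref{theorem:depol-transp} and \ref{theorem:MAD}) where the hypothesis is explicitly confirmed.
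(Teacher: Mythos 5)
Your proof is correct and is essentially identical to the paper's: the authors likewise dispose of this theorem in one line as a direct consequence of Proposition~\ref{prop:CDUC-rank} and Corollary~\ref{corollary:vikesh}. Your extra remark that $\tfrac{1}{d}\ketbra{e}{e}$ is the properly normalized pure state (with unchanged output rank) is a harmless refinement of the paper's slightly informal use of the unnormalized $\ketbra{e}{e}$.
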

\begin{proof}
The proof is a straighforward consequence of Proposition~\ref{prop:CDUC-rank} and Corollary~\ref{corollary:vikesh}.
\end{proof}

\subsection{Generalized dephasing or Hadamard channels}

{Generalized dephasing} or {Hadamard} channels (also known as {Schur multipliers}) $\Phi_B:\M{d}\to \M{d}$ have the effect of diminishing the magnitude of the off-diagonal entries of the input states while perfectly preserving their diagonal elements. These channels are parameterized by a {correlation} matrix $B\in\M{d}$ (i.e., $B\geq 0$ and all its diagonal entries are equal to one):
\begin{equation}
    \forall X\in \M{d}: \qquad \Phi_B(X) = B\odot X.
\end{equation}
Note that $\odot$ above denotes the Hadamard or entrywise matrix product. It is clear that a dephasing channel $\Phi_B=\Phi_{A,B}$ is CDUC (see Eq.~\eqref{eq:CDUC}) with $A=\iden_d$. Complementary channels to the dephasing channels are known to be entanglement-breaking (and hence also antidegradable); so they have zero quantum capacity. This, in particular, implies that the dephasing channels themselves are all degradable and hence their quantum capacities admit a single-letter expression in terms of the channels' coherent information $\mathcal{Q}(\Phi_B)=\mathcal{Q}^{(1)}(\Phi_B)$. In the following theorem, we obtain a complete characterization of the set of zero capacity dephasing channels.

\begin{theorem}\label{theorem:dephasing}
Let $\Phi_B:\M{d}\to \M{d}$ be a dephasing channel associated with $B\in\M{d}$. Then,
\begin{equation*}
    \mathcal{Q}(\Phi_B)=0 \iff \Phi_B \text{ is entanglement-breaking }\iff \Phi_B \text{ is PPT }\iff B = \iden_d.
\end{equation*}
\end{theorem}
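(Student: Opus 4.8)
The plan is to prove the chain of equivalences by establishing implications around the loop, with the real work concentrated in showing that a non-trivial dephasing channel (i.e. one with $B \neq \iden_d$) has strictly positive quantum capacity. The easy implications go as follows. First, $B = \iden_d$ gives the identity channel composed with... no, actually $B = \iden_d$ gives the completely dephasing channel $\Phi_{\iden_d}(X) = \operatorname{diag}(X)$, which is entanglement-breaking (it is a measure-and-prepare channel in the standard basis), hence PPT, hence zero capacity. So $B = \iden_d \implies$ (entanglement-breaking) $\implies$ (PPT) $\implies \mathcal{Q}(\Phi_B) = 0$ are all standard (the last using that PPT channels have zero capacity, as recalled in Section~\ref{subsec:pre-cap}). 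It therefore remains to close the loop by proving $\mathcal{Q}(\Phi_B) = 0 \implies B = \iden_d$, equivalently the contrapositive $B \neq \iden_d \implies \mathcal{Q}(\Phi_B) > 0$.

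For the main implication I would use Theorem~\ref{theorem:main} directly. Since $\Phi_B = \Phi_{\iden_d, B}$ is CDUC, a Stinespring picture is available, but more simply one can write down a Kraus/complementary description explicitly: a dephasing channel has Kraus operators that are diagonal, and a standard complementary channel sends $X \mapsto \sum_{k,l} \sqrt{\lambda_k \lambda_l}\,\langle v_l \ket{v_k}\, X_{\text{diag-weighted}} \ketbra{k}{l}$ where $B = \sum_k \lambda_k \ketbra{b_k}{b_k}$... more concretely, writing $B_{ij} = \langle w_i \ket{w_j}$ for unit vectors $w_i \in \C{m}$ (Gram decomposition of the correlation matrix $B$), the isometry is $V\ket{i} = \ket{i}\otimes\ket{w_i}$, so $\Phi_B(X) = B \odot X$ and $\Phi_c(X) = \sum_{i,j} X_{ij}\,\langle w_j \ket{w_i}... $ wait, $\Phi_c(X) = \sum_{i,j} X_{ij} \ketbra{w_i}{w_j} \cdot \langle i \ket j$-type — let me just say $\Phi_c(\ketbra{\psi})= \sum_{i,j}\psi_i\overline{\psi_j}\,\delta_{ij}$... the cleanest is: $\Phi_c(X) = \sum_i X_{ii}\ketbra{w_i}{w_i}$ is wrong too; the correct complement is $\Phi_c(X)=\sum_{i,j}X_{ij}\ket{w_i}\bra{w_j}$ restricted appropriately. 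The key structural fact I want is: for a pure input $\ketbra{\psi}$ with $\ket\psi = \sum_i \psi_i\ket{i}$, $\Phi_B(\ketbra\psi) = B \odot (\psi\psi^*)$, while $\Phi_c(\ketbra\psi)$ has rank equal to the rank of the Gram-type matrix $(\overline{\psi_i}\psi_j \langle w_i\ket{w_j})$, which is $1$ when all $\psi_i \neq 0$ (since then it is rank-one: it equals $\ket{u}\bra{u}$ with $u = \sum_i \psi_i \ket{w_i}$, up to conjugation). Hence for a full-support pure state, $\ker \Phi_c(\ketbra\psi)$ is huge while $\ker \Phi_B(\ketbra\psi)$ can be made small — in fact if $B$ has no zero off-diagonal entries one can take $\ket\psi = \frac{1}{\sqrt d}\ket e$ and check $\operatorname{rank}\Phi_B(\ketbra e) = d$.

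The subtlety — and the main obstacle — is that $B \neq \iden_d$ only guarantees \emph{some} off-diagonal entry $B_{ij} \neq 0$, not that all are non-zero; and also the complement's output dimension must exceed the channel's output dimension for the rank comparison to bite. So I would restrict attention to the $2\times 2$ principal submatrix indexed by $\{i,j\}$ with $B_{ij}\neq 0$: since capacity is non-increasing under... hmm, actually one must be careful, restricting a channel to a subspace is not quite a sub-channel of the same capacity. Instead I would work directly: choose $\ket\psi = \frac{1}{\sqrt 2}(\ket i + \ket j)$. Then $\Phi_B(\ketbra\psi) = \frac12(\ketbra i + B_{ij}\ketbra i j + B_{ji}\ketbra j i + \ketbra j)$, which since $|B_{ij}|\le 1$ with equality impossible unless that $2\times2$ block is rank one — if $|B_{ij}| < 1$ this has rank $2$, so $K_\psi$ projects onto the $(d-2)$-dimensional orthogonal complement of $\operatorname{span}\{\ket i,\ket j\}$. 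Meanwhile $\Phi_c(\ketbra\psi)$ is rank one, so $K^c_\psi$ has rank $d^*_{\env}-1$. Now choose $\sigma = \ketbra k$ for a $k \notin\{i,j\}$ (possible if $d\ge 3$): then $\Phi_B(\sigma) = \ketbra k$ lies entirely in $\operatorname{range}K_\psi$, so $\operatorname{Tr}(K_\psi\Phi_B(\sigma)) = 1$, whereas $\operatorname{Tr}(K^c_\psi\Phi_c(\sigma)) < 1$ because $\Phi_c(\ketbra k)$ is a rank-one state $\ketbra{w_k}{w_k}$ and $K^c_\psi$, having rank one less than full, cannot have trace $1$ against every pure state — one needs to verify $\ket{w_k}$ is not orthogonal to $\operatorname{range}K^c_\psi$'s complement, i.e. that $\langle w_k | u\rangle \neq 0$ where $\ket u$ spans $\operatorname{range}\Phi_c(\ketbra\psi)$; here $\ket u \propto \ket{w_i} + \ket{w_j}$ and $\langle w_k | (w_i + w_j)\rangle = \overline{B_{ki}} + \overline{B_{kj}}$, which may vanish. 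To handle the bad cases (including $|B_{ij}| = 1$, meaning the $2\times2$ block is rank one, and the vanishing-overlap cases) I would instead optimize over the choice of $\ket\psi$ among full-support states and over $\sigma$, using that $B\neq\iden_d$ forces the overall structure to be non-degradable-to-identity; the cleanest rigorous route is probably: if $|B_{ij}|<1$ for the chosen pair then the rank-comparison above works after checking the generic overlap is non-zero (perturb $k$ or take $\sigma$ a generic full-rank state so that $\operatorname{Tr}(K^c_\psi\Phi_c(\sigma)) = \operatorname{Tr}((\Phi_c)^*(K^c_\psi)\sigma)$ and compare with $\operatorname{Tr}((\Phi_B)^*(K_\psi)\sigma)$ as operators, noting $(\Phi_B)^*(K_\psi) \ge$ projection onto the complement which beats the rank-deficient $(\Phi_c)^*(K^c_\psi)$ in trace against $\iden_d/d$); and if every off-diagonal $B_{ij}$ has modulus $1$ then $B$ is forced to be a rank-one correlation matrix (a phase matrix $B_{ij} = z_i\overline{z_j}$), in which case $\Phi_B$ is unitarily equivalent to the identity channel and trivially has $\mathcal{Q} = \log d > 0$. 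Assembling these cases gives $B\neq\iden_d\implies\mathcal{Q}(\Phi_B)>0$, closing the loop. The delicate bookkeeping of exactly which $\ket\psi,\sigma$ to pick so that the strict trace inequality of Theorem~\ref{theorem:main} holds in \emph{every} non-trivial case is where I expect to spend the most effort.
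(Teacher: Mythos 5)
Your outer skeleton (the three easy implications plus the contrapositive $B\neq\iden_d\implies\mathcal{Q}(\Phi_B)>0$) matches the paper, and two of your ingredients are sound: the Stinespring picture $V\ket{i}=\ket{i}\otimes\ket{\phi_i}$ built from a Gram decomposition of $B$, and the disposal of the all-unimodular case ($B$ rank one forces $\Phi_B$ to be unitarily equivalent to the identity). But there is a genuine error at the heart of your rank comparison: for this complement one has $\Phi_c(\ketbra{\psi})=\sum_i|\psi_i|^2\ketbra{\phi_i}$, which for a full-support $\ket{\psi}$ has rank equal to $\operatorname{rank}B$, not $1$. Indeed Lemma~\ref{lemma:2} forces $\operatorname{rank}\Phi_B(\ketbra{\psi})=\operatorname{rank}\Phi_c(\ketbra{\psi})$ for \emph{every} pure input and \emph{every} complement, so the asymmetry you hope for (``$\ker\Phi_c(\ketbra{\psi})$ is huge while $\ker\Phi_B(\ketbra{\psi})$ is small'') can only come from a mismatch of ambient dimensions, never from a mismatch of ranks. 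This sinks your fallback exactly in the hardest case, $\operatorname{rank}B=d$: there $d^*_{\out}(\Phi_B)=d^*_{\env}(\Phi_B)=d$, and \emph{any} full-support $\ket{\psi}$ gives $K_\psi=K^c_\psi=0$, so Theorem~\ref{theorem:main} is vacuous over the entire class of states you propose to optimize over. Your concrete choice $\ket{\psi}=(\ket{i}+\ket{j})/\sqrt{2}$ with $\sigma=\ketbra{k}$ also fails in identifiable cases beyond the ones you list: for $d=2$ no $k$ exists, and for a block-diagonal $B$ (say $B=B_1\oplus\iden_2$ in $\M{4}$ with the nontrivial entry inside $B_1$) every admissible $\ket{\phi_k}$ is orthogonal to $\operatorname{span}\{\ket{\phi_i},\ket{\phi_j}\}$, both traces equal $1$, and taking $\sigma$ full rank instead still yields exact equality.

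The paper's route avoids all of this by splitting on $\operatorname{rank}B$. If $\operatorname{rank}B<d$, then $d^*_{\env}(\Phi_B)<d^*_{\out}(\Phi_B)$ and $\Phi_B(\ketbra{e})=B$ already attains the maximal rank, so Corollary~\ref{corollary:vikesh} (via Theorem~\ref{theorem:CDUC-cap}) applies. If $\operatorname{rank}B=d$, it takes $\ket{\psi}=\ket{i}$ --- a basis state, precisely the kind of non-full-support input your ansatz excludes --- and uses the operator form of the criterion (Theorem~\ref{theorem:zero-necessary}): $V\Phi_B^*(R_\psi)V^\dagger=\ketbra{i\phi_i}$ is rank one, while $V(\Phi_B^c)^*(R^c_\psi)V^\dagger=\sum_k|\langle\phi_k|\phi_i\rangle|^2\ketbra{k\phi_k}$ has rank at least two whenever some $B_{ij}\neq 0$ with $j\neq i$; a rank-two positive operator cannot sit below a rank-one one, and this works uniformly in $d$ and for any zero pattern of $B$. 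To repair your argument you would need to replace the full-support ansatz by basis states and redo the comparison there (e.g.\ $\ket{\psi}=\ket{i}$, $\sigma=\iden_d/d$ gives $\operatorname{Tr}(K_\psi\Phi_B(\sigma))=(d-1)/d$ versus $\operatorname{Tr}(K^c_\psi\Phi_c(\sigma))=(d-1-\sum_{k\neq i}|B_{ik}|^2)/d$, which is strictly smaller); as written, the proof does not close.
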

\begin{proof}
It is well-known that the final three equivalences hold, see \cite[Example 7.3]{singh2020diagonal} for instance. Moreover, if $\Phi_B$ is entanglement-breaking, then it clearly has zero capacity. Hence, to prove the theorem, it suffices to show that if $\Phi_B$ is not entanglement-breaking, then $\mathcal{Q}(\Phi_B)>0$. Now, since $\Phi_B$ is unital, $d^*_{\out}(\Phi_B)=d$ (see Remark~\ref{remark:minimal}). Moreover, it is easy to see that the Choi matrix $J(\Phi_B)=\sum_{i,j=0}^{d-1}B_{ij}\ketbra{ii}{jj}$ has the same rank as $B$, so that according to Lemma~\ref{lemma:minimal}, $d^*_{\env}(\Phi_B)=\operatorname{rank}B$. Let us split the remaining proof into two cases.\\

\noindent\textbf{Case 1.} $\operatorname{rank}B<d$. \\
Since $d^*_{\env}(\Phi_B)<d^*_{\out}(\Phi_B)$, the result follows by noting that $\Phi_B(\ketbra{e})=B$, see Theorem~\ref{theorem:CDUC-cap}. \\

\noindent\textbf{Case 2.} $\operatorname{rank}B=d$. \\
In this case, we have $d^*_{\env}(\Phi_B)=d^*_{\out}(\Phi_B)=d$, so Corollary~\ref{corollary:vikesh} can no longer be applied. Let $\{\ket{\phi_i}\}_{i=0}^{d-1}\subseteq \C{d}$ be a set of $d$ linearly independent unit vectors such that $B^\top=\operatorname{gram}\{\ket{\phi_i}\}_{i=0}^{d-1}$, i.e., $\forall i,j: B_{ij}=\langle \phi_j|\phi_i\rangle$. Then, we can define an isometry $V:\C{d}\to \C{d}\otimes \C{d}$ by setting $V\ket{i}=\ket{i}\otimes \ket{\phi_i}$ as its action on the standard basis of $\C{d}$, so that
\begin{equation}
    \forall X\in\M{d}: \qquad \Phi_B(X) = \Tr_2 (VXV^\dagger) \quad \text{and} \quad \Phi^c_B(X) = \Tr_1 (VXV^\dagger),
\end{equation}
where $\operatorname{Tr}_{1(2)}$ denote the partial trace over the first (second) system and $\Phi^c_B\in\mathcal{C}_{\Phi_B}$.
\medskip
Now, since $\Phi$ is not entanglement breaking and hence $B\neq \iden_d$, there exist indices $i\neq j$ such that $B_{ij}\neq 0$. Choose $\ket{\psi}=\ket{i}$, so that in the notation of Theorem~\ref{theorem:zero-necessary}, $R_\psi=\ketbra{i}{i}$ and $R^c_\psi=\ketbra{\phi_i}{\phi_i}$. Let us now analyse the action of the adjoint maps $\Phi^*_B$ and $(\Phi^c_B)^*$ on the projectors $R_\psi$ and $R^c_\psi$: 
\begin{alignat}{2}
    V\Phi^*_B(R_\psi) V^\dagger &= VV^\dagger (R_\psi\otimes \iden_d) VV^\dagger  \qquad\qquad V(\Phi^c_B)^*(R^c_\psi) V^\dagger &&= VV^\dagger (\iden_d \otimes R^c_\psi) VV^\dagger \nonumber \\
    &=\sum_{k=0}^{d-1} VV^\dagger \ketbra{ik}{ik}VV^\dagger &&= \sum_{k=0}^{d-1} VV^\dagger \ketbra{k\phi_i}{k\phi_i}VV^\dagger \nonumber \\
    &= \ketbra{i\phi_i}{i\phi_i} \sum_{k=0}^{d-1}|\langle\phi_i|k\rangle|^2 &&=\sum_{k=0}^{d-1} \ketbra{k\phi_k}{k\phi_k} |\langle \phi_k|\phi_i\rangle|^2. 
\end{alignat}
In the above calculation, we have used the fact that $VV^\dagger = \sum_{l=0}^{d-1} \ketbra{l\phi_l}{l\phi_l}$ acts as the orthogonal projection onto $\operatorname{range}V\subseteq \C{d}\otimes \C{d}$. As $\langle\phi_j|\phi_i\rangle\neq 0$, the number of terms in the second sum above is at least two, so that $V(\Phi^c_B)^*(R^c_\psi) V^\dagger$ has rank at least two. Therefore, $V(\Phi^c_B)^*(R^c_\psi) V^\dagger \not\leq V\Phi^*_B(R_\psi) V^\dagger \implies (\Phi^c_B)^*(R^c_\psi) \not\leq \Phi^*_B(R_\psi)$, since $V\Phi^*_B(R_\psi) V^\dagger$ is of unit rank. Theorem~\ref{theorem:zero-necessary} then tells us that $\mathcal{Q}^{(1)}(\Phi_B)>0$, and the proof is complete.
\end{proof}

\subsection{A family of channels with equal output and environment dimensions}
\label{subsec:equaloutenv}
Consider a unitary operator $V:{\mathbb{C}}^4 \to {\mathbb{C}}^2 \otimes {\mathbb{C}}^2$, and a pure state $\ket{\psi} \in {\mathbb{C}}^4 $ such that $V\ket{\psi} = \ket{0} \otimes \ket{1},$ where $\{\ket{0},\ket{1}\}$ denotes the computational basis of ${\mathbb{C}}^2 $. Then, for the quantum channel $\Phi:\M{4}\rightarrow \M{2}$ and its complement $\Phi_c:\M{4}\rightarrow \M{2}$ defined as follows, $d^{*}_{\out}(\Phi) = d^{*}_{\env}(\Phi)=2$:  
\begin{equation}
    \forall X\in\M{4}: \qquad \Phi(X) = \Tr_2 (VXV^\dagger) \quad, \quad \Phi_c(X) = \Tr_1 (VXV^\dagger).
\end{equation}
the orthogonal projections onto $\operatorname{range}\Phi(\ketbra{\psi})$ and $ \operatorname{range}\Phi_c(\ketbra{\psi})$ are given by $R_\psi = \ketbra{0}$ and $R_\psi^c = \ketbra{1},$
respectively. It is clear that the adjoints of $\Phi$ and $\Phi_c$ act on these projectors as follows
\begin{equation}
    \Phi^*(R_\psi) = V^\dagger (R_\psi \otimes {\iden}_2) V \quad , \quad \Phi_c^*(R_\psi^c) = V^\dagger ({\iden}_2 \otimes R_\psi^c) V.
\end{equation}
Now, by using the fact that $VV^\dagger = {\iden}_4$ (since $V$ is unitary), we obtain
\begin{align}
    V\Phi^*(R_\psi)V^\dagger &= VV^\dagger (R_\psi \otimes {\iden}_2) VV^\dagger = (R_\psi \otimes {\iden}_2)= \ketbra{0} \otimes (\ketbra{0} + \ketbra{1}),\nonumber\\
    {\hbox{and}} \quad  V\Phi_c^*(R_\psi^c)V^\dagger
&= VV^\dagger (\iden_2 \otimes R^c_\psi ) VV^\dagger = ({\iden}_2 \otimes R^c_\psi)= (\ketbra{0} + \ketbra{1}) \otimes \ketbra{1}.
\end{align}
Hence, $V\Phi_c^*(R^c_\psi)V^\dagger \not\leq V\Phi^*(R_\psi)V^\dagger$, and therefore $\Phi_c^*(R^c_\psi)\not\leq \Phi^*(R_\psi)$. By Theorem~\ref{theorem:zero-necessary},
we then conclude that such a channel $\Phi$ has $\mathcal{Q}^{(1)}(\Phi) >0$. The same argument applied to the complementary channel shows that $\mathcal{Q}^{(1)}(\Phi_c) > 0$ as well.

By considering unitary mappings $V:\mathbb{C}^{d^2}\to \mathbb{C}^d\otimes \mathbb{C}^d$ and following the same steps as above, it is easy to infer that for any channel $\Phi:\mathcal{M}_{d^2}\to \M{d_{\out}}$ with $d = d^{*}_{\out}(\Phi) = d^{*}_{\env}(\Phi)$, both $\mathcal{Q}^{(1)}(\Phi)$ and $\mathcal{Q}^{(1)}(\Phi_c)$ are positive. In particular, such a channel can neither be degradable nor anti-degradable.

\subsection{Structure theorems for more capable quantum channels} \label{sec:morecapable}
Cubitt et al proved two structure theorems for degradable channels (see ~\cite[Theorem 3 and 4]{Cubitt2008degradable}), which we extend in this section to the strictly larger class of more capable quantum channels. Their proofs follow simply from our results in Section~\ref{sec:main}. 

\begin{theorem}
\label{thm:V.1}
Let $\Phi:\M{d}\rightarrow \M{d_{\out}}$ be a more capable quantum channel. If there exists a pure state $\ketbra{\psi} \in \cS_d$ such that $\operatorname{rank} \Phi (\ketbra{\psi}) = d^{*}_{\out} (\Phi)$, then $d^{*}_{\out} (\Phi) = d^{*}_{\operatorname{env}} (\Phi)$.
\end{theorem}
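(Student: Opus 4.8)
The plan is to sandwich $d^{*}_{\out}(\Phi)$ and $d^{*}_{\env}(\Phi)$ from the two sides. First I would invoke the general bound in Eq.~\eqref{eq:max-pure-rank}, which says that for \emph{every} pure state $\ketbra{\psi}\in\cS_d$ we have $\operatorname{rank}\Phi(\ketbra{\psi})\leq \min\{d^{*}_{\out}(\Phi),d^{*}_{\env}(\Phi)\}$. Applying this to the particular pure state $\ketbra{\psi}$ furnished by the hypothesis, for which $\operatorname{rank}\Phi(\ketbra{\psi})=d^{*}_{\out}(\Phi)$, immediately forces $d^{*}_{\out}(\Phi)\leq d^{*}_{\env}(\Phi)$, and moreover tells us that this pure state saturates the bound, i.e.\ $\operatorname{rank}\Phi(\ketbra{\psi})=\min\{d^{*}_{\out}(\Phi),d^{*}_{\env}(\Phi)\}$.

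Next I would argue by contradiction: suppose $d^{*}_{\out}(\Phi)<d^{*}_{\env}(\Phi)$. Then the hypotheses of the second part of Corollary~\ref{corollary:vikesh} are met — there is a pure state mapped to an output state of maximal rank $\min\{d^{*}_{\out}(\Phi),d^{*}_{\env}(\Phi)\}$, and $d^{*}_{\out}(\Phi)<d^{*}_{\env}(\Phi)$ — so for any $\Phi_c\in\mathcal{C}_\Phi$ we conclude $\mathcal{Q}(\Phi_c)\geq\mathcal{Q}^{(1)}(\Phi_c)>0$. But $\Phi$ is more capable, which by definition means precisely that every complementary channel has zero quantum capacity, $\mathcal{Q}(\Phi_c)=0$. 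This contradiction rules out the strict inequality, leaving $d^{*}_{\out}(\Phi)=d^{*}_{\env}(\Phi)$.

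There is essentially no hard step here: the statement is a direct packaging of Corollary~\ref{corollary:vikesh} against the definition of "more capable'', with Eq.~\eqref{eq:max-pure-rank} supplying the trivial reverse inequality. The only point requiring a moment's care is to check that the pure state guaranteed by the hypothesis is genuinely a state for which Corollary~\ref{corollary:vikesh} applies — that is, that $\operatorname{rank}\Phi(\ketbra{\psi})$ equals the \emph{minimum} of the two minimal dimensions and not merely $d^{*}_{\out}(\Phi)$ — but this is immediate once one has observed $d^{*}_{\out}(\Phi)\leq d^{*}_{\env}(\Phi)$ from Eq.~\eqref{eq:max-pure-rank}.
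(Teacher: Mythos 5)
Your proof is correct and follows essentially the same route as the paper: both establish $d^{*}_{\out}(\Phi)\leq d^{*}_{\env}(\Phi)$ (you via Eq.~\eqref{eq:max-pure-rank}, the paper directly via Lemma~\ref{lemma:2}, which is what Eq.~\eqref{eq:max-pure-rank} is built from) and then rule out strict inequality by applying Corollary~\ref{corollary:vikesh} against the definition of more capable. Your extra remark checking that the hypothesis state indeed attains $\min\{d^{*}_{\out}(\Phi),d^{*}_{\env}(\Phi)\}$ is a worthwhile point of care that the paper leaves implicit.
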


\begin{proof}
By Lemma~\ref{lemma:2}, we known that for all $\ketbra{\psi}\in\St{d}$,  ${\rm{rank}} \,\Phi(\ketbra{\psi}{\psi}) = {\rm{rank}}\, \Phi_c(\ketbra{\psi}{\psi})$. Hence, we must have
$d^{*}_{\operatorname{env}} (\Phi)  \geq d^{*}_{\out} (\Phi)$. However, if 
$d^{*}_{\operatorname{env}} (\Phi) > d^{*}_{\out} (\Phi)$, then according to Corollary~\ref{corollary:vikesh}, $\mathcal{Q}(\Phi_c) >0$, which leads to a contradiction. Hence, $d^{*}_{\operatorname{env}} (\Phi) = d^{*}_{\out} (\Phi)$.
\end{proof}

\begin{theorem}
\label{thm:V.2}
Let $\Phi:\M{d}\rightarrow \M{d_{\out}}$ be a more capable quantum channel with $d^{*}_{\out} (\Phi)=2$. Then $d^{*}_{\operatorname{env}} (\Phi) \leq 2$ and $d \leq 3$.
\end{theorem}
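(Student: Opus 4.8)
The plan is to prove both conclusions at once by a dichotomy on whether the maximal output rank $d^{*}_{\out}(\Phi)=2$ is attained by some pure state. By Remark~\ref{remark:iso-cap} I may assume $\Phi:\M d\to\M2$ and $\Phi_c:\M d\to\M{d_{\env}}$ are minimally defined, with associated Stinespring isometry $V:\C d\to\C2\otimes\C{d_{\env}}$ and $d_{\env}=d^{*}_{\env}(\Phi)$; since $\Phi$ is more capable, $\mathcal{Q}(\Phi_c)=0$. Equation~\eqref{eq:max-pure-rank} gives $\operatorname{rank}\Phi(\ketbra\psi)\le 2$ for every pure state, so exactly one of the following holds: (A) there is a pure $\ketbra\psi\in\cS_d$ with $\operatorname{rank}\Phi(\ketbra\psi)=2$, or (B) $\operatorname{rank}\Phi(\ketbra\psi)=1$ for every pure $\ketbra\psi\in\cS_d$.

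In case (A), Theorem~\ref{thm:V.1} immediately yields $d^{*}_{\env}(\Phi)=d^{*}_{\out}(\Phi)=2$, so $V:\C d\to\C2\otimes\C2$ and hence $d\le 4$. It remains to exclude $d=4$. If $d=4$, then $V$ is a bijective isometry, i.e.\ a unitary, and $\Phi,\Phi_c$ form exactly a pair of complementary channels of the type treated in Section~\ref{subsec:1} (with the parameter there equal to $2$, so that $\M{d^2}=\M4$ and $d^{*}_{\out}=d^{*}_{\env}=2$). The argument given there --- a direct application of Theorem~\ref{theorem:zero-necessary} to the pure state $V^{\dagger}(\ket0\otimes\ket1)$ --- shows $\mathcal{Q}^{(1)}(\Phi_c)>0$, whence $\mathcal{Q}(\Phi_c)\ge\mathcal{Q}^{(1)}(\Phi_c)>0$, contradicting more capability. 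Thus $d\le 3$, and $d^{*}_{\env}(\Phi)=2\le2$.

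In case (B), the condition $\operatorname{rank}\Phi(\ketbra\psi)=1$ for all pure $\psi$ is equivalent to saying that every vector in $\operatorname{range}V\subseteq\C2\otimes\C{d_{\env}}$ is a product vector. Here I would invoke the elementary fact that a sum $x_1\otimes y_1+x_2\otimes y_2$ of two product vectors is again a product vector only if $x_1\parallel x_2$ or $y_1\parallel y_2$; a short combinatorial argument then forces any linear subspace consisting entirely of product vectors to lie inside a single ``ruling'', i.e.\ $\operatorname{range}V\subseteq\ket x\otimes\C{d_{\env}}$ for some unit $\ket x$, or $\operatorname{range}V\subseteq\C2\otimes\ket y$ for some unit $\ket y$ (the degenerate possibilities $d_{\env}=1$ and $\dim\operatorname{range}V\le1$ are trivially of the second form). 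Substituting these into $\Phi(X)=\operatorname{Tr}_{\operatorname{env}}(VXV^{\dagger})$ shows that $\Phi$ is either the constant channel $X\mapsto\operatorname{Tr}(X)\ketbra x$ --- for which $d^{*}_{\out}=1$, contradicting $d^{*}_{\out}(\Phi)=2$ --- or an isometric channel $X\mapsto WXW^{\dagger}$ with $W:\C d\to\C2$ an isometry. In the latter case Lemma~\ref{lemma:minimal} gives $d^{*}_{\out}(\Phi)=\operatorname{rank}\Phi(\iden_d)=\operatorname{rank}(WW^{\dagger})=d$, so $d=2\le3$, and $d^{*}_{\env}(\Phi)=\operatorname{rank}J(\Phi)=1\le2$. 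Thus in both cases $d^{*}_{\env}(\Phi)\le2$ and $d\le3$, completing the proof.

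The only genuinely non-formal ingredient is the structure statement used in case (B); everything else is bookkeeping built on Theorem~\ref{thm:V.1} and Section~\ref{subsec:1}. The crux, and the step I expect to require the most care, is the ``a linear subspace contained in the Segre cone of product vectors lies in one of its two rulings'' fact: it rests on the product-sum observation above together with a case split, and one must be a little careful to absorb the degenerate cases ($d_{\env}=1$, or a one-dimensional $\operatorname{range}V$) uniformly into the isometric alternative so that the conclusion $d^{*}_{\out}=1$ versus ``$\Phi$ isometric'' is genuinely exhaustive.
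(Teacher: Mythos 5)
Your proof is correct and its skeleton coincides with the paper's: the same dichotomy on whether some pure state attains output rank $2$, Theorem~\ref{thm:V.1} in the first branch, the bound $d\le 2\,d^{*}_{\env}(\Phi)\le 4$, and the exclusion of $d=4$ via the unitary-$V$ argument of Section~\ref{subsec:1} applied through Theorem~\ref{theorem:zero-necessary}. The one place you genuinely diverge is case (B): the paper disposes of the ``every pure state maps to a pure state'' branch by citing \cite[Theorem 1]{Cubitt2008degradable} (such a channel is either isometric or completely noisy), whereas you re-derive that classification from scratch by observing that $\operatorname{range}V$ must consist entirely of product vectors and that a linear subspace of the Segre cone lies in a single ruling, $\ket{x}\otimes\C{d_{\env}}$ or $\C{2}\otimes\ket{y}$. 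That fact is true and your two-term observation ($x_1\otimes y_1+x_2\otimes y_2$ is product only if $x_1\parallel x_2$ or $y_1\parallel y_2$) is exactly the engine of its standard proof; the remaining combinatorial step (if two elements have non-parallel first factors, their second factors agree up to scale, and then every further element is forced into the same ruling by comparing against whichever of the two it is not first-factor-parallel to) is short but is only gestured at in your write-up, so it should be written out if you want the argument self-contained. What your route buys is independence from the external citation; what the paper's route buys is brevity. Your handling of the isometric subcase ($d=\operatorname{rank}\Phi(\iden_d)=2$, $d^{*}_{\env}=1$) and of the degenerate possibilities is correct, and folding the $d\le 3$ conclusion into each branch rather than deriving it uniformly at the end is only an organizational difference.
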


\begin{proof}
If there exists a pure state $\ketbra{\psi} \in \M{d}$ such that ${\rm{rank}} \,\Phi(\ketbra{\psi}{\psi}) =2$, then Theorem~\ref{thm:V.1} implies that $d^{*}_{\operatorname{env}} (\Phi)=2$. If no such pure state exists, then it must be the case that $\Phi$ always maps pure states to pure states. Then by \cite[Theorem 1]{Cubitt2008degradable}, either $(i)$ $\Phi(X)= UX U^\dagger$, where $U:\C{d}\to \C{d_{\out}}$ is an isometry, 
in which case Lemma~\ref{lemma:minimal} tells us that $d^{*}_{\operatorname{env}} (\Phi)=1$, or $(ii)$ $\Phi$ is the completely noisy channel which maps everything to a single fixed pure state: $\Phi(X)
= \Tr(X) \ketbra{\phi}$, which cannot be true since $d^{*}_{\out} (\Phi)=2$. Hence, we always have $d^*_{\env}(\Phi)\leq 2$.
Now, let $V:\C{d}\to \C{2}\otimes \C{d_{\env}}$ be the Stinespring isometry which defines the minimally defined complementary pair $\Phi:\M{d}\to\M{2}$ and $\Phi_c:\M{d}\to \M{d_{\env}}$ (i.e. $d_{\env}=d^*_{\env}(\Phi)$), see Eq.~\eqref{eq:complementary}. Then, 
\begin{equation}
    d \leq 2 d^{*}_{\env} (\Phi) \implies d \leq \begin{cases}
    2 &\text{if } d^{*}_{\env} (\Phi)=1 \\
    4 &\text{if } d^{*}_{\env} (\Phi)=2
    \end{cases}.
\end{equation}
If $d=4$, $V: \C{4} \rightarrow \C{2}\otimes \C{2}$ becomes a unitary map, which implies that $\mathcal{Q}^{(1)}(\Phi_c)>0$ and hence leads to a contradiction (see Section~\ref{subsec:equaloutenv}). Thus, $d \leq 3$ and the proof is complete. 
\end{proof}

\section{Discussion}
In this paper, we have employed some basic techniques from analytic perturbation theory of Hermitian matrices (Theorem~\ref{theorem:perturbation}) to derive a powerful sufficient condition for a quantum channel (or its complement) to have positive quantum capacity (Theorem~\ref{theorem:main}). An equivalent formulation of this condition equips us with a first-of-its-kind necessary condition for membership in the set of zero capacity quantum channels (Theorem~\ref{theorem:zero-necessary}). These are significant results because to date, no systematic procedure is known to check if a given quantum channel can be used to reliably transmit quantum information, and this is precisely because of our limited understanding of the set of zero capacity quantum channels. Notably, the main result in \cite{siddhu2020logsingularities} pops out as an immediate consequence of Theorem~\ref{theorem:main}, which can be used to detect positive quantum capacities of channels (or their complements) with unequal output and environment dimensions for which there exists a pure input state which gets mapped to a maximal rank output state (Corollary~\ref{corollary:vikesh}). Obtaining a complete characterization of channels for which such a pure state exists is a rather formidable task. Nevertheless, we have derived simple inequalities between the input, output and environment dimensions of a given channel that suffice to ensure the existence of the sought-after pure state, irrespective of the specific structure of the channel (Corollaries~\ref{corollary:vikesh2} and \ref{corollary-biginput}). By exploiting our main results, we have shown that a variety of interesting examples of quantum channels have positive quantum capacity. Moreover, our results lead to simplified proofs of certain existing structure theorems for the class of degradable quantum channels, and an extension of their applicability to the larger class of more capable quantum channels.
 
Listed below are some of the open problems that stem from our research.

\begin{itemize}
    \item Show that PPT channels satisfy the necessary condition of Theorem~\ref{theorem:zero-necessary} for membership in the set of zero capacity quantum channels.
    \item Look for other quantum channels which satisfy the condition stated in Theorem~\ref{theorem:zero-necessary}. This could potentially lead to new examples of channels which are neither anti-degradable nor PPT, but still have zero quantum capacity.
    \item Obtain examples of channels which satisfy the condition stated in Theorem~\ref{theorem:zero-necessary} for all positive integers $n\leq m$ for some fixed $m\in\mathbb{N}$, but not for $n>m$. Such channels, if they exist, would have positive $n$-shot coherent information for all $n>m$ but at the same time, would satisfy the necessary condition for having zero $n$-shot coherent information for all $n\leq m$, and hence could potentially shed light on the superadditivity of coherent information.
    \item Investigate whether Theorem~\ref{theorem:main} can be used to detect positive quantum capacity of a (C)DUC channel when Theorem~\ref{theorem:CDUC-cap} is inapplicable.
    \item Check if the perturbative techniques applied in this paper can be extended to not only detect positivity of quantum channel capacities but also to provide meaningful lower bounds on the quantum capacities.
\end{itemize}

In \cite{Singh2022coherent}, we apply the techniques developed in this paper to random quantum channels to show that typically, the coherent information of a single copy of a randomly selected channel is guaranteed to be positive if the channel's output space is larger than that of its environment. This is very interesting since in general, it is known that the coherent information needs to be computed for an unbounded number of copies of a given channel in order to detect its capacity \cite{Cubitt2015unbounded}. Hence, whenever the channel's output space is larger than the environment, we can be almost sure that a single copy of the channel has positive coherent information. 

\section{Methods}

\subsection{Minimal output and environment dimensions}\label{appen:lemmaminimal}

\noindent \textit{Proof of Lemma~\ref{lemma:minimal}.}\hspace{0.2cm}Let us begin with the expression for the minimal environment dimension. For any channel $\Phi:\M{d}\to \M{d_{\out}}$, the Stinespring dilation theorem tells us that there exists an isometry $V:\C{d}\to \C{d_{\out}}\otimes \C{d_{\env}}$ with $\operatorname{rank}J(\Phi)= d_{\env}$ such that $\Phi(X)=\operatorname{Tr}_{\env}(VXV^\dagger)$ for all $X\in\M{d}$, see \cite[Corollary 2.27]{watrous2018theory}. Moreover, since any other isometry $V:\C{d}\to \C{d_{\out}}\otimes \C{d_{\env}}$ that defines $\Phi$ as above gives a rank one decomposition of its Choi matrix:
\begin{equation}
    J(\Phi) = \sum_{k=1}^{d_{\env}} \ketbra{ \operatorname{vec}(V_k)}, \quad\text{where}\,\, V_k = (\iden_d \otimes \bra{k})V,
\end{equation}
it is clear that $\operatorname{rank}J(\Phi)\leq d_{\env}$ for all such isometries $V:\C{d}\to \C{d_{\out}}\otimes \C{d_{\env}}$. This shows that $d^{*}_{\operatorname{env}}(\Phi)=\operatorname{rank}J(\Phi)$. 

Now, let $r=\operatorname{rank}\Phi'_c(\iden_d)$ for some $\Phi'_c\in\mathcal{C}_\Phi$. Then, since all complementary channels $\Phi_c:\M{d}\to\M{d_{\env}}$ are isometrically related (see Remark~\ref{remark:iso-exten}), we have $r=\operatorname{rank}\Phi_c(\iden_d)\leq d_{\env}$ for all such $\Phi_c:\M{d}\to\M{d_{\env}}$, so that $r\leq d^*_{\env}(\Phi)$. To establish the reverse inequality, observe that for any positive semi-definite $X\in\M{d}$ and $\Phi_c:\M{d}\to\M{d_{\env}}$ complementary to $\Phi$, there exists a small enough $\delta>0$ such that 
\begin{equation}
    \delta X\leq \iden_d\implies \delta\Phi_c(X)\leq \Phi_c(\iden_d) \implies \operatorname{range}\Phi_c(X)\subseteq \operatorname{range} \Phi_c(\iden_d) \simeq \C{r}.
\end{equation}
If $W:\operatorname{range}\Phi_c(\iden_d)\to \C{d_{\env}}$ denotes an isometric embedding of $\operatorname{range}\Phi_c(\iden_d)$ into $\C{d_{\env}}$, then the channel $\Phi'_c:\M{d}\to\M{r}$ defined as $\Phi'_c(X) = W^\dagger\Phi_c(X) W$ is also complementary to $\Phi$, so that $d^*_{\env}(\Phi)\leq r$. This proves that $d^*_{\env}(\Phi)=r$. 

Finally, the expression for the minimal output dimension can be derived by applying the formula for the minimal environment dimension to any complementary channel $\Phi_c\in\mathcal{C}_\Phi$. \qed

\subsection{Analytic perturbation theory and the proof of Theorem~\ref{theorem:main}}\label{appendix:main}

The main tool that we employ in the proof of Theorem~\ref{theorem:main} is the following seminal result from analytic perturbation theory.

\begin{theorem}\label{theorem:perturbation}
\cite[Introduction, Theorem 1]{baum1985perturbation} or \cite[Chapter 1, Theorem 1]{rellich1969perturbation}\\Let $A(\epsilon)=A_0+\epsilon A_1$, where $A_0,A_1\in \M{d}$ are Hermitian matrices and $\epsilon\in \mathbb{R}$. Let $\lambda_0$ be an eigenvalue of $A_0$ of multiplicity $n$ and let $P_0$ be the orthogonal projection onto the corresponding $n-$dimensional eigenspace. Then, for sufficiently small $|\epsilon|$, $A(\epsilon)$ has exactly $n$ analytic eigenvalues (counted with multiplicities) in the neighborhood of $\lambda_0$ with convergent power series expansions:
\begin{equation}
    \lambda_i(\epsilon) = \lambda_0 + \lambda_{i1}\epsilon + \lambda_{i2}\epsilon^2 + \ldots, \quad i=1,2,\ldots, n .
\end{equation}
Further, the non-zero first order correction terms $\lambda_{i1}$ are the non-zero eigenvalues of $P_0 A_1 P_0$.
\end{theorem}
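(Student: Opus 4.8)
The plan is to prove the theorem in two movements: first establish that the $n$ eigenvalues of $A(\epsilon)$ emanating from $\lambda_0$ depend \emph{analytically} on $\epsilon$, and then extract the first-order coefficients by an explicit perturbative computation. The analyticity is where the Hermitian hypothesis does all the work, and it is the part I would treat most carefully. I would begin by isolating $\lambda_0$ spectrally: choose a positively oriented circle $\Gamma\subseteq\mathbb{C}$ centred at $\lambda_0$ whose interior contains $\lambda_0$ but no other eigenvalue of $A_0$. Since the eigenvalues of $A(\epsilon)$ move continuously with $\epsilon$, for all sufficiently small $|\epsilon|$ exactly $n$ of them (counted with multiplicity) lie inside $\Gamma$ and none lie on $\Gamma$. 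I would then form the Riesz spectral projection
\[
P(\epsilon) \;=\; \frac{1}{2\pi i}\oint_\Gamma (\zeta\iden_d - A(\epsilon))^{-1}\,d\zeta .
\]
Because $A(\epsilon)=A_0+\epsilon A_1$ is a polynomial in $\epsilon$, the resolvent $(\zeta\iden_d-A(\epsilon))^{-1}$ is a rational function of $\epsilon$ with no poles on the compact set $\Gamma$ for small $|\epsilon|$; hence $P(\epsilon)$ is real-analytic in $\epsilon$ near $0$, with $P(0)=P_0$ and constant rank $n$ (an integer-valued continuous quantity).

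Next I would flatten the moving eigenspace onto the fixed one. Kato's transformation function supplies an analytic family of unitaries $U(\epsilon)$ with $U(0)=\iden_d$ and $U(\epsilon)P_0U(\epsilon)^\dagger=P(\epsilon)$, obtained by solving the linear ODE $U'(\epsilon)=[P'(\epsilon),P(\epsilon)]\,U(\epsilon)$ with $U(0)=\iden_d$; here $[P',P]$ is anti-Hermitian, so $U(\epsilon)$ stays unitary, and the identities $P'P+PP'=P'$, $PP'P=0$ guarantee it intertwines the projections. Conjugating, $\widehat A(\epsilon):=U(\epsilon)^\dagger A(\epsilon)U(\epsilon)$ is Hermitian, analytic in $\epsilon$, and commutes with $P_0$, so it restricts to an $n\times n$ Hermitian analytic family $B(\epsilon)$ on $\operatorname{range}P_0$ with $B(0)=\lambda_0\iden_n$; the eigenvalues of $A(\epsilon)$ inside $\Gamma$ are exactly those of $B(\epsilon)$.

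The characteristic polynomial of $B(\epsilon)$ has coefficients analytic in $\epsilon$, so its roots admit Puiseux expansions about $\epsilon=0$. Here is the crux, and the main obstacle: for every real $\epsilon$ these roots are real (eigenvalues of a Hermitian matrix), and a Puiseux branch with a genuinely fractional leading exponent $p/q$ ($q\ge 2$, $\gcd(p,q)=1$) would take non-real values for small \emph{negative} real $\epsilon$, since $\epsilon^{p/q}=|\epsilon|^{p/q}e^{i\pi p/q}$ is real only when $q\mid p$. Reality on a two-sided neighbourhood therefore forces $q=1$, eliminating all branching and yielding $n$ honest convergent power series $\lambda_i(\epsilon)=\lambda_0+\lambda_{i1}\epsilon+\lambda_{i2}\epsilon^2+\cdots$. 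This no-branching argument is precisely what fails for non-Hermitian families and is the heart of the theorem.

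Finally I would compute the first-order coefficients. The analyticity of $B(\epsilon)$ together with its analytic eigenprojections furnishes analytic eigenvectors $v_i(\epsilon)=v_i^{(0)}+\epsilon v_i^{(1)}+\cdots$ with $v_i^{(0)}\in\operatorname{range}P_0$, chosen orthonormal at $\epsilon=0$, satisfying $A(\epsilon)v_i(\epsilon)=\lambda_i(\epsilon)v_i(\epsilon)$. Collecting the order-$\epsilon$ terms gives $A_0v_i^{(1)}+A_1v_i^{(0)}=\lambda_0v_i^{(1)}+\lambda_{i1}v_i^{(0)}$; applying $P_0$ and using $P_0A_0=\lambda_0P_0$ cancels the $v_i^{(1)}$ contributions and leaves $(P_0A_1P_0)v_i^{(0)}=\lambda_{i1}v_i^{(0)}$. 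Thus each $\lambda_{i1}$ is an eigenvalue of $P_0A_1P_0$ with eigenvector $v_i^{(0)}$, and since the $v_i^{(0)}$ form an orthonormal basis of $\operatorname{range}P_0$ while $P_0A_1P_0$ annihilates the orthogonal complement, the non-zero first-order corrections $\lambda_{i1}$ are exactly the non-zero eigenvalues of $P_0A_1P_0$, as claimed.
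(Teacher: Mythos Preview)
The paper does not actually prove this theorem; it is stated in Appendix~A with citations to Baumg\"artel and Rellich and then used as a black box in the proof of Theorem~\ref{theorem:main}. There is therefore nothing to compare your argument against.

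That said, your proposal is a correct and essentially complete proof of this classical result, following the standard Kato--Rellich route: isolate the eigenvalue group via the Riesz projection, straighten the analytic family of spectral subspaces using Kato's transformation function, reduce to an $n\times n$ Hermitian analytic family $B(\epsilon)$ with $B(0)=\lambda_0\iden_n$, and then rule out fractional Puiseux exponents by the reality constraint. The first-order computation at the end is the textbook degenerate-perturbation-theory argument and is fine as written. One small point worth tightening: your no-branching argument invokes only the \emph{leading} fractional exponent, but one really needs that every term in the Puiseux expansion with a genuinely fractional exponent vanishes. This follows by the same reasoning (reality of all $q$ conjugate branches for real $\epsilon>0$ forces $c_k=0$ unless $q\mid k$), so the gap is cosmetic rather than substantive.
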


Before proving our main theorem, we need to develop a few auxiliary lemmas.

\begin{lemma} \label{lemma:1}
Let $R>0$ and $f\in C^\omega [0,R)$ be a non-negative real analytic function (not identically equal to zero) with the following power series expansion:
\begin{equation}
    f(\epsilon) = f_0 + f_1 \epsilon + f_2 \epsilon ^2 + \ldots, \quad \epsilon\in [0,R)
\end{equation}
Then, there exists $r\in (0,R)$ such that $g(\epsilon)=f(\epsilon)\log f(\epsilon)$ is real analytic on the interval $(0,r)$, i.e. $g\in C^\omega(0,r)$. Moreover, the derivative $g'\in C^\omega (0,r)$ is unbounded if and only if $f_0 = 0$ and $f_1\neq 0$.
\end{lemma}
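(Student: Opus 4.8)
The plan is to first exploit the rigidity of real analytic functions to isolate a convenient interval, and then to reduce the whole statement to the local behaviour of $f$ at $\epsilon=0$. Since $f\in C^\omega[0,R)$ is not identically zero, its zero set is discrete; let $r_0\in(0,R]$ be the infimum of the positive zeros of $f$ (with $r_0=R$ if $f$ has none), so that $r_0>0$ and $f(\epsilon)>0$ for every $\epsilon\in(0,r_0)$. Fix any $r\in(0,r_0)$. On $(0,r)$ the function $f$ is positive and real analytic and $\log$ is real analytic on $(0,\infty)$, so $\log\circ f\in C^\omega(0,r)$; hence $g=f\log f\in C^\omega(0,r)$, and therefore $g'\in C^\omega(0,r)$ as well. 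This settles the analyticity claim.

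For the dichotomy I would use the identity $g'(\epsilon)=f'(\epsilon)\log f(\epsilon)+f(\epsilon)\cdot f'(\epsilon)/f(\epsilon)=f'(\epsilon)\bigl(1+\log f(\epsilon)\bigr)$, valid on $(0,r)$. Because $g'\in C^\omega(0,r_0)$ and $f$ stays positive on all of $(0,r_0)\supset(0,r]$, the function $g'$ is bounded on every subinterval $[\delta,r]$, so the only possible source of unboundedness on $(0,r)$ is the limit $\epsilon\to 0^+$; thus it suffices to decide whether $\limsup_{\epsilon\to0^+}\lvert g'(\epsilon)\rvert$ is finite. Let $m\ge 0$ be the order of vanishing of $f$ at $0$ and write $f(\epsilon)=\epsilon^m h(\epsilon)$ with $h\in C^\omega$ near $0$ and $h(0)\neq 0$; since $f\ge 0$ and $\epsilon^m>0$ for $\epsilon>0$, one actually has $h(0)>0$. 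Then $f'(\epsilon)=\epsilon^{m-1}\bigl(mh(\epsilon)+\epsilon h'(\epsilon)\bigr)$ and $\log f(\epsilon)=m\log\epsilon+\log h(\epsilon)$ with $\log h(\epsilon)\to\log h(0)$ finite. I then split into three exhaustive cases: if $m=0$ (i.e.\ $f_0\neq 0$), then $g'(\epsilon)\to f_1\bigl(1+\log f_0\bigr)$, finite; if $m\ge 2$, then $\lvert g'(\epsilon)\rvert$ is comparable to $\epsilon^{m-1}\lvert\log\epsilon\rvert\to 0$, so $g'$ extends continuously through $0$; and if $m=1$ (equivalently $f_0=0$ and $f_1\neq 0$), then $f'(\epsilon)\to f_1\neq 0$ while $1+\log f(\epsilon)\to-\infty$, hence $\lvert g'(\epsilon)\rvert\to\infty$. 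Collecting the cases yields precisely that $g'$ is unbounded if and only if $m=1$, i.e.\ if and only if $f_0=0$ and $f_1\neq 0$.

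There is no deep obstacle here: the factorization $f=\epsilon^m h$, the product rule, and the asymptotics of $\epsilon^{m-1}\log\epsilon$ are all routine. The only points needing genuine care are (i) pinning down the intended meaning of ``unbounded'', which I handle by choosing $r$ strictly below the next zero of $f$ so that all bad behaviour is confined to $\epsilon\to 0^+$, and (ii) upgrading $h(0)\neq 0$ to $h(0)>0$ — this is the one place where the non-negativity hypothesis on $f$ is truly used, and it is what makes $\log f$ well defined near $0$ and makes the sign analysis in the $m=1$ case legitimate.
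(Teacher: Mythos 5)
Your proof is correct and follows essentially the same route as the paper's: isolate the zeros of $f$ to obtain positivity and analyticity of $g$ on $(0,r)$, then reduce the boundedness question to the $\epsilon\to 0^+$ asymptotics of $g'=f'\,(1+\log f)$. The only difference is organizational --- you package the asymptotics via the factorization $f(\epsilon)=\epsilon^m h(\epsilon)$ with $h(0)>0$ and a case split on the order of vanishing $m$, whereas the paper tracks the power series directly and extracts the singular term $f_1\log\epsilon$ plus a bounded remainder; both hinge on the same fact that $\epsilon^{m-1}\log\epsilon$ stays bounded except when $m=1$.
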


\begin{proof}
This proof exploits some basic properties of real analytic functions, see \cite[Chapter 1]{krantz2002primer}. Firstly, recall that since the Taylor series of $f$ converges in the interval $[0,R)$, the derived series $f'$ also converges in the same interval. Hence, for every $r\in (0,R)$ we can uniformly bound both $f$ and $f'$ on the domain $[0,r]$. Moreover, since the zeros of analytic functions are isolated, it is easy to find an $r \in (0,R)$ such that no zeros of $f$ lie within the range $(0,r]$, i.e. $\forall \epsilon\in (0,r]: f(\epsilon)>0$. Hence, there exist constants $c_1,c_2>0$ such that $\forall \epsilon\in (0,r): \,\, c_1\leq f(\epsilon)\leq c_2$, except possibly when $f_0=f(0)=0$, in which case $f$ is still strictly positive and bounded on $(0,r)$ but $f(\epsilon)\rightarrow 0$ as $\epsilon\rightarrow 0$.

Now, since $\log$ is real analytic on $(0,\infty)$ and $f$ maps $(0,r)$ within $(0,\infty)$, it is clear that $g$ is real analytic on $(0,r)$.  Then, $g'$ is also analytic on $(0,r)$ and admits an expression of the form
\begin{align}
    g'(\epsilon) &= f'(\epsilon)\log f(\epsilon) + f'(\epsilon) \nonumber\\
          &= (f_1 + \mathcal{O}(\epsilon)) \log (f_0 + f_1\epsilon + \mathcal{O}(\epsilon^2)) + f_1 + \mathcal{O}(\epsilon) \nonumber\\ 
          &= f_1 \log (f_0 + f_1\epsilon + \mathcal{O}(\epsilon^2)) + \left[\mathcal{O}(\epsilon) \log (f_0 + f_1\epsilon + \mathcal{O}(\epsilon^2)) + f_1 + \mathcal{O}(\epsilon)\right],
\end{align}
where we have used the symbol $\mathcal{O}(\epsilon^k)$ as a placeholder for functions $h$ for which there exists $0< r_0 \leq r$ and $C>0$ such that $\forall \epsilon\in [0,r_0): \, |h(\epsilon)|\leq C\epsilon^k$. It is then straightforward to infer from our previous discussion that the terms in the square brackets above are always bounded in the interval $(0,r)$. Now, if either $f_0\neq 0$ or $f_1=0$, the first term is also bounded. However, if $f_0=0$ and $f_1\neq 0$, the first term splits up into $f_1\log \epsilon + f_1\log (f_1+\mathcal{O}(\epsilon))$, which blows up as $\epsilon\rightarrow 0$. Hence, we can write the derivative as follows
\begin{equation}\label{eq:derivative}
    g'(\epsilon) = \begin{cases} 
            f_1\log \epsilon + K_1(\epsilon), \quad &\text{if } f_0=0,f_1\neq 0 \\
            K_2(\epsilon), \quad &\text{otherwise}
            \end{cases}
\end{equation}
where both $K_1$ and $K_2$ are bounded on $(0,r)$.
\end{proof}

\begin{lemma}\label{lemma:2}
Let $\Phi:\M{d}\rightarrow \M{d_{\out}}$ be a channel and $\Phi_c\in \mathcal{C}_\Phi$. Then, for a pure state $\ketbra{\psi}{\psi}\in \mathcal{S}_d$, the non-zero eigenvalues (counted with multiplicity) of $\Phi(\ketbra{\psi}{\psi})$ and $\Phi_c(\ketbra{\psi}{\psi})$ are identical.
\end{lemma}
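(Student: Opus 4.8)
The plan is to reduce the statement to the elementary fact that the two marginals of a pure bipartite state have the same non-zero spectrum, via the Schmidt decomposition. First I would fix a Stinespring isometry $V:\C{d}\to\C{d_{\out}}\otimes\C{d_{\env}}$ realising the complementary pair, i.e. $\Phi(X)=\operatorname{Tr}_{\env}(VXV^\dagger)$ and $\Phi_c(X)=\operatorname{Tr}_{\out}(VXV^\dagger)$ for all $X\in\M{d}$, as in Eq.~\eqref{eq:complementary}; such a $V$ exists by Remark~\ref{remark:iso-exten}. Applying this to $X=\ketbra{\psi}{\psi}$, the operator $V\ketbra{\psi}{\psi}V^\dagger$ is a rank-one projection onto the line spanned by the unit vector $V\ket{\psi}\in\C{d_{\out}}\otimes\C{d_{\env}}$.

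Next I would write the Schmidt decomposition $V\ket{\psi}=\sum_{i=0}^{n-1}\sqrt{s_i}\,\ket{i}_{\out}\ket{i}_{\env}$ with $s_i\geq 0$, $n=\min\{d_{\out},d_{\env}\}$, and orthonormal families $\{\ket{i}_{\out}\}$, $\{\ket{i}_{\env}\}$, as recalled in Section~\ref{subsec:channels}. Tracing out the environment gives $\Phi(\ketbra{\psi}{\psi})=\sum_i s_i\ketbra{i}{i}_{\out}$, and tracing out the output gives $\Phi_c(\ketbra{\psi}{\psi})=\sum_i s_i\ketbra{i}{i}_{\env}$. Hence both operators have spectrum $\{s_0,\dots,s_{n-1}\}$ together with extra zeros, so their non-zero eigenvalues, counted with multiplicity, coincide (both equal the multiset of non-zero $s_i$'s).

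Finally, to see that the conclusion does not depend on the particular $\Phi_c\in\mathcal{C}_\Phi$ chosen, I would invoke Remark~\ref{remark:iso-exten}: any two complementary channels in $\mathcal{C}_\Phi$ are isometrically related, and conjugation $Y\mapsto WYW^\dagger$ by an isometry $W$ preserves the non-zero eigenvalues of a positive semi-definite matrix together with their multiplicities, since $W^\dagger W=\iden$ implies $WYW^\dagger$ and $Y$ have the same non-zero singular values. There is essentially no hard step here; the only point requiring a moment's care is precisely this independence from the choice of complement, which the isometric-relatedness observation settles immediately.
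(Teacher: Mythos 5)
Your proposal is correct and follows exactly the paper's own argument: fix a Stinespring isometry, take the Schmidt decomposition of $V\ket{\psi}$, and observe that the two partial traces share the non-zero Schmidt coefficients as their non-zero eigenvalues. The closing remark on independence from the choice of $\Phi_c\in\mathcal{C}_\Phi$ is a harmless addition that the paper leaves implicit.
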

\begin{proof}
Let $V:\C{d}\rightarrow \C{d_{\out}}\otimes \C{d_{\operatorname{env}}}$ be the isometry that defines the complementary channels $\Phi$ and $\Phi_c$ as in Eq~\eqref{eq:complementary}. Using Schmidt decomposition, we can express $V\ket{\psi}=\sum_{i=1}^n \sqrt{s_i}\ket{i}_{\out}\ket{i}_{\operatorname{env}}$, where $\{\ket{i}_{\out}\}_{i=1}^n\subseteq \C{d_{\out}}$ and $\{\ket{i}_{\operatorname{env}}\}_{i=1}^n\subseteq \C{d_{\operatorname{env}}}$ are orthonormal sets, $s_i\geq 0$ and $n\leq \min \{d_{\out},d_{\env}\}$. Then, the desired result follows by noting that 
\begin{equation}
    \Phi(\ketbra{\psi}{\psi})=\sum_i s_i\ket{i}_{\out}\bra{i} \qquad \text{and}\qquad \Phi_c(\ketbra{\psi}{\psi}) = \sum_{i} s_i \ket{i}_{\operatorname{env}}\bra{i}.
\end{equation}
\end{proof}

We are now ready to prove Theorem~\ref{theorem:main}.

\noindent\textit{Proof of Theorem~\ref{theorem:main}}.
We consider the one-parameter family of states $\rho (\epsilon) = (1-\epsilon)\ketbra{\psi}{\psi} + \epsilon \sigma$ within the range $0\leq \epsilon \leq 1$, and analyse the outputs under the given channels 
\begin{alignat}{3}
\Phi[\rho(\epsilon)] &= (1-\epsilon)\Phi(\ketbra{\psi}{\psi}) + \epsilon \Phi(\sigma), \qquad\qquad\qquad \Phi_c[\rho(\epsilon)]&&=(1-\epsilon)\Phi_c(\ketbra{\psi}{\psi}) + \epsilon \Phi_c(\sigma). \nonumber\\
&= \Phi(\ketbra{\psi}{\psi}) + \epsilon [\Phi(\sigma)-\Phi(\ketbra{\psi}{\psi})]  &&= \Phi_c(\ketbra{\psi}{\psi}) + \epsilon [\Phi_c(\sigma)-\Phi_c(\ketbra{\psi}{\psi})]
\end{alignat}
At $\epsilon=0$, both $\Phi(\ketbra{\psi}{\psi})$ and $\Phi_c(\ketbra{\psi}{\psi})$ have the same non-zero spectrum (Lemma~\ref{lemma:2}):
\begin{equation}
    \operatorname{spec}\Phi(\ketbra{\psi}{\psi}) = \{s_{10},\ldots ,s_{n0},0,\ldots ,0 \} \quad\text{and}\quad \operatorname{spec}\Phi_c(\ketbra{\psi}{\psi}) = \{s_{10},\ldots ,s_{n0},0,\ldots ,0 \},
\end{equation}
where $n\in \mathbb{N}$ is the number of non-zero eigenvalues counted with multiplicities, so that $K_\psi$ and $K^c_\psi$ are projectors onto the respective null spaces of dimensions $\kappa = d_{\out}-n$ and $\kappa_c = d_{\operatorname{env}}-n$.
Observe that for our purposes, the spectrum of a $d\times d$ matrix is just the (unordered) sequence of all its eigenvalues counted with multiplicities. Now, once we turn on the perturbation and $\epsilon$ is sufficiently small (say $0\leq\epsilon<R $), we get the following analytic eigenvalue functions (see Theorem~\ref{theorem:perturbation}):
\begin{alignat}{2}
    s_i(\epsilon) &= s_{i0}+s_{i1}\epsilon + s_{i2}\epsilon^2 \ldots \qquad\qquad \lambda_j (\epsilon) &&= 0+ \lambda_{j1}\epsilon + \lambda_{j2}\epsilon^2\ldots \nonumber\\ 
    s^c_i(\epsilon) &= s_{i0}+s^c_{i1}\epsilon +s^c_{i2}\epsilon^2 \ldots \qquad\qquad \lambda^c_k (\epsilon) &&= 0+ \lambda^c_{k1}\epsilon +\lambda^c_{k2}\epsilon^2\ldots
\end{alignat}
such that $$\operatorname{spec}\Phi(\rho(\epsilon))=\{s_i(\epsilon), \lambda_j(\epsilon)\} \qquad\text{and}\qquad \operatorname{spec}\Phi_c(\rho(\epsilon))=\{s^c_i(\epsilon), \lambda^c_k(\epsilon)\},$$ where $i=1,2,\ldots ,n$, $j=1,2,\ldots ,d_{\out}-n$ and $k=1,2,\ldots ,d_{\operatorname{env}}-n$. Assume, without loss of generality, that all the above functions are not identically zero (otherwise, we can just restrict ourselves to those which are non-zero). We are interested in analyzing the coherent information of $\rho(\epsilon)$ with respect to $\Phi$, which has the following form (see Section~\ref{subsec:def})
\begin{align}
    I(\epsilon) := I_c(\rho(\epsilon);\Phi) &= S[\Phi(\rho(\epsilon))] - S[\Phi_c(\rho(\epsilon))] \nonumber\\ &=\sum_{i}s^c_i(\epsilon)\log s^c_i(\epsilon) + \sum_k \lambda^c_k(\epsilon) \log \lambda^c_k(\epsilon) - \sum_{i}s_i(\epsilon)\log s_i(\epsilon) - \sum_j \lambda_j(\epsilon) \log \lambda_j(\epsilon).
\end{align}
Observe that we can immediately apply Lemma~\ref{lemma:1} to each term in the above sum, so that we obtain $0<r<R$ such that $I(\epsilon)$ is analytic on $(0,r)$ and its derivative can be written as 
\begin{equation}
I'(\epsilon) = \left[ \sum_{j:\lambda_{j1}\neq 0} \lambda_{j1} - \sum_{k : \lambda^c_{k1}\neq 0} \lambda^c_{k1} \right] \log (1/\epsilon) + K(\epsilon),
\end{equation}
where $K$ is bounded on $(0,r)$, see Eq~\eqref{eq:derivative}. Now, Theorem~\ref{theorem:perturbation} tells us that the term in the brackets above is nothing but $\operatorname{Tr}(K_\psi \Phi(\sigma)) - \operatorname{Tr}(K^c_\psi \Phi_c(\sigma))$. Hence, if $\operatorname{Tr}(K_\psi \Phi(\sigma)) > \operatorname{Tr}(K^c_\psi \Phi_c(\sigma))$, we can find a small enough $0<\delta<r$ such that the first term above dominates the other when $\epsilon < \delta$ and consequently, $\forall \epsilon\in (0,\delta): I'(\epsilon)>0$. A simple application of the mean value theorem then tells us that $I(\epsilon)$ is strictly increasing on the interval $[0,\delta]$. Moreover, since $I(0)=0$, we obtain $\forall \epsilon\in (0,\delta]: I(\epsilon)>0 \implies \mathcal{Q}^{(1)}(\Phi) > 0$. The other case can be tackled similarly. \hspace{2.7cm} $\square$

\subsection{Matrix subspaces and the proof of Lemma~\ref{lemma:dim-rank-bound}}\label{appen:lemmadim-rank-bound}

\noindent\textit{Proof of Lemma~\ref{lemma:dim-rank-bound}} \hspace{0.2cm} This result was first proven in \cite{Flanders1962rank} for matrix spaces over fields $F$ with cardinality $|F|\geq r+1$, and was later generalized to work for arbitrary fields in \cite{Meshulam1985rank}. We nevertheless provide a simple proof for the case when $F=\mathbb{C}$ is the field of complex numbers. By padding all the matrices in $\mathcal{S}$ with extra zero rows, we can assume that $d_1=d_2=d$. Now, without loss of generality, we can further assume that the following block matrix is contained in $\mathcal{S}$
\begin{equation}
    I = \begin{bmatrix} \iden_r & 0 \\ 0 & 0 \end{bmatrix} \in \mathcal{S},
\end{equation}
where $\iden_r\in \M{r}$ is the identity matrix. If $\mathcal{S}$ doesn't contain such a matrix, then it is easy to find non-singular matrices $P,Q\in\M{d}$ such that the subspace $P\mathcal{S}Q:=\{PXQ : X\in\mathcal{S}\}$, which has the same dimension as $\mathcal{S}$, contains the aforementioned matrix. Now, let
\begin{equation}
    \widetilde{\mathcal{S}}:= \left\{\begin{bmatrix} 0 & B^\dagger \\ B & A \end{bmatrix} \in \M{d} : B\in\M{(d-r)\times r}, A\in \M{d-r} \right\}
\end{equation}
be a matrix subspace with $\dim \widetilde{\mathcal{S}} = r(d-r)+(d-r)^2=d(d-r)$. Then, for $0\neq X\in\mathcal{S}\cap\widetilde{\mathcal{S}}$,
\begin{equation}
    X+cI = \begin{bmatrix} c\iden_r & B^\dagger \\ B & A \end{bmatrix}\in \mathcal{S} \implies \operatorname{rank}(X+cI)\leq r,
\end{equation}
where the implication holds for all $c\in\mathbb{C}$. However, a simple application of the Schur complements of block matrices reveals that $\operatorname{rank}(X+cI)=\operatorname{rank}c\iden_r + \operatorname{rank}(A-\frac{B B^\dagger}{c})>r$, where $0\neq c\in\mathbb{C}$ is such that $A\neq \frac{B^\dagger B}{c}$. This leads to a contradiction, and hence $\mathcal{S}\cap \widetilde{\mathcal{S}}=\{0\}$. Hence, we arrive at the desired conclusion: $\dim (\mathcal{S}+\widetilde{\mathcal{S}}) = \dim\mathcal{S} + \dim \widetilde{\mathcal{S}} \leq d^2 \implies \dim\mathcal{S} \leq d^2 - d(d-r) = dr$. \qed

\subsection{An auxiliary lemma}
\begin{lemma}\label{lemma:3}
Let $\rho\in \St{d}$ and $R_\rho$ denote the orthogonal projection onto $\operatorname{range}\rho$. Then, for an arbitrary channel $\Phi:\M{d}\rightarrow \M{d_{\out}}$, the following equality holds: $\operatorname{range}\Phi(R_\rho)=\operatorname{range}\Phi(\rho)$. 
\end{lemma}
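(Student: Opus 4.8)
The plan is to sandwich $\rho$ between two strictly positive multiples of its support projection and then exploit positivity of $\Phi$. Write the spectral decomposition $\rho = \sum_i \lambda_i \ketbra{v_i}$ with all $\lambda_i > 0$, so that $R_\rho = \sum_i \ketbra{v_i}$ and, with $\lambda_{\min}$ and $\lambda_{\max}$ denoting the smallest and largest eigenvalues occurring in the sum, one has the operator inequalities $\lambda_{\min}\, R_\rho \leq \rho \leq \lambda_{\max}\, R_\rho$ in $\M{d}$. Since $\Phi$ is completely positive (ordinary positivity already suffices) it is monotone with respect to the positive semi-definite order, so applying $\Phi$ gives $\lambda_{\min}\,\Phi(R_\rho) \leq \Phi(\rho) \leq \lambda_{\max}\,\Phi(R_\rho)$ in $\M{d_{\out}}$.

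Next I would recall the elementary fact — already invoked in the proof of Lemma~\ref{lemma:minimal} — that for positive semi-definite $A,B$ the inequality $A \leq B$ forces $\operatorname{range} A \subseteq \operatorname{range} B$: if $Bv = 0$ then $0 \leq \langle v, Av\rangle \leq \langle v, Bv\rangle = 0$, hence $Av = 0$ by positivity of $A$, so $\ker B \subseteq \ker A$, and taking orthogonal complements (legitimate since $A,B$ are Hermitian) yields the claimed range inclusion. Applying this to $0 \leq \Phi(\rho) \leq \lambda_{\max}\,\Phi(R_\rho)$ gives $\operatorname{range}\Phi(\rho) \subseteq \operatorname{range}\Phi(R_\rho)$, while applying it to $0 \leq \lambda_{\min}\,\Phi(R_\rho) \leq \Phi(\rho)$, i.e. $\Phi(R_\rho) \leq \lambda_{\min}^{-1}\,\Phi(\rho)$, gives the reverse inclusion $\operatorname{range}\Phi(R_\rho) \subseteq \operatorname{range}\Phi(\rho)$. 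Combining the two inclusions yields the desired equality.

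There is essentially no serious obstacle here; the only point requiring a moment's care is the degenerate case $\rho = 0$, which is excluded since $\rho \in \St{d}$ has unit trace, so $\lambda_{\min} > 0$ and all constants above are genuinely positive. It is also worth noting that the argument uses neither trace preservation nor complete positivity of $\Phi$ beyond ordinary positivity, so the statement in fact holds verbatim for any positive linear map; this is what makes it directly applicable to the map $\mathcal{N}$ appearing in the proof of Lemma~\ref{lemma:anti-degaradable}.
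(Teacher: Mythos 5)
Your proof is correct, but it follows a slightly different route from the paper's. The paper spectrally decomposes $\rho=\sum_i\lambda_i P_i$ into its distinct eigenprojections and then compares kernels of sums directly, using the identity $\ker\sum_i A_i=\bigcap_i\ker A_i$ for positive semi-definite summands, which immediately gives $\ker\Phi(R_\rho)=\bigcap_i\ker\Phi(P_i)=\ker\Phi(\rho)$. You instead sandwich $\lambda_{\min}R_\rho\leq\rho\leq\lambda_{\max}R_\rho$, push the inequalities through $\Phi$ by positivity, and invoke the monotonicity of ranges under the positive semi-definite order. Both arguments ultimately rest on the same elementary fact (for $A\geq 0$, $\langle v,Av\rangle=0$ forces $Av=0$), and both use only positivity of $\Phi$ rather than complete positivity or trace preservation, so neither is more general than the other; your version packages the comparison into two operator inequalities rather than a term-by-term kernel intersection, which is arguably a little slicker, while the paper's version avoids introducing the extremal eigenvalues. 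Your handling of the degenerate case and the explicit justification of the range-inclusion lemma are both fine.
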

\begin{proof}
Spectrally decompose $\rho=\sum_{i=1}^n \lambda_i P_i$, where $\lambda_i>0$ are the distinct non-zero eigenvalues of $\rho$ and $P_i$ are the orthogonal projections onto the corresponding eigenspaces, so that $\Phi(R_\rho)=\sum_i \Phi(P_i)$ and $\Phi(\rho)=\sum_i \lambda_i \Phi(P_i)$. Now, the conclusion follows by noting that
\begin{align}
     \ker\sum_i \Phi(P_i) = \bigcap_i \ker\Phi(P_i) = \bigcap_i \ker\lambda_i \Phi(P_i) &= \ker \sum_i \lambda_i \Phi(P_i).  
\end{align}
\end{proof}

\subsection{(C)DUC channels and the Proof of Proposition~\ref{prop:CDUC-rank}} \label{appen:CDUC}
Before proving the proposition, it is essential to introduce some notation. Given a $d_1\times d_2$ matrix $X$ and $I\subseteq [d_1], J\subseteq [d_2]$, the $|I|\times |J|$ {submatrix} containing the rows and columns of $X$ corresponding to the indices $i\in I$ and $j\in J$, respectively, is denoted by $X[I|J]$. The basis vectors $\{\ket{i}\}_{i\in I}$ and $\{\ket{j}\}_{j\in J}$ are then thought to lie in $\C{|I|}$ and $\C{|J|}$, respectively, so that $X[I|J]=\sum_{i\in I, j\in J}X_{ij}\ketbra{i}{j}$. If $I=[d_1]$ (or $J=[d_2]$), the associated submatrix is denoted by $X[d_1|J]$ (or $X[I|d_2]$). If $I=J$, we use $X[I]:=X[I|I]$ to denote the associated $|I|\times |I|$ {principal} submatrix. Let us also recall the definition of the determinant function $\det :\M{d}\to \mathbb{C}$:
\begin{equation}\label{eq:det}
    \det (X) = \sum_{\pi\in S_d} \rm{sgn}(\pi) X_{\pi},
\end{equation}
where $S_d$ denotes the permutation group of the set $[d]$ and $X_\pi = \prod_{i\in [d]}X_{i\pi_i}$.

Finally, let us prove a lemma regarding an intriguing matrix determinant.

\begin{lemma}\label{lemma:det}
Let $X\in\M{d}$ $(d\geq 2)$ be such that it has exactly two non-zero entries in each row, i.e., $\forall i\in[d]$, there exist unique $i_1,i_2\in[d]$ such that $X_{ij}\neq 0 \iff j\in\{i_1,i_2\}$. For $\ket{\psi}\in\C{d}$, define
\begin{align}
    \psi(X)_{ij} = \begin{cases}
    0 &\text{ if } j\notin \{i_1,i_2\} \\
    X_{ii_1}\psi_{i_2}  &\text{ if } j=i_1 \\
    X_{ii_2}\psi_{i_1} &\text{ if } j=i_2
    \end{cases}
\end{align}
Then, 
\begin{equation}
\det \psi(X) = (\prod_{i\in [d]} \psi_i^{n_i - 1}) \det X,
\end{equation}
where $n_i$ denotes the number of non-zero entries in the $i^{th}$ column of $X$. 
\end{lemma}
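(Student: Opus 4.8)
\section*{Proof proposal for Lemma~\ref{lemma:det}}

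The plan is to expand both sides through the Leibniz formula and match the two expansions permutation by permutation. Writing $\det \psi(X) = \sum_{\pi \in S_d} \operatorname{sgn}(\pi) \prod_{i\in[d]} \psi(X)_{i\pi_i}$, observe that $\psi(X)_{ij}$ can be non-zero only when $j \in \{i_1,i_2\}$, which is exactly the support pattern of $X$. Consequently the only permutations that contribute a non-vanishing term to $\det\psi(X)$ are the \emph{admissible} ones, namely those with $\pi_i \in \{i_1,i_2\}$ for every $i$; these are precisely the permutations contributing to $\det X$. (For a non-admissible $\pi$ both $\prod_i X_{i\pi_i}$ and $\prod_i \psi(X)_{i\pi_i}$ contain a zero factor.) One may assume at the outset that every column of $X$ is non-zero, since otherwise $\det X = \det\psi(X) = 0$ and there is nothing to prove.

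Next I would rewrite each factor for a fixed admissible $\pi$. Let $\bar\pi_i$ denote the unique index in $\{i_1,i_2\}\setminus\{\pi_i\}$. Directly from the definition of $\psi(X)$, if $\pi_i=i_1$ then $\psi(X)_{i\pi_i}=X_{ii_1}\psi_{i_2}=X_{i\pi_i}\psi_{\bar\pi_i}$, and symmetrically if $\pi_i=i_2$; hence $\psi(X)_{i\pi_i}=X_{i\pi_i}\,\psi_{\bar\pi_i}$ in all cases. Taking the product over the rows gives
\[
\prod_{i\in[d]} \psi(X)_{i\pi_i} \;=\; \Big(\prod_{i\in[d]} X_{i\pi_i}\Big)\Big(\prod_{i\in[d]} \psi_{\bar\pi_i}\Big).
\]

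The crux is to show that the second factor $\prod_{i}\psi_{\bar\pi_i}$ equals $\prod_{j\in[d]}\psi_j^{\,n_j-1}$, and in particular does not depend on the admissible permutation $\pi$. For a fixed column index $j$, the exponent with which $\psi_j$ occurs in $\prod_i\psi_{\bar\pi_i}$ is the number of rows $i$ with $\bar\pi_i=j$, i.e.\ rows supported on column $j$ with $\pi_i\neq j$. There are exactly $n_j$ rows supported on column $j$; since $\pi$ is a bijection there is exactly one row with $\pi_i=j$, and admissibility forces that row to be among these $n_j$ (because $\pi_i=j$ implies $j\in\{i_1,i_2\}$). Hence exactly $n_j-1$ of the rows supported on $j$ satisfy $\bar\pi_i=j$, giving the claimed exponent. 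Summing over admissible $\pi$ then yields $\det\psi(X)=\big(\prod_j\psi_j^{\,n_j-1}\big)\sum_{\pi}\operatorname{sgn}(\pi)\prod_iX_{i\pi_i}=\big(\prod_j\psi_j^{\,n_j-1}\big)\det X$.

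I expect the only genuinely delicate point is the bijectivity step in the counting argument — specifically, that the unique row with $\pi_i=j$ automatically lies among the rows supported on column $j$, which is exactly what admissibility buys us; everything else is routine bookkeeping. As a sanity check one may note $\sum_j(n_j-1)=2d-d=d$, matching the number of factors in $\prod_i\psi_{\bar\pi_i}$.
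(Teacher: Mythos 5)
Your proposal is correct and follows essentially the same route as the paper's proof: both expand $\det\psi(X)$ via the Leibniz formula, note that only permutations with $\pi_i\in\{i_1,i_2\}$ contribute, and show that the exponent of $\psi_j$ in $\psi(X)_\pi/X_\pi$ equals $n_j-1$ by counting the rows supported on column $j$ whose image under $\pi$ is not $j$ (the paper phrases this count as a proof by contradiction, you do it directly, but the substance is identical). No gaps.
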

\begin{proof}
Let us begin by observing that for every $\pi\in S_d$, we must have $\psi(X)_\pi = (\prod_{i\in [d]}\psi_i^{k_i(\pi)}) X_\pi$, where $\{k_i(\pi)\}_{i\in [d]}$ is a set of positive integers. This is because the matrix $\psi(X)$ has the same entries as $X$ upto multiplication by the entries of $\ket{\psi}$. Hence, our task is to show that 
\begin{equation}
\forall i\in [d], \forall \pi\in S_d: \quad k_i(\pi) = n_i-1,    
\end{equation}
where $n_i$ denotes the number of non-zero entries in the $i^{th}$ column. Firstly, note that if $\pi$ is such that $X_\pi=0$, then $\psi(X)_\pi=0$ as well, so there's nothing to prove here. Hence, let $\pi\in S_d$ be an arbitrary permutation with $X_\pi\neq 0$. Now, suppose on the contrary that the desired claim does not hold and there exists $j\in [d]$ such that $k_j(\pi)\neq n_j-1$. Let us collect all the rows $i\in [d]$ with $X_{ij}\neq 0$ in the set $R_j$, so that $n_j = |R_j|$. Then, for each row $i\in R_j$, one of the two non-zero entries is present in the column $i_1 = j$. Since $X_\pi\neq 0$, it must be the case that $\pi_i = i_2$ for all $ i\in R_j$ such that $\pi_i\neq j=i_1$, which implies that 
\begin{equation}
    \prod_{ \substack{i\in R_j \\  \pi_i\neq j}} \psi(X)_{i\pi_i} = \prod_{ \substack{i\in R_j \\  \pi_i\neq j}} X_{ii_2}\psi_{i_1} = \psi_j^{n_j -1} \prod_{ \substack{i\in R_j \\  \pi_i\neq j}} X_{i\pi_i}.
\end{equation}
Since $\psi_j$ cannot arise from any other term in the product $\prod_{i\in [d]}\psi(X)_{i\pi_i}$, we arrive at a contradiction.
\end{proof}

\noindent\textit{Proof of Proposition~\ref{prop:CDUC-rank}} \hspace{0.2cm} Recall that in order to prove the proposition, it suffices to show that 
\begin{equation}
    \Phi(\ketbra{e})[I] \text{ is singular} \implies \Phi(\ketbra{\psi})[I] \text{ is singular for all } \ketbra{\psi}\in\St{d},
\end{equation}
where $I\subseteq [d]$ is an arbitrary index set of size $|I|=r$. \\

\noindent\textbf{Case 1.} $\Phi=\Phi_{A,B}$ is a CDUC channel with $A\succcurlyeq 0$ and $B\geq 0$. \\
Let us express $\Phi_{A,B}(\ketbra{e}{e})[I]=\sum_{i\in I}\sum_{k\neq i} A_{ik}\ketbra{i} + B[I]$ as a sum of two $r\times r$ positive semi-definite matrices, so that $\Phi_{A,B}(\ketbra{e})[I]$ is singular if and only if the two matrices in the above sum have a non-trivial common kernel. Since $\ker\sum_{i\in I}\sum_{k\neq i} A_{ik}\ketbra{i} = \operatorname{span}\{\ket{j}\}_{j\in I\setminus J}\subseteq \C{r}$ with $J=\{j\in I \, : \, \exists k\in [d], k\neq j, A_{jk}\neq 0\ \}$, we obtain
\begin{equation}
    \Phi_{A,B}(\ketbra{e})[I] \, \text{ is singular} \iff \{\ket{B[I]_j} \}_{j\in I\setminus J}\subseteq \C{r}\, \text{ is linearly dependent},
\end{equation}
where $\ket{B[I]_j}:=B[I]\ket{j}$ is the $j^{th}$ column of $B[I]$. Now, for $\ket{\psi}\in\C{d}$, we have
\begin{equation}\label{eq:ABpsi}
    \Phi_{A,B}(\ketbra{\psi})[I] = \sum_{i\in I}\sum_{k\neq i} A_{ik}|\psi_k|^2\ketbra{i} + (B\odot \ketbra{\psi})[I].
\end{equation}
Note that $\operatorname{span}\{\ket{j}\}_{j\in I\setminus J}\subseteq \ker\sum_{i\in I}\sum_{k\neq i} A_{ik}|\psi_k|^2\ketbra{i}$ and equality holds whenever $\psi_i\neq 0$ for all $i\in [d]$. Clearly, since the $j^{th}$ column of $(B\odot \ketbra{\psi})[I]$ is nothing but $\overbar{\psi_j}\ket{B[I]_j} \odot \ket{\psi[I]}$, the column set $\{\ket{(B\odot \ketbra{\psi})[I]_j} \}_{j\in I\setminus J}\subseteq \C{r}$ is linearly dependent as well, so that the two matrices in the sum in Eq.~\eqref{eq:ABpsi} again have a non-trivial common kernel. Hence, $\Phi_{A,B}(\ketbra{\psi})[I]$ is singular. \\

\noindent\textbf{Case 2.} $\Phi=\Psi_{A,B}$ is a DUC channel with $A\succcurlyeq 0$, $B=B^\dagger$ and $A_{ij}A_{ji}\geq |B_{ij}|^2$. \\
In this case, since $B$ is not positive semi-definite, the previous approach would not work. Even so, let us still try to express $\Psi_{A,B}(\ketbra{e}{e})[I]$ as a sum of positive semi-definite matrices:
\begin{align}
    \Psi_{A,B}(\ketbra{e}{e})[I] &= \sum_{i\in I} \left( A_{ii} + \sum_{\substack{j\neq i \\ j\notin I}} A_{ij}\right) \ketbra{i} + \sum_{\substack{i<j \\ i,j\in I}}
    \big(A_{ij}\ketbra{i} + B_{ij}\ketbra{i}{j} + B_{ji}\ketbra{j}{i} + A_{ji}\ketbra{j}\big) \nonumber \\
&= \sum_{i\in I} \ketbra{\lambda_i} + \sum_{\substack{i<j \\ i,j\in I}} \ketbra{\lambda^1_{ij}} + \ketbra{\lambda^2_{ij}} \nonumber \\
&= \Lambda^\top \overbar{\Lambda},
\end{align}
where the vectors $\ket{\lambda_i}\in \operatorname{span}\{\ket{i}\}\subseteq \C{r}$ (for $i\in I$) and $\ket{\lambda^k_{ij}}\in \operatorname{span}\{\ket{i},\ket{j}\}\subseteq \C{r}$ (for $i<j$, $i,j\in I$, and $k\in\{1,2\}$) form the rank one decompositions of the corresponding positive semi-definite matrices from the previous sum, and $\Lambda\in \M{n\times r}$ stores these vectors in its rows. Notice that the conditions $A\succcurlyeq 0$, $B=B^\dagger$ and $A_{ij}A_{ji}\geq |B_{ij}|^2$ ensure that all the matrices in the two sums above are indeed positive semi-definite. Similar decomposition can be obtained for an arbitrary pure input state $\ketbra{\psi}\in\St{d}$:
\begin{alignat}{2}
    \Psi_{A,B}(\ketbra{\psi})[I] &= \sum_{i\in I} \left( A_{ii}|\psi_i|^2 + \sum_{\substack{j\neq i \\ j\notin I}} A_{ij}|\psi_j|^2\right) \ketbra{i} + \sum_{\substack{i<j \\ i,j\in I}} 
    \Big(&&\hspace{-2.6cm}A_{ij}|\psi_j|^2\ketbra{i}+B_{ij}\psi_j \overbar{\psi_i}\ketbra{i}{j} \nonumber \\[-0.7cm]
    & &&\hspace{-2cm}+B_{ji}\psi_i \overbar{\psi_j}\ketbra{j}{i} + A_{ji}|\psi_i|^2\ketbra{j}\Big ) \nonumber \\
&= \sum_{i\in I} \ketbra{\psi(\lambda)_i} + \sum_{\substack{i<j \\ i,j\in I }} \ketbra{\psi(\lambda)^1_{ij}} + \ketbra{\psi(\lambda)^2_{ij}} \nonumber \\
&= \Lambda_\psi^\top \overbar{\Lambda_\psi},
\end{alignat}
where the vectors $\ket{\psi(\lambda)_i}\in \operatorname{span}\{\ket{i}\}\subseteq \C{r}$ (for $i\in I$), $\ket{\psi(\lambda)^k_{ij}}\in \operatorname{span}\{\ket{i},\ket{j}\}\subseteq \C{r}$ (for $i<j$, $i,j\in I$, $k\in \{1,2\})$ and the matrix $\Lambda_\psi\in\M{n\times r}$ are defined as before. Notice that for all $i,j,k$:
\begin{equation}\label{eq:1}
    \ket{\psi(\lambda)^k_{ij}} = \ket{\lambda^k_{ij}} \odot (\psi_j\ket{i}+\psi_i \ket{j}),
\end{equation}
where $\odot$ denotes the entrywise vector product with respect to the standard basis in $\C{r}$. Now, if $n<r$, we have $\operatorname{rank}\Psi_{A,B}(\ketbra{e}{e})[I]=\operatorname{rank}\Lambda <r$ and $\operatorname{rank}\Psi_{A,B}(\ketbra{\psi})[I]=\operatorname{rank}\Lambda_\psi<r$, so that both $\Psi_{A,B}(\ketbra{e}{e})[I]$ and $\Psi_{A,B}(\ketbra{\psi})[I]$ are always singular. Otherwise, if $n\geq r$, we need to show that for $r\times r$ submatrices $\Lambda [J|r]:=\Lambda^{(r)}$ and $\Lambda_\psi[J|r]:=\Lambda^{(r)}_{\psi}$ constructed by arbitrarily choosing $r$ rows according to a random index set $J\subseteq [n]$ with $|J|=r$, 
\begin{equation}
    \det \Lambda^{(r)}=0 \implies \det \Lambda^{(r)}_{\psi} = 0.
\end{equation}
If $\det \Lambda^{(r)}=0$ because $\forall\pi\in S_r: \Lambda^{(r)}_\pi=0$, then since $\Lambda^{(r)}_{ij}=0\implies [\Lambda^{(r)}_{\psi}]_{ij}=0$ for all $i,j$, the desired claim is trivial to prove. Otherwise, there must exist at least two distinct permutations $\pi_1, \pi_2\in S_r$ such that $\Lambda^{(r)}_{\pi_1}$ and $\Lambda^{(r)}_{\pi_2}$ are non-zero. In this case, after suitable elementary row and column operations, $\Lambda^{(r)}$ and $\Lambda^{(r)}_{\psi}$ can be brought, respectively, into the block diagonal forms
\begin{equation}
    \left(
\begin{array}{ c c }
   X & 0  \\
   0 & Y
\end{array}
\right) \quad\text{and}\quad \left(
\begin{array}{ c c }
   X_{\psi} & 0  \\
   0 & Y_\psi
\end{array}
\right),
\end{equation}
where $X, X_\psi\in \M{k}$ are such that $\det X = X_\pi\neq 0$ and  $\det X_\psi= [X_\psi]_\pi$ for a unique $\pi\in S_k$ and $Y, Y_\psi = \widetilde{\psi}(Y)\in \M{r-k}$ are of the form described in Lemma~\ref{lemma:det}, see Eq.~\eqref{eq:1}. In the notation of Lemma~\ref{lemma:det}, the vector $\widetilde{\psi}\in \mathbb{C}^{r-k}$ which implements the transformation $Y\mapsto \widetilde{\psi}(Y)$ can be obtained by choosing some $r-k$ entries from $\psi\in \C{d}$. Its exact form is irrelevant for our purposes. When written in the above form, it should be evident that $\det \Lambda^{(r)}=0 \iff\det Y=0$, in which case Lemma~\ref{lemma:det} informs us that $\det Y_\psi = \det \widetilde{\psi}(Y)=0 \implies \det \Lambda^{(r)}_{\psi}=0$, and the proof is complete. \qed

\vspace{1cm}



\noindent\textbf{Acknowledgements.} The authors acknowledge support from the Cantab Capital Institute for the Mathematics of Information (CCIMI) and the Department of Applied Mathematics and Theoretical Physics (DAMTP), University of Cambridge.



\bigskip

\bibliography{references}
\bibliographystyle{alpha}

\vspace{0.5cm}
\hrule 
\vspace{0.5cm}

\end{document}